\pdfoutput=1
\documentclass[11pt,a4paper]{article}
\usepackage[margin=1.1in]{geometry} %
\usepackage[utf8]{inputenc}
\usepackage{mathtools} %
\usepackage{amssymb}

\usepackage{microtype}
\usepackage{enumerate}
\usepackage{accents} %
\usepackage{graphicx}
\usepackage{authblk}
\usepackage[table,dvipsnames]{xcolor}
\usepackage{tikz}
\usepackage{booktabs}
\usepackage{array}

\usepackage{float}
\usepackage[colorlinks, pagebackref=true]{hyperref}
\usepackage{xurl}
\hypersetup{
  linkcolor=[rgb]{0.3,0.3,0.6},
  citecolor=[rgb]{0.2, 0.6, 0.2},
  urlcolor=[rgb]{0.6, 0.2, 0.2}
}
\hypersetup{breaklinks=true}

\allowdisplaybreaks[1]

\usepackage{amsthm}
\usepackage{thmtools, thm-restate}
\usepackage[capitalize, nameinlink]{cleveref}

\declaretheorem[name=Theorem, parent=section]{theorem}
\declaretheorem[name=Corollary, sibling=theorem]{corollary}
\declaretheorem[name=Proposition, sibling=theorem]{proposition}
\declaretheorem[name=Lemma, sibling=theorem]{lemma}

\declaretheorem[name=Definition, sibling=theorem, style=definition]{definition}
\declaretheorem[name=Remark, sibling=theorem, style=definition]{remark}

\declaretheorem[name=Example, sibling=theorem, style=definition]{example}

\definecolor{reviewcolor}{HTML}{8E24AA}

\numberwithin{equation}{section}

\crefname{theorem}{Theorem}{Theorems}
\crefname{lemma}{Lemma}{Lemmas}
\crefname{proposition}{Proposition}{Propositions}
\crefname{corollary}{Corollary}{Corollaries}
\crefname{conjecture}{Conjecture}{Conjectures}
\crefname{observation}{Observation}{Observations}
\crefname{fact}{Fact}{Facts}
\crefname{definition}{Definition}{Definitions}
\crefname{figure}{Figure}{Figures}
\crefname{table}{Table}{Tables}
\crefname{equation}{Equation}{Equations}

\crefname{remark}{Remark}{Remarks}
\crefname{problem}{Problem}{Problems}
\crefname{example}{Example}{Examples}
\crefname{question}{Question}{Questions}
\crefname{claim}{Claim}{Claims}
\crefname{section}{Section}{Sections}
\crefname{subsection}{Section}{Sections}
\crefname{subsubsection}{Section}{Sections}

\DeclareMathAccent{\wtilde}{\mathord}{largesymbols}{"65}

\DeclareMathOperator{\tensorrank}{R}

\newcommand{\FF}{\mathbb{F}}

\newcommand{\RR}{\mathbb{R}}
\newcommand{\CC}{\mathbb{C}}

\newcommand{\ZZ}{\mathbb{Z}}

\newcommand{\Deltam}{\Delta_\FF^{\!\mathsf{m}}}
\newcommand{\eps}{\varepsilon}

\def\<#1>{\left\langle\ignorespaces#1\unskip\right\rangle}

\newcommand{\asympleq}{\lesssim}

\usepackage{scalerel}

\DeclareMathOperator{\SL}{SL}

\DeclareMathOperator{\linspan}{span}

\DeclareMathOperator{\Sym}{Sym}

\DeclareMathOperator{\conv}{conv}

\DeclarePairedDelimiterX\braket[2]{\langle}{\rangle}{#1\,\delimsize\vert\,\mathopen{}#2}

\DeclareMathOperator{\R}{R}
\newcommand{\flatrank}[1]{\R_{(#1)}}
\newcommand{\degen}{\mathrel{\trianglelefteq}}
\newcommand{\degengeq}{\mathrel{\trianglerighteq}}
\newcommand{\degenTT}{\mathrel{\trianglelefteq_1}}
\newcommand{\degengeqTT}{\mathrel{\trianglerighteq_1}}
\DeclareRobustCommand{\ruleref}[1]{\hyperref[rule:#1]{\textup{(#1)}}}
\DeclareRobustCommand{\fruleref}[1]{\hyperref[frule:#1]{\textup{(#1)}}}
\newcommand{\oplusX}{\mathbin{\oplus_1}}
\newcommand{\polmul}{\mathsf{P}}

\definecolor{psumA}{HTML}{2A78D6}\definecolor{psumB}{HTML}{EB6834}\definecolor{psumC}{HTML}{1BAF7A}
\newcommand\psumbox[7]{%
\filldraw[draw=#1!75!black,fill=#1!55] (#5,#3,#4)--(#5,#6,#4)--(#5,#6,#7)--(#5,#3,#7)--cycle;
\filldraw[draw=#1!75!black,fill=#1!22] (#2,#3,#4)--(#5,#3,#4)--(#5,#3,#7)--(#2,#3,#7)--cycle;
\filldraw[draw=#1!75!black,fill=#1!38] (#2,#3,#4)--(#5,#3,#4)--(#5,#6,#4)--(#2,#6,#4)--cycle;}
\newcommand\psumframeback[3]{%
\draw[gray!70,dashed] (0,0,#3)--(0,#2,#3)--(#1,#2,#3) (0,#2,#3)--(0,#2,0);}
\newcommand\psumframefront[3]{%
\draw[gray!70] (0,0,0)--(#1,0,0)--(#1,#2,0)--(0,#2,0)--cycle (0,0,0)--(0,0,#3)--(#1,0,#3)--(#1,0,0) (#1,0,#3)--(#1,#2,#3)--(#1,#2,0);}

\title{Multiplying matrices via polynomials:\\ exposition and barriers}

\author{Alimzhan~Amanov}
\author{Maxim~van~den~Berg}
\author{Jeroen~Zuiddam}
\affil{University of Amsterdam}

\begin{document}
\maketitle

\begin{abstract}
We give an exposition of OpenAI's recent proof that 
the exponent~$\omega_\CC$ of matrix multiplication over the complex numbers satisfies $\omega_\CC\leq9/4$. %
Our version of the proof gives $\omega_\FF\leq9/4$  over every field $\FF$. 
We also treat 
the rectangular matrix multiplication exponent $\omega_\FF(a,b,c)$. Here we have %
$\omega_\FF(1,k,1)=2$ for $0\leq k\leq1/2$ and $\omega_\FF(1,k,1)\leq1+k+1/(4k)$ for $k\geq1/2$, 
and, for real numbers $a\geq b\geq c\geq0$, $\omega_\FF(a,b,c)=a+b$ if $c(a+b)\leq ab$ and $\omega_\FF(a,b,c)\leq(a+b+c)/2+(ab/c+ac/b+bc/a)/4$ otherwise.
Our exposition is in the language of degeneration inequalities among polynomial multiplication tensors and Strassen's asymptotic spectrum duality.
We moreover prove an operational barrier (which can be thought of as coming from an operational asymptotic spectrum) showing that the current analysis of all these upper bounds on $\omega_\FF$ and $\omega_\FF(a,b,c)$ is optimal and that in fact the current degeneration ingredients cannot improve the bounds.

\end{abstract}

\tableofcontents

\section{Introduction}\label{sec:intro}

The exponent~$\omega_\FF$ of matrix multiplication over a field~$\FF$ is the infimum of all real numbers~$\tau$ such that the tensor rank of the matrix multiplication tensor~$\langle n,n,n\rangle$ over~$\FF$ is $\mathcal{O}(n^\tau)$. %
OpenAI recently released a proof that $\omega_\CC\leq9/4$~\cite[Theorem~1.1]{OAI:Matrix-Multiplication-Nine-Fourths-October-2-2026}, significantly improving over the previous upper bounds and using new ideas. %
We give an exposition of their proof in the language of degeneration inequalities and Strassen's asymptotic spectrum of tensors~\cite[Section~3]{strassen1988asymptotic} (see also the survey of Wigderson and Zuiddam~\cite{wigderson2022asymptotic}). Our version of the proof works over every field.

\begin{theorem}\label{thm:main}
For every field~$\FF$, $\omega_\FF\leq9/4$.
\end{theorem}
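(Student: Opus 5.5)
The plan is to derive \cref{thm:main} from a general upper bound on the asymptotic rank of rectangular matrix multiplication tensors. That bound is obtained from degenerations of polynomial multiplication tensors, and the final step is an optimization of parameters. We work throughout in Strassen's framework. The quantity $\omega_\FF$ equals $\log$ of the asymptotic rank of $\langle 2,2,2\rangle$, and asymptotic rank is monotone under degeneration $\degen$, multiplicative under $\otimes$, and additive under $\oplus$. It therefore suffices to exhibit, for every $\eps>0$, a degeneration $\langle N,N,N\rangle \degen T_N$ where the asymptotic rank of $T_N$ is $\mathcal{O}(N^{9/4+\eps})$.

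\textbf{Step 1: base tensor.} Let $\polmul_n$ denote the tensor of multiplication of polynomials of degree $<n$.
\begin{itemize}
\item Its border rank over a field with at least $2n-1$ elements is $2n-1$.
\item Over an arbitrary field $\FF$, its asymptotic rank is still $\mathcal{O}(n^{1+\eps})$. This follows by passing to a finite extension of $\FF$ and descending, at the cost of a factor depending only on the degree of the extension; alternatively one can use Kronecker/Chudnovsky-type bounds, or the asymptotic nature of the spectrum.
\end{itemize}
This is the only field-dependent input, and it is where the restriction to $\CC$ in the original argument is removed.

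\textbf{Step 2: the key degeneration.} For $a\ge b\ge c$ with $c(a+b)\le ab$, I would prove that $\langle n^a,n^b,n^c\rangle$ is a degeneration of a direct sum of roughly $n^{a+b}/m$ copies of $\polmul_m$, with suitable $m$. The idea is to encode the index of the short dimension $c$ as polynomial degrees. One writes the matrix entries as coefficients of polynomials in one variable, so that the desired products appear as coefficients of polynomial products. The unwanted cross terms are pushed to higher powers of the degeneration parameter, or into coefficient positions that are then zeroed out by a restriction. The condition $c(a+b)\le ab$ is exactly the condition under which these cross terms can be separated without collisions.

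Combining with Step 1 gives $\omega_\FF(a,b,c)=a+b$ in this regime. The same argument gives $\omega_\FF(1,k,1)=2$ for $k\le1/2$, and the lower bound $a+b$ is trivial from flattening ranks.

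\textbf{Step 3: the unbalanced regime.} Outside this regime, I would split $\langle n^a,n^b,n^c\rangle$ as a tensor product $\langle n^{a_1},n^{b_1},n^{c_1}\rangle\otimes\langle n^{a_2},n^{b_2},n^{c_2}\rangle$, or as a sum over blocks. The pieces are chosen, possibly using cyclic permutations of the three factors, so that each piece lies in the good regime of Step 2 after symmetrization. Then the resulting exponent is optimized over the splitting. I expect the optimum to be
\[
\frac{a+b+c}{2}+\frac14\left(\frac{ab}{c}+\frac{ac}{b}+\frac{bc}{a}\right),
\]
and specializing to $a=b=c=1$ gives $3/2+3/4=9/4$, which is \cref{thm:main}.

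\textbf{Expected obstacle.} The main difficulty is Step 2. One must design the polynomial encoding so that the degeneration is genuinely collision-free, and the count of polynomial multiplication tensors must meet the flattening bound exactly. The optimization in Step 3 is routine once Step 2 is available with its sharp parameter range. If Step 2 can only be achieved for a single shape, such as $\langle n,n^{1/2},n\rangle$, I would instead derive the square case directly. That route would combine $\langle n,n^{1/2},n\rangle$ with its cyclic permutations and a blocking of the inner dimension, tuning the block sizes to reach $9/4$.
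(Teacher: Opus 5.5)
There is a genuine gap, in fact two. Step~2 has no content beyond the result itself, and Step~3 cannot reach $9/4$ even if Step~2 is granted.

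\textbf{Step 2.} Since $\langle Km\rangle\degen\bigoplus_{K}\polmul_{m,m}\degen\langle K(2m-1)\rangle$ (\cref{lem:unit-polmul} and \cref{prop:polmul-rank}), a degeneration $\langle n^a,n^b,n^c\rangle\degen\bigoplus_K\polmul_{m,m}$ with $Km\approx n^{a+b}$ is just the border rank bound $n^{a+b+o(1)}$ for $\langle n^a,n^b,n^c\rangle$.

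\begin{itemize}
\item The polynomial structure plays no role. For $(a,b,c)=(1,1,\tfrac12)$ the claim is exactly the new statement $\alpha_\FF\geq\tfrac12$.
\item Your sketch (indices encoded as degrees, cross terms pushed to higher powers of $\eps$) specifies no maps. The paper does not proceed through any direct encoding of this kind.
\end{itemize}

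What is missing are the actual mechanisms:
\begin{itemize}
\item partial direct sums, e.g.\ $\polmul_{a,a}\otimes\langle1,1,n\rangle\cong\polmul_{a,a}\oplusX\cdots\oplusX\polmul_{a,a}$;
\item the concavity and tripling degenerations \ruleref{D3} and \ruleref{D4}, acting on summands of such sums;
\item the separation degeneration \ruleref{D2}, in which $\polmul_{m,t}$ is a partially consumed catalyst and which only pays off asymptotically.
\end{itemize}

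The paper turns these into inequalities valid for every spectral point $\phi$: superadditivity of $\phi^{1/p}$ on $\oplusX$ (\cref{lem:superadd-general}), and concavity and tripling for $b\mapsto\phi(\polmul_{a,b})^{1/p}$ (\cref{lem:profiles}). From these it deduces $\prod_{\pi}\phi(\pi\cdot\polmul_{n,n})\geq n^{8\tau/3}$ (\cref{prop:angular-square}). It compares this with the border rank bound $(2n-1)^6$ and concludes by Strassen's duality (\cref{thm:spectral}).

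Also, Step~1 is not the only field-sensitive point. The paper additionally replaces the $\SL_2$ proof of concavity by a subspace/quotient argument (\cref{lem:det-degen}).

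\textbf{Step 3.} Tensor-product splittings, permutations of the legs and blocking a dimension only use subadditivity, homogeneity and permutation invariance of $\omega_\FF(\cdot,\cdot,\cdot)$. So the best bound they give at $(1,1,1)$ from good-regime points $v_i=(a_i,b_i,c_i)$ is $\sum_i\lambda_i(a_i+b_i)$, with $\lambda_i\geq0$ and $\sum_i\lambda_iv_i=(1,1,1)$.

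\begin{itemize}
\item Summing coordinates gives $\sum_i\lambda_i(a_i+b_i+c_i)=3$. So this bound is at least $3\min(a+b)/(a+b+c)$ over $a\geq b\geq c$ with $c(a+b)\leq ab$.
\item Since $c\leq ab/(a+b)\leq(a+b)/4$, that minimum is $12/5$, attained at $(1,1,\tfrac12)$.
\item Equivalently, $\alpha_\FF\geq\tfrac12$ only yields $\omega_\FF\leq 6/(2+\alpha_\FF)=12/5$.
\item Your fallback through $\langle n,n^{1/2},n\rangle$ gives the same $12/5$ via cyclic permutations, or $5/2$ by blocking the inner dimension.
\end{itemize}

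The target bound $(ab+bc+ca)^2/(4abc)$ is not in this convex envelope: at $(1,1,1)$ it equals $9/4<12/5$. It is the maximum of $aw_2+bw_3+cw_1$ over $w\in[0,1]^3$ subject to the \emph{quadratic} constraint $4(w_1x_2x_3+w_2x_1x_3+w_3x_1x_2)\leq(x_1+x_2+x_3)^2$ of \cref{prop:angular}. In the paper, the equality $\omega_\FF(a,b,c)=a+b$ in the good regime is itself a consequence of that constraint (\cref{lem:rect-bound}). It is not an input from which the unbalanced regime is interpolated. So the optimisation in Step~3 is not routine; within your framework it cannot produce $9/4$.
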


The previous upper bound was $\omega_\FF<2.371339$~\cite{almanMoreAsymmetryYields2025}, from the laser method.

We also discuss rectangular matrix multiplication exponents. For integers $n_1,n_2,n_3\geq1$, the matrix multiplication tensor~$\langle n_1,n_2,n_3\rangle$ describes the multiplication of an $n_1\times n_2$ matrix by an $n_2\times n_3$ matrix.
For real numbers $a,b,c\geq0$, let $\omega_\FF(a,b,c)$ be the infimum of the real numbers~$\gamma$ such that the tensor rank of $\langle\lceil n^a\rceil,\lceil n^b\rceil,\lceil n^c\rceil\rangle$ over~$\FF$ is $\mathcal{O}(n^\gamma)$. In particular, $\omega_\FF=\omega_\FF(1,1,1)$.
Following Uzna\'nski~\cite{uznanski2026rectangularmatrixmultiplicationpolynomial}, the same methods give the following bounds.

\begin{restatable}{theorem}{thmrect}\label{thm:rect}
For every field~$\FF$ and all real numbers $a\geq b\geq c\geq0$,
\[
\omega_\FF(a,b,c)\leq\begin{cases}a+b&\text{if }c(a+b)\leq ab,\\[2pt]\dfrac{(ab+bc+ca)^2}{4abc}&\text{if }c(a+b)>ab,\end{cases}
\]
with equality if $c(a+b)\leq ab$.
\end{restatable}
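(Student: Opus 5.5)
The plan is to prove one uniform inequality, valid for all $a,b,c>0$:
\[
\omega_\FF(a,b,c)\;\leq\; f(a,b,c)\;:=\;\frac{(ab+bc+ca)^2}{4abc}\;=\;\frac{a+b+c}{2}+\frac14\Bigl(\frac{ab}{c}+\frac{ac}{b}+\frac{bc}{a}\Bigr).
\]
The case $c(a+b)\leq ab$ will then follow by monotonicity, and equality there by a flattening lower bound.

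\textbf{Lower bound.} The flattening of $\langle n_1,n_2,n_3\rangle$ grouping the first two legs has rank $\max$ over pairwise products, here $n^{a+b}$ since $a\geq b\geq c$. Tensor rank is at least flattening rank, so $\omega_\FF(a,b,c)\geq a+b$ always. This settles the equality claim once the upper bound $a+b$ is shown in the first regime.

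\textbf{Reduction of the first regime.} Tensor rank is monotone under restriction, and $\langle n^a,n^b,n^c\rangle$ is a restriction of $\langle n^a,n^b,n^{c'}\rangle$ for $c'\geq c$. Hence $\omega_\FF(a,b,c)\leq\omega_\FF(a,b,c')$. Take $c'=ab/(a+b)$. The condition $c(a+b)\leq ab$ gives $c'\geq c$, and $c'\leq b$, so the ordering $a\geq b\geq c'$ is kept. Then $ab+c'(a+b)=2ab$, so
\[
f(a,b,c')=\frac{4a^2b^2}{4abc'}=\frac{ab}{c'}=a+b.
\]
This handles $c=0$ as well. So everything reduces to the general inequality $\omega_\FF(a,b,c)\leq f(a,b,c)$.

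\textbf{Main step: the general bound.} I would use the paper's central ingredient, the degeneration of (a restriction of) a matrix multiplication tensor from polynomial multiplication tensors of small border rank. That same ingredient gives $\omega_\FF\leq 9/4$; note $f(1,1,1)=9/4$ is a consistency check.
\begin{enumerate}[(i)]
\item Choose $\langle N_1,N_2,N_3\rangle$ with independent sizes $N_i=n^{x_i}$ and write the relevant parameters (degrees and numbers of variables of the polynomial multiplication tensors) as free exponents.
\item Bound the asymptotic border rank of the source tensor by a linear expression in these exponents. Since border rank and rank have the same asymptotic exponent, this bounds $\omega_\FF(x_1,x_2,x_3)$.
\item Optimise the free exponents under the constraint that the degeneration produces $\langle n^a,n^b,n^c\rangle$.
\end{enumerate}
I expect the optimisation to be governed by an AM--GM balancing. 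The three terms $ab/c$, $ac/b$, $bc/a$ arise as the costs of three "directions", each weighted by a parameter $t_i$ with $t_1t_2t_3$ fixed. The balanced choice should give exactly $(ab+bc+ca)^2/(4abc)$. Equivalently, by Strassen's duality, it suffices to bound $\log F(\langle n^a,n^b,n^c\rangle)/\log n$ by $f(a,b,c)$ for every spectral point $F$, using the degeneration inequality $F(\langle\cdot\rangle)\leq F(\text{polynomial tensors})$.

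\textbf{Expected obstacle.} The hardest part is the rectangular bookkeeping: the symmetric construction must be rescaled so that all three sides are matched simultaneously. One must also check that the optimiser stays in the admissible range of the degeneration, which is exactly where $c(a+b)>ab$ should be needed. My first attempt would be to use symmetrisation: tensor $\langle n^a,n^b,n^c\rangle$ with its cyclic permutations. I would apply the symmetric analysis to products of the form $\langle n^{\alpha},n^{\beta},n^{\gamma}\rangle$ and then fit the exponents linearly.
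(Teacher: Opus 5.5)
\textbf{There is a genuine gap: the main step is not proved.} Your two outer steps are correct and coincide with the paper. First, $\flatrank{1}(\langle n_1,n_2,n_3\rangle)=n_1n_2$ gives $\omega_\FF(a,b,c)\geq a+b$; this is the first flattening, not the one grouping the first two legs, since each flattening gives one pairwise product and not their maximum. Second, your monotonicity-in-$c$ reduction to $c'=ab/(a+b)$ is exactly the first case of \cref{lem:rect-bound}, whose test vector $(a+b,b,a)$ is proportional to $(ab,bc',c'a)$.

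The gap is the uniform bound, which is the entire content of the theorem. You name the right framework (Strassen duality, one spectral point at a time), but you supply no inequalities. No degeneration from polynomial multiplication tensors to $\langle n^a,n^b,n^c\rangle$ is available as an input, and the ``free exponents plus AM--GM balancing'' is a guess about the outcome, not an argument. The missing chain is as follows.
\begin{enumerate}[(1)]
\item By \cref{lem:spectral-omega} it suffices to bound $ap_2+bp_3+cp_1$, where $p_1,p_2,p_3$ are the logarithms of $\phi(\langle1,1,2\rangle)$, $\phi(\langle2,1,1\rangle)$, $\phi(\langle1,2,1\rangle)$.
\item The key idea is that $\phi^{1/p_1}$ is superadditive on partial direct sums (\cref{lem:superadd-general}). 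This comes from the separation degeneration \ruleref{D2}, whose catalyst $\polmul_{r,r}$ costs at most $\langle 2r-1\rangle$, together with a type-class argument on tensor powers.
\item Only this superadditivity turns \ruleref{D3} and \ruleref{D4} into concavity and the tripling inequality for $m\mapsto\phi(\polmul_{a,m})^{1/p_1}$ (\cref{lem:profiles}).
\item These force lower bounds on the growth of the paired functions $\phi_\sigma,\phi_{\sigma\circ(23)}$ (\cref{lem:paired-growth}). Put $S=x_1+x_2+x_3$ and $B=p_1x_2x_3+p_2x_1x_3+p_3x_1x_2$. Telescoping the weighted product $G_n=\prod_\pi\phi_\pi(nx_{\pi(1)},nx_{\pi(2)})^{x_{\pi(3)}}$ gives $G_n\geq((n+1)/2)^{8B/S}$, against the border-rank bound $G_n\leq(nS)^{2S}$.
\item Hence $4B\leq S^2$ for all $x\in\RR_{\geq0}^3$ (\cref{prop:angular}), and $x=(ab,bc,ca)$ gives the bound.
\end{enumerate}
So the ``balancing'' is the choice of a weight vector in a quadratic constraint, not an AM--GM over parameters with fixed product.

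\textbf{Two of your heuristics also point the wrong way.}
\begin{enumerate}[(1)]
\item The condition $c(a+b)>ab$ is not needed for the validity of anything. The constraint $4B\leq S^2$ holds for every $x\geq0$, so $(ab+bc+ca)^2/(4abc)$ bounds $\omega_\FF(a,b,c)$ for all $a,b,c>0$; your own reduction, which evaluates it at the boundary point $c'$, relies on this. This bound exceeds $a+b$ by $(c(a+b)-ab)^2/(4abc)$, so the condition only decides whether the monotonicity bound $a+b$ is reachable.
\item Symmetrising the target $\langle n^a,n^b,n^c\rangle$ over its cyclic permutations produces the square tensor $\langle n^{a+b+c},n^{a+b+c},n^{a+b+c}\rangle$. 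An upper bound on the rank of a product gives no upper bound on a factor. Combining the square bound $9/4$ with convexity of $\omega_\FF$ (your ``fitting linearly'') only yields the chord $\omega_\FF(1,k,1)\leq 2+k/4$. That is $2.125$ at $k=1/2$, where the theorem gives $2$.
\end{enumerate}
The rectangular bound genuinely needs the weighted growth argument with unequal weights $x$ and sizes $nx_{\pi(1)},nx_{\pi(2)}$, carried out for each spectral point.
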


In particular, $\omega_\FF(1,k,1)=2$ for $0\leq k\leq1/2$, so $\alpha_\FF\coloneqq\sup\{k\geq0:\omega_\FF(1,k,1)=2\}$ is at least~$1/2$ (\cref{cor:rect-k}). This implies for instance that $n^2 \times n$ by $n \times n^2$ matrices can be multiplied in $\mathcal{O}(n^{4 + \eps})$ operations! 
This improves previous $\alpha_\CC>0.465$~\cite{OAI:Complex-Matrix-Multiplication-Below-2.258-and-Rectangular-Bounds-September-24-2026} and $\alpha_\FF>0.321334$ from the laser method~\cite{williams2023new}. \Cref{fig:rect-general} shows \cref{thm:rect}. %

We will prove that the bound of \cref{thm:rect} is ``the best that its proof can give'': the ingredients that our proof use do not give a better upper bound on~$\omega_\FF(a,b,c)$, and neither does any combination (in a precise way) of the ingredients (\cref{thm:barrier}). To improve the bounds further requires going around this ``operational barrier''.
\begin{figure}[t]
\centering
\includegraphics[width=\linewidth]{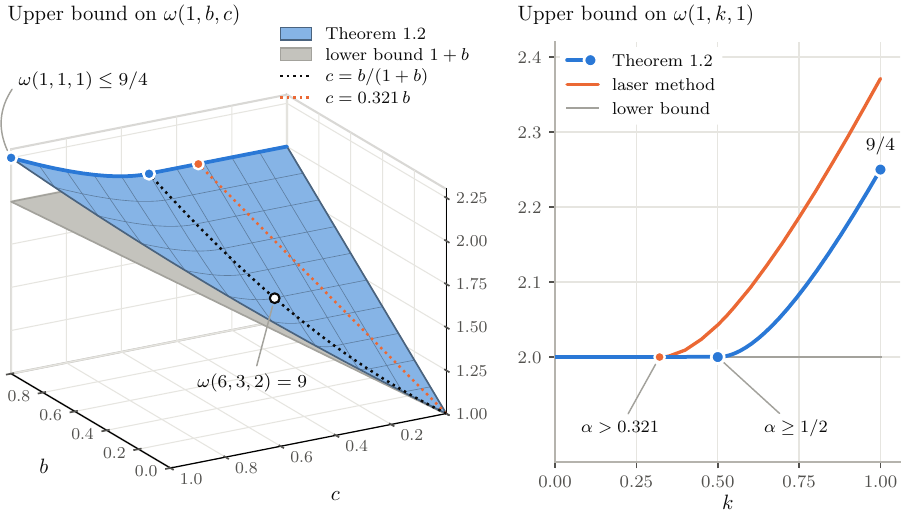}
\caption{(Left) The upper bound of \cref{thm:rect} on $\omega_\FF(1,b,c)$ for $1\geq b\geq c\geq0$ %
as a blue surface, %
above the grey plane of the lower bound~$1+b$. The black dotted curve $c=b/(1+b)$ indicates until where upper bound and lower bound match and give $\omega_\FF(1,b,c)=1+b$. The orange dotted line $c=0.321334\,b$ shows until where this match was known by previous results from the laser method~\cite{williams2023new}. %
(Right) The upper bound of \cref{cor:rect-k} on $\omega_\FF(1,k,1)$ for $0\leq k\leq1$, compared with the previously known upper bound from the laser method, as the lower convex hull of the values tabulated in~\cite{williams2023new,almanMoreAsymmetryYields2025}.} %
\label{fig:rect-general}
\end{figure}

In this introduction we will give a brief overview of the proof ideas and results, focussing on $\omega_\FF = \omega_\FF(1,1,1)$, but the ideas work for the rectangular exponents $\omega_\FF(a,b,c)$ as well. Formal definitions and statements are from \cref{sec:tensors} onwards. %
Standard references for background are the survey of Bläser~\cite{blaser2013fast} and the book of Bürgisser, Clausen and Shokrollahi~\cite{burgisser1997algebraic}.

\subsection{Degeneration ingredients and combination rules}\label{sec:ingredients}
The matrix multiplication tensor is $\langle n_1,n_2,n_3\rangle=\sum_{i=1}^{n_1}\sum_{j=1}^{n_2}\sum_{k=1}^{n_3}E_{ij}\otimes E_{jk}\otimes E_{ki}$, where~$E_{ij}$ is the matrix with entry~$1$ at~$(i,j)$ and~$0$ elsewhere.
The matrix multiplication exponent $\omega_\FF$ is defined in terms of tensor rank (or equivalently in terms of arithmetic circuit complexity). For a tensor $T$ the tensor rank $\tensorrank(T)$ is the smallest number $r$ such that $T \leq \langle r\rangle$ where $\langle r\rangle=\sum_{i=1}^re_i\otimes e_i\otimes e_i$ is the unit tensor and where $\leq$ denotes restriction: the left hand side is obtained from the right hand side by applying linear maps to the three legs. It goes back to Strassen~\cite{strassen1969gaussian} that if $\langle n,n,n\rangle\leq\langle r\rangle$ for integers $n\geq2$ and $r\geq1$, 
then $\omega_\FF\leq\log r/\log n$. Throughout, $\log$ denotes the logarithm to base~$2$.

It is known that~$\omega_\FF$ does not change if we replace restriction by its approximative version, degeneration (i.e.,~we replace tensor rank by border rank): if $\langle n,n,n\rangle\degen\langle r\rangle$ for integers $n\geq2$ and $r\geq1$, 
then still $\omega_\FF\leq\log r/\log n$. Degeneration $\degen$ is a restriction in which we allow to use matrices whose entries are polynomials in a formal variable~$\eps$, which we think of as going to zero (also in positive characteristic one can define this correctly, symbolically). %
So to prove upper bounds on~$\omega_\FF$ it suffices to prove degenerations of the form 
\[
\langle n,n,n\rangle\degen\langle r\rangle,
\]
for which $\log r/ \log n$ is small;
We keep that in mind as our goal. 

We may think of $\leq$ and $\degen$ as notions of reduction between computational problems, in this case using $r$ pairwise scalar multiplications to compute $n\times n$ matrix multiplication, and we will see many more. Restriction $S \leq T$ implies degeneration $S \degen T$, but in our analysis (following Strassen's way of working) we will go even further and use the asymptotic restriction $S \asympleq T$ defined by $S^{\otimes n} \leq T^{\otimes n} \otimes  \langle2^{o(n)}\rangle$, for which $S \degen T$ implies $S \asympleq T$. (Indeed, also $\langle n,n,n\rangle \asympleq \langle r\rangle$ implies $\omega_\FF\leq \log r / \log n$.)

We explain the proof at a high level in terms of degenerations. We first discuss the ingredients. The main objects, besides the matrix multiplication tensors, are the polynomial multiplication tensors~$\polmul_{a,b}$, which describe the multiplication of a polynomial of degree at most~$a-1$ by a polynomial of degree at most~$b-1$. %
In other words, with~$e_i$ standing for the monomial~$t^{i-1}$,
\[
\polmul_{a,b}=\sum_{i=1}^{\smash{a}}\sum_{j=1}^{\smash{b}}e_i\otimes e_j\otimes e_{i+j-1}\in\FF^a\otimes\FF^b\otimes\FF^{a+b-1}.
\]
\subsubsection{Degeneration ingredients}
The starting point is that the simplest kind of polynomial multiplication (multiplying a degree~0 polynomial with a degree $b-1$ polynomial, i.e.~$\polmul_{1,b}$) is the same thing as the simplest kind of rectangular matrix multiplication (multiplying a constant with a length $b$ vector, i.e.~$\langle1,1,b\rangle$):
\begin{enumerate}[({D}1)]\setcounter{enumi}{-1}
\item\label{rule:D0} \textbf{Matrix multiplication.} $\polmul_{1,b}\cong\langle1,1,b\rangle$ (\cref{def:polmul}).
\end{enumerate}

By ingredient~\ruleref{D0}, and since $\langle k,\ell,m\rangle\cong\langle k,1,1\rangle\otimes\langle1,\ell,1\rangle\otimes\langle1,1,m\rangle$, we see that the semiring, with direct sum $\oplus$ as addition and tensor product $\otimes$ as multiplication, generated by the polynomial multiplication tensors~$\polmul_{a,b}$ and their permutations (where we mean permuting the three tensor legs), contains all matrix multiplication tensors $\langle n_1,n_2,n_3\rangle$, and their direct sums. It will thus be natural to study the matrix multiplication tensors through (permuted) polynomial multiplication tensors.

Next, what makes polynomial multiplication appealing, is the fundamental fact that polynomial multiplication is computationally very cheap (by interpolation!), which in our language corresponds to a border rank upper bound:
\begin{enumerate}[({D}1)]\setcounter{enumi}{0}
\item\label{rule:D1} \textbf{Border rank.} $\polmul_{a,b}\degen\langle a+b-1\rangle$ (\cref{prop:polmul-rank}).
\end{enumerate}

In the study of matrix multiplication, as we already hinted at above, tensors are commonly combined by the direct sum~$\oplus$ and the tensor product~$\otimes$. Tensor products are natural because $\langle n_1,n_2,n_3\rangle\otimes\langle m_1,m_2,m_3\rangle\cong\langle n_1m_1,n_2m_2,n_3m_3\rangle$ (``block matrices are multiplied blockwise''). Direct sums are natural because of Sch\"onhage's asymptotic sum inequality: if $\langle n_1,n_1,n_1\rangle\oplus\cdots\oplus\langle n_p,n_p,n_p\rangle\degen\langle r\rangle$ for integers $p,n_1,\ldots,n_p,r\geq1$, then we can obtain from this degenerations of the form $\langle m,m,m\rangle \degen \langle s\rangle$ that imply $n_1^{\omega_\FF}+\cdots+n_p^{\omega_\FF}\leq r$~\cite[Theorem~7.1]{MR623057}. So a direct sum of matrix multiplication tensors that is a degeneration of a unit tensor also bounds~$\omega_\FF$, like the degenerations $\langle n,n,n\rangle\degen\langle r\rangle$ above.

Now if we look at $T \otimes \langle 1,1,2\rangle$ we see that this gives a notion of partial direct sum of $T$ with itself, which we denote by $T \oplus_1 T$: two copies of $T$ are placed next to each other with one shared leg and two disjoint legs. More generally $\oplusX$ denotes a partial direct sum, whose summands share their first leg (\cref{def:partial-direct-sum}).
\begin{figure}[H]
\centering
\begin{minipage}[c]{0.42\linewidth}
\centering
\begin{tikzpicture}[x={(1cm,0cm)},y={(0cm,-1cm)},z={(0.5cm,0.35cm)},line join=round,thin,scale=0.85]
\begin{scope}
\psumframeback{2}{1.8}{2}
\psumbox{psumB}{1.2}{1}{1}{2}{1.8}{2}
\psumbox{psumA}{0}{0}{0}{1.2}{1}{1}
\psumframefront{2}{1.8}{2}
\node at (0.6,0.5,0) {$T_1$};
\node at (1.6,1.4,1) {$T_2$};
\node[below=2pt] at (1,1.8,0) {$T_1\oplus T_2$};
\end{scope}
\begin{scope}[xshift=4.3cm]
\psumframeback{2}{1.8}{1}
\psumbox{psumA}{0}{0}{0}{1.2}{1}{1}
\psumbox{psumB}{1.2}{1}{0}{2}{1.8}{1}
\psumframefront{2}{1.8}{1}
\node at (0.6,0.5,0) {$T_1$};
\node at (1.6,1.4,0) {$T_2$};
\node[below=2pt] at (1,1.8,0) {$T_1\oplusX T_2$};
\end{scope}
\end{tikzpicture}
\end{minipage}\hfill
\begin{minipage}[c]{0.54\linewidth}
\setlength{\abovecaptionskip}{0pt}
\caption{The direct sum $T_1\oplus T_2$ (left) and the partial direct sum $T_1\oplusX T_2$ (right), drawn as three-dimensional arrays.
All coefficients outside the two boxes are zero.}
\label{fig:partial-direct-sum}
\end{minipage}
\end{figure}

We will be working a lot with these partial direct sums $\oplusX$, but in the end we will need a way to transform these back into the usual world of direct sums $\oplus$:
\begin{enumerate}[({D}1)]\setcounter{enumi}{1}
\item\label{rule:D2} \textbf{Separating partial direct sums into direct sums.} For a partial direct sum $T_1\oplusX\cdots\oplusX T_m$ and every integer $t\geq1$,
\[
(T_1\oplus\cdots\oplus T_m)\otimes\polmul_{1,t}\degen(T_1\oplusX\cdots\oplusX T_m)\otimes\polmul_{m,t}
\]
(\cref{prop:sep}).
\end{enumerate}
Note how ingredient~\ruleref{D2} turns a partial direct sum into a direct sum, at the cost of turning a factor~$\polmul_{m,t}$ into~$\polmul_{1,t}$, thus using $\polmul_{m,t}$ as a ``partially consumed catalyst'' for this degeneration.  %

Finally, we need two families of degenerations that relate partial direct sums of polynomial multiplication tensors. The first family is:
\begin{enumerate}[({D}1)]\setcounter{enumi}{2}
\item\label{rule:D3} \textbf{Concavity degeneration.} For $b\geq2$, $\polmul_{a,b+1}\oplusX\polmul_{a,b-1}\degenTT\polmul_{a,b}\oplusX\polmul_{a,b}$, and the same with arbitrary $\oplus_1 C$ added on both sides (\cref{lem:det-degen}).
\end{enumerate}
(Here the subscript in~$\degenTT$ indicates that these degenerations do not act on the first leg, which indeed allows adding arbitrary $\oplus_1 C$ on both sides.)
In terms of polynomials, ingredient \ruleref{D3} is using two multiplications of the same polynomial~$f$ by polynomials of degree at most~$b-1$ to compute multiplications of~$f$ by polynomials of degree at most~$b$ and~$b-2$. The construction of \ruleref{D3} is related to the representation theory of~$\SL_2$.

The second family is:
\begin{enumerate}[({D}1)]\setcounter{enumi}{3}
\item\label{rule:D4} \textbf{Tripling degeneration.} $\polmul_{a,b}\oplusX((23)\cdot\polmul_{a,b})\oplusX\polmul_{a,b}\degenTT\polmul_{a,3b+a-1}$, where $(23)\cdot\polmul_{a,b}$ is~$\polmul_{a,b}$ with its second and third legs exchanged; and again the same with arbitrary $\oplus_1 C$ added on both sides (\cref{lem:tripling-degen}).
\end{enumerate}
In terms of polynomials, ingredient \ruleref{D4} is using one multiplication of~$f$ by a polynomial~$g$ of degree at most~$3b+a-2$ to compute three smaller ones. We can use the lowest~$b$ and the highest~$b$ coefficients of~$g$ to do two multiplications of~$f$ by polynomials of degree at most~$b-1$, the two copies of~$\polmul_{a,b}$. Using degeneration we are able to squeeze in an extra, permuted polynomial multiplication, $(23)\cdot\polmul_{a,b}$. %

\subsubsection{Combination rules}
The proof will combine the above degeneration ingredients \ruleref{D0}--\ruleref{D4} by basic combination rules to obtain more degenerations, as to eventually obtain degenerations of the form $\langle n,n,n\rangle \degen \langle r\rangle$, hopefully such that $\log r / \log n$ is small. These rules are: %
\begin{enumerate}[({C}1)]
\item\label{rule:C1} reflexivity: $T\degen T$;
\item\label{rule:C2} transitivity: if $T_1\degen T_2$ and $T_2\degen T_3$, then $T_1\degen T_3$;
\item\label{rule:C3} direct sums and tensor products: if $T_1\degen S_1$ and $T_2\degen S_2$, then $T_1\oplus T_2\degen S_1\oplus S_2$ and $T_1\otimes T_2\degen S_1\otimes S_2$;
\item\label{rule:C4} permutations of the legs: if $T\degen S$, then $\pi\cdot T\degen\pi\cdot S$ for every permutation~$\pi \in \mathfrak{S}_3$ of the three legs.
\end{enumerate}
Finding the optimal combination of the degeneration ingredients given these combination rules seems a daunting task! Not only do we have an infinite family of degeneration ingredients (parametrised by the several parameters $a,b,t,m$), we may also need to use large $n$ to obtain our target degenerations $\langle n,n,n\rangle \degen \langle r\rangle$ for which $\log r / \log n$ is small. %

Let us first illustrate what sort of combination strategies we may want to do, and then let Strassen's asymptotic spectrum duality come to the rescue to turn this into a tractable optimisation problem (with a clean analysis, optimal objective value and ``dual certificate''). 

\subsection{Understanding how the ingredients can be combined}\label{sec:intro-outline}
Let us try to combine the degeneration ingredients according to the combination rules to start understanding them, including the role of catalysts and asymptotics.

\subsubsection{From \texorpdfstring{$\polmul_{a,b}$}{P(a,b)} to partial direct sum of \texorpdfstring{$\langle 1,1,n\rangle$}{<1,1,n>} and permutations}
Fix an integer $a\geq2$. By ingredient~\ruleref{D1}, we have $\langle2a-1\rangle\degengeq\polmul_{a,a}$. Tensoring both sides with~$\langle1,1,n\rangle$ gives
\begin{equation}\label{eq:outline-start}
\langle2a-1\rangle\otimes\langle1,1,n\rangle\degengeq\polmul_{a,a}\otimes\langle1,1,n\rangle\cong\polmul_{a,a}\oplusX\cdots\oplusX\polmul_{a,a},
\end{equation}
the partial direct sum of $n$~copies of~$\polmul_{a,a}$.

Ingredients~\ruleref{D3} and~\ruleref{D4} are degenerations that act only on the second and third legs, so they can be applied to some of the summands of a partial direct sum while the other summands stay as they are. With ingredient~\ruleref{D3}, the second parameter branches out: two summands~$\polmul_{a,b}$ become $\polmul_{a,b+1}$ and~$\polmul_{a,b-1}$. For example, ingredient~\ruleref{D3} with $b=a$, applied to two of the summands, continues~\eqref{eq:outline-start}:
\[
\polmul_{a,a}\oplusX\polmul_{a,a}\oplusX\polmul_{a,a}\oplusX\cdots\oplusX\polmul_{a,a}\degengeq\polmul_{a,a+1}\oplusX\polmul_{a,a-1}\oplusX\polmul_{a,a}\oplusX\cdots\oplusX\polmul_{a,a}.
\]
With ingredient~\ruleref{D4}, a summand~$\polmul_{a,3b+a-1}$ with a large second parameter $3b+a-1$ becomes three summands with second parameter~$b$, one of them with its second and third legs exchanged. In this way one can reach summands with second parameter~$1$, which are matrix multiplication tensors, since $\polmul_{a,1}\cong\langle a,1,1\rangle$ and $(23)\cdot\polmul_{a,1}\cong\langle1,a,1\rangle$. For example, a summand~$\polmul_{a,a+2}$, which three applications of ingredient~\ruleref{D3} produce from four copies of~$\polmul_{a,a}$, becomes three matrix multiplication tensors by ingredient~\ruleref{D4} with $b=1$:
\[
\polmul_{a,a+2}\degengeq\polmul_{a,1}\oplusX((23)\cdot\polmul_{a,1})\oplusX\polmul_{a,1}\cong\langle a,1,1\rangle\oplusX\langle1,a,1\rangle\oplusX\langle a,1,1\rangle.
\]

All in all, we see that after \eqref{eq:outline-start}, applying \ruleref{D3} and \ruleref{D4} will let us produce certain partial direct sums of matrix multiplication tensors.

\subsubsection{From partial direct sums to \texorpdfstring{$\langle n,n,n\rangle$}{<n,n,n>} via symmetrisation and powering}
Suppose that in the above manner the $n$~copies of~$\polmul_{a,a}$ of \eqref{eq:outline-start} become a partial direct sum that contains $s$~copies of~$\langle a,1,1\rangle$ and $t$~copies of~$\langle1,a,1\rangle$. (Note that here we have not yet specified explicitly what $s$ and $t$ can be; applying the above ideas cleverly will have to optimise this; We will discuss this via an example in a moment.) 
The partial direct sum of the $s$~copies of~$\langle a,1,1\rangle$ is $\langle a,1,1\rangle\otimes\langle1,1,s\rangle\cong\langle a,1,s\rangle$, and likewise the partial direct sum of the $t$~copies of~$\langle1,a,1\rangle$ is~$\langle1,a,t\rangle$. Together with~\eqref{eq:outline-start}, this gives
\begin{equation}\label{eq:outline-mm}
\langle2a-1\rangle\otimes\langle1,1,n\rangle\degengeq\langle a,1,s\rangle\oplusX\langle1,a,t\rangle.
\end{equation}

To make the matrix multiplication tensors appearing in \eqref{eq:outline-mm} ``square'' (i.e., of the form~$\langle p,p,p\rangle$), we use the symmetrisation operation $\bigotimes_{\pi\in\mathfrak{S}_3}\pi\cdot T$ for any  tensor~$T$, the tensor product of the six tensors obtained from~$T$ by permuting its legs. For all integers $k,\ell,m\geq1$,
\[
\bigotimes_{\pi\in\mathfrak{S}_3}\pi\cdot\langle k,\ell,m\rangle\cong\langle k\ell m,k\ell m,k\ell m\rangle^{\otimes2},
\]
which depends only on the product~$k\ell m$. In particular, the symmetrisation of~$\langle1,1,n\rangle$ is~$\langle n,n,n\rangle^{\otimes2}$, and that of~$\langle2a-1\rangle$ is~$\langle2a-1\rangle^{\otimes6}$. So symmetrising both sides of~\eqref{eq:outline-mm} gives
\begin{equation}\label{eq:outline-sym}
\langle2a-1\rangle^{\otimes6}\otimes\langle n,n,n\rangle^{\otimes2}\degengeq\bigotimes_{\pi\in\mathfrak{S}_3}\pi\cdot\bigl(\langle a,1,s\rangle\oplusX\langle1,a,t\rangle\bigr).
\end{equation}

From the right-hand side of \eqref{eq:outline-sym} we will construct an asymptotic restriction to one square matrix multiplication tensor (in a manner reminiscent of the laser method, but simpler):
\begin{equation}\label{eq:outline-add}
\bigotimes_{\pi\in\mathfrak{S}_3}\pi\cdot\bigl(\langle a,1,s\rangle\oplusX\langle1,a,t\rangle\bigr)\gtrsim\langle a(s+t),a(s+t),a(s+t)\rangle^{\otimes2}.
\end{equation}
The idea of the proof of~\eqref{eq:outline-add} is to take the $N$th tensor power of $\langle a,1,s\rangle\oplusX\langle1,a,t\rangle$, for large~$N$, and to keep only its summands with a fixed number~$k$ of factors~$\langle a,1,s\rangle$. These are $h\coloneqq\binom Nk$ tensors that share the first leg, each isomorphic to $M\coloneqq\langle a^k,a^{N-k},s^kt^{N-k}\rangle$. By ingredients~\ruleref{D2} and~\ruleref{D1}, with $\polmul_{1,h}\cong\langle1,1,h\rangle$ and $\polmul_{h,h}\degen\langle2h-1\rangle$, their partial direct sum~$Y$ satisfies
\[
\langle h\rangle\otimes\bigl(M\otimes\langle1,1,h\rangle\bigr)\degen\langle2h-1\rangle\otimes Y.
\]
The left-hand side consists of $h$~copies of $M\otimes\langle1,1,h\rangle$, and the right-hand side of fewer than $2h$~copies of~$Y$. The product of the three parameters of $M\otimes\langle1,1,h\rangle$ is $a^N\binom Nks^kt^{N-k}$. For the~$k$ at which $\binom Nks^kt^{N-k}$ is largest, this product is at least $(a(s+t))^N/(N+1)$, since the $N+1$ numbers $\binom Nks^kt^{N-k}$ add up to $(s+t)^N$. The factors~$2$ and~$N+1$ do not matter as $N\to\infty$, and the symmetrisation makes the result square.

Composing the degeneration~\eqref{eq:outline-sym} and the asymptotic restriction~\eqref{eq:outline-add} %
gives
\begin{equation}\label{eq:outline-catalyst}
\langle2a-1\rangle^{\otimes6}\otimes\langle n,n,n\rangle^{\otimes2}\gtrsim\langle a(s+t),a(s+t),a(s+t)\rangle^{\otimes2}.
\end{equation}

Suppose (essentially without loss of generality, after taking tensor powers and rounding down) that $a(s+t)=cn$ for an integer~$c\geq2$. Then $\langle cn,cn,cn\rangle\cong\langle c,c,c\rangle\otimes\langle n,n,n\rangle$, and~\eqref{eq:outline-catalyst} reads
\[
\langle2a-1\rangle^{\otimes6}\otimes\langle n,n,n\rangle^{\otimes2}\gtrsim\langle c,c,c\rangle^{\otimes2}\otimes\langle n,n,n\rangle^{\otimes2}.
\]
Then $\langle n,n,n\rangle^{\otimes2}$ acts as a catalyst: it is needed as input, but also returned as an output. A common nonzero tensor factor can be cancelled in an asymptotic restriction~\cite[Lemmas~3.5 and~3.7]{wigderson2022asymptotic}, which for us gives \[
\langle2a-1\rangle^{\otimes6}\gtrsim\langle c,c,c\rangle^{\otimes2},
\]
so $\omega_\FF\leq3\log(2a-1)/\log c$.

\subsubsection{Concrete example of applying ingredients (D3) and (D4)}\label{toy-example}
As a concrete example (already leading to a non-trivial upper bound on $\omega_\FF$), we now carry out the above steps for $a=2$. Ingredients~\ruleref{D3} and~\ruleref{D4} turn four copies of~$\polmul_{2,2}$ into matrix multiplication tensors and one copy of~$\polmul_{2,2}$:
\begin{align*}
\polmul_{2,2}\otimes\langle1,1,4\rangle&\cong\polmul_{2,2}\oplusX\polmul_{2,2}\oplusX\polmul_{2,2}\oplusX\polmul_{2,2}\\
&\degengeq\polmul_{2,1}\oplusX\polmul_{2,3}\oplusX\polmul_{2,3}\oplusX\polmul_{2,1}&&\text{\ruleref{D3}, $b=2$, twice}\\
&\degengeq\polmul_{2,1}\oplusX\polmul_{2,2}\oplusX\polmul_{2,4}\oplusX\polmul_{2,1}&&\text{\ruleref{D3}, $b=3$}\\
&\degengeq\polmul_{2,1}\oplusX\polmul_{2,2}\oplusX\polmul_{2,1}\oplusX((23)\cdot\polmul_{2,1})\oplusX\polmul_{2,1}\oplusX\polmul_{2,1}&&\text{\ruleref{D4}, $b=1$}\\
&\cong\langle2,1,4\rangle\oplusX\langle1,2,1\rangle\oplusX\polmul_{2,2},
\end{align*}
where in the last step, we use that the partial direct sum of the four copies of $\polmul_{2,1}\cong\langle2,1,1\rangle$ is $\langle2,1,1\rangle\otimes\langle1,1,4\rangle\cong\langle2,1,4\rangle$, and that $(23)\cdot\polmul_{2,1}\cong\langle1,2,1\rangle$. The copy of~$\polmul_{2,2}$ comes back and goes into the next round. So $k$~rounds, followed by removing the last copy of~$\polmul_{2,2}$, give
\[
\polmul_{2,2}\otimes\langle1,1,3k+1\rangle\degengeq\langle2,1,4k\rangle\oplusX\langle1,2,k\rangle.
\]
With~\eqref{eq:outline-start}, this is~\eqref{eq:outline-mm} with $n=3k+1$, $s=4k$ and $t=k$. Since $a(s+t)=10k>3k+1$, we get $\omega_\FF\leq3\log(3)/\log(10k/(3k+1))$, and letting $k\to\infty$ gives $\omega_\FF\leq3\log(3)/\log(10/3)<2.74$. 

Starting instead from $\langle5\rangle\degengeq\polmul_{2,4}$, ingredient~\ruleref{D4} with $b=1$ gives $\polmul_{2,4}\degengeq\langle2,1,2\rangle\oplusX\langle1,2,1\rangle$ in one step, and the same argument, with~$\langle5\rangle$ in place of~$\langle3\rangle$, gives the better bound $\omega_\FF\leq3\log5/\log6<2.695$.

Again, this is just an example, but it illustrates components of the actual construction, and it can indeed be extended (by starting with $\polmul_{a,a}$ with $a \to \infty$) %
to explicit combinations of the ingredients whose bounds tend to~$9/4$.

\subsection{High-level proof description and role of asymptotic  spectrum duality}\label{sec:intro-spectrum}
Our proof of \cref{thm:main} will not combine the ingredients by hand, as the \cref{toy-example} example does. Instead, Strassen's spectral points turn the search for the best combination of the ingredients, including their tensor powers, into an optimisation problem. Spectral points are semiring homomorphisms~$\phi$ from tensors to the nonnegative real numbers that are monotone under restriction (\cref{def:spectral-point}). By Strassen's duality theorem, $T\asympleq S$ if and only if $\phi(T)\leq\phi(S)$ for every spectral point~$\phi$~\cite[Theorem~3.2]{strassen1988asymptotic}. In particular, $\omega_\FF$ is the maximum of $\log\phi(\langle2,2,2\rangle)$ over all spectral points~$\phi$ (\cref{thm:spectral}).

For a spectral point~$\phi$, let $p_1$, $p_2$ and~$p_3$ be the logarithms of~$\phi(\langle1,1,2\rangle)$, $\phi(\langle2,1,1\rangle)$ and~$\phi(\langle1,2,1\rangle)$. Then $\log\phi(\langle2,2,2\rangle)=p_1+p_2+p_3$, so it suffices to show that $p_1+p_2+p_3\leq9/4$. We will show using ingredient~\ruleref{D2} that if $p_1>0$, then $\phi^{1/p_1}$ is superadditive on partial direct sums, whose summands share the first leg (\cref{lem:superadd-general}). Subsequently, \ruleref{D3} makes the function $b\mapsto\phi(\polmul_{a,b})^{1/p_1}$ concave, and \ruleref{D4} gives the ``tripling inequality''
\[
2\phi(\polmul_{a,b})^{1/p_1}+\phi\bigl((23)\cdot\polmul_{a,b}\bigr)^{1/p_1}\leq\phi(\polmul_{a,3b+a-1})^{1/p_1}
\]
(\cref{lem:profiles}). The same holds for the tensors $\pi\cdot\polmul_{a,b}$ with permuted legs in place of $\polmul_{a,b}$, with~$p_{\pi(1)}$ in place of~$p_1$. From these inequalities, an elementary argument shows that the product of the six numbers $\phi(\pi\cdot\polmul_{a,a})$, which is the value of~$\phi$ at the symmetrisation $\bigotimes_{\pi\in\mathfrak{S}_3}\pi\cdot\polmul_{a,a}$ of \cref{sec:intro-outline}, is at least $a^{8(p_1+p_2+p_3)/3}$ (\cref{prop:angular-square}). By ingredient~\ruleref{D1}, it is at most $(2a-1)^6$. Letting $a\to\infty$ gives $p_1+p_2+p_3\leq9/4$ for every spectral point~$\phi$, and hence $\omega_\FF\leq9/4$.

In terms of asymptotic restriction, the above construction leads to a chain in which symmetrised polynomial multiplication is in the middle: %
For every integer $a\geq1$,
\begin{equation}\label{eq:intro-distil}
\langle a^4,a^4,a^4\rangle^{\otimes2}\asympleq\Bigl(\bigotimes_{\pi\in\mathfrak{S}_3}\pi\cdot\polmul_{a,a}\Bigr)^{\otimes3}\degen\langle(2a-1)^{18}\rangle,
\end{equation}
where the asymptotic restriction follows from the inequalities above for all spectral points, by Strassen's duality theorem, and the degeneration from ingredient~\ruleref{D1}, with the legs permuted and tensor products taken. %

\subsection{Barriers for obtaining better bounds}\label{sec:intro-optimality}
We have seen how some combination of the degeneration ingredients gives $\omega_\FF \leq 9/4$ (and in a similar way bounds on the rectangular matrix multiplication exponents). Can we do better? Not with current ingredients.
\subsubsection{No barrier from irreversibility}
First of all, standard barriers are not effective here. Christandl, Vrana and Zuiddam~\cite[Theorem~3.1]{v017a002} (see also Alman~\cite{alman2018limits} and Christandl, Le Gall, Lysikov and Zuiddam \cite{MR4870821}) proved a barrier for upper bounds on~$\omega_\FF$ from a fixed tensor~$S$ (generalising previous barriers of Ambainis, Filmus and Le Gall~\cite{MR3388238}): every upper bound obtained from asymptotic restrictions $\langle2^n,2^n,2^n\rangle\asympleq S^{\otimes m}$ is at least $2\operatorname{irr}(S)$, where the irreversibility $\operatorname{irr}(S)$ is the ratio of the logarithms of the asymptotic rank and the asymptotic subrank of~$S$. 

This irreversibility barrier does not explain why the proof of \cref{thm:main} gives no bound below~$9/4$: for the tensors $S_a=\bigotimes_{\pi\in\mathfrak{S}_3}\pi\cdot\polmul_{a,a}$ of~\eqref{eq:intro-distil}, with $a\geq2$, the degenerations $\langle a\rangle\degen\polmul_{a,a}\degen\langle2a-1\rangle$ give $\operatorname{irr}(S_a)\leq\log(2a-1)/\log a$, so the barrier tends to the trivial value~$2$ as $a\to\infty$. A similar analysis for rectangular matrix multiplication gives no irreversibility barrier there.

\subsubsection{Tight barrier for (D0)--(D4) + (C1)--(C4)}\label{sec:intro-certificate}
No derivation from the ingredients~\ruleref{D0}--\ruleref{D4} by the rules~\ruleref{C1}--\ruleref{C4}, with~\ruleref{D3} and~\ruleref{D4} also inside partial direct sums, gives an upper bound on~$\omega_\FF$ below~$9/4$, and neither does a limit of such bounds (\cref{thm:barrier}). This remains true for the asymptotic consequences of the rules, among them the asymptotic restriction~\eqref{eq:intro-distil}. Likewise, for all real numbers $a\geq b\geq c\geq0$, no such derivation gives a better upper bound on~$\omega_\FF(a,b,c)$ than \cref{thm:rect}.

We prove the barrier with a dual certificate. For all values of their parameters, write the degenerations of the ingredients~\ruleref{D0}--\ruleref{D4} as $S_i\degen T_i$, for~$i$ in an index set~$I$. Add the restrictions $T\leq T\oplus S$, $T\leq T\oplusX S$, $T\oplusX S\leq T\oplus S$, $\polmul_{a,c}\leq\polmul_{a,b}$ for $c\leq b$, and $\polmul_{a,b}\leq\polmul_{a,1}\otimes\langle1,1,b\rangle$, and the isomorphisms that express the associativity and commutativity of~$\oplus$, $\otimes$ and~$\oplusX$ and the distributivity of~$\otimes$ over~$\oplus$ and~$\oplusX$, the identities of the permutations of the legs, the isomorphisms between products and permutations of matrix multiplication tensors, the isomorphism between the partial direct sum of $r$~copies of~$T$ and $T\otimes\langle1,1,r\rangle$, and $\polmul_{a,b}\cong(12)\cdot\polmul_{b,a}$ (\cref{sec:combinations}). By the rules~\ruleref{C1}--\ruleref{C4}, with~\ruleref{D3} and~\ruleref{D4} also applied inside partial direct sums, these generate a relation~$\preceq$ on formal combinations of the tensors.

Let~$\Delta_{\preceq}$ be the set of functions~$\phi$ on these formal combinations that are additive and multiplicative, satisfy $\phi(\langle1\rangle)=1$, and satisfy $\phi(T)\leq\phi(S)$ whenever $T\preceq S$; in particular, $\phi(S_i)\leq\phi(T_i)$ for every $i\in I$. The set~$\Delta_{\preceq}$ does not depend on the field~$\FF$. For every field~$\FF$, every spectral point agrees with an element of~$\Delta_{\preceq}$ on the tensors $\pi\cdot\polmul_{a,b}$ and the matrix multiplication tensors, so~$\Delta_{\preceq}$ is an outer approximation of the asymptotic spectrum: $\Delta_\FF\subseteq\Delta_{\preceq}$ in this sense. The proof of \cref{thm:main} is in terms of spectral points $\phi\in\Delta_\FF$, but it works almost verbatim for every $\phi\in\Delta_{\preceq}$, so
\begin{equation}\label{eq:intro-chain}
2^{\omega_\FF}=\max_{\phi\in\Delta_\FF}\phi(\langle2,2,2\rangle)\leq\max_{\phi\in\Delta_{\preceq}}\phi(\langle2,2,2\rangle)\leq2^{9/4}.
\end{equation}

The point is that the second inequality in~\eqref{eq:intro-chain} is an equality: we will construct an explicit function $F^\star\in\Delta_{\preceq}$ with $F^\star(\langle2,2,2\rangle)=2^{9/4}$ (\cref{sec:optimality}), a ``dual certificate''. This proves the barrier, since a relation $\langle n,n,n\rangle\preceq\langle r\rangle$ gives $n^{9/4}=F^\star(\langle n,n,n\rangle)\leq F^\star(\langle r\rangle)=r$. The same argument applies if new degenerations $T\degen S$ are added as rules, as long as $F^\star(\pi\cdot T)\leq F^\star(\pi\cdot S)$ for every permutation~$\pi$ of the legs (\cref{cor:new-rules}). So a new degeneration can lead to a bound below~$9/4$, together with the ingredients, only if it violates one of these inequalities. For all real numbers $a\geq b\geq c\geq0$, other functions in~$\Delta_{\preceq}$ of the same kind show in the same way that no derivation gives a better upper bound on~$\omega_\FF(a,b,c)$ than \cref{thm:rect}. 

We do not know whether~$F^\star$ agrees with a spectral point on the tensors $\pi\cdot\polmul_{a,b}$ and the matrix multiplication tensors. If it did, the first inequality in~\eqref{eq:intro-chain} would be an equality as well, and $\omega_\FF=9/4$.

\subsection{Organisation of the paper}\label{sec:organisation}
In \cref{sec:tensors} we will define tensors, restrictions and degenerations. In \cref{sec:constructions} we will construct the degenerations of ingredients~\ruleref{D1}--\ruleref{D4}. In \cref{sec:spectral} we will introduce spectral points, obtain $\omega_\FF<2.3212$ from a tensor of Strassen as a first example, and prove \cref{thm:main} and \cref{thm:rect}. In \cref{sec:optimality} we will show that the current ingredients  of the proofs cannot give better bounds.

\section{Restriction and degeneration}\label{sec:tensors}

In this section we define tensors, restrictions and degenerations, and state the facts about them that we will use. From now on, $\FF$ is a field, and all vector spaces are finite-dimensional over~$\FF$. We use the convention $0^0=1$, and for an integer $n\geq0$ we write $[n]=\{1,\ldots,n\}$.

\subsection{Tensors}\label{sec:tensors-basic}
A \emph{tensor} over~$\FF$ is an element $T$ of a tensor product $U\otimes V\otimes W$ of three vector spaces $U,V$ and $W$, which we call the first, second and third \emph{leg} of $T$. For tensors $S\in U_1\otimes V_1\otimes W_1$ and $T\in U_2\otimes V_2\otimes W_2$, the \emph{direct sum} $S\oplus T$ lies in $(U_1\oplus U_2)\otimes(V_1\oplus V_2)\otimes(W_1\oplus W_2)$, and the \emph{tensor product} $S\otimes T$ lies in $(U_1\otimes U_2)\otimes(V_1\otimes V_2)\otimes(W_1\otimes W_2)$, after regrouping the factors.

We define a version of the direct sum in which the first leg is shared. For an integer $m\geq1$, vector spaces~$U$, $V_1,\ldots,V_m$ and $W_1,\ldots,W_m$, and tensors $T_i\in U\otimes V_i\otimes W_i$ for $i\in[m]$, their \emph{partial direct sum}\phantomsection\label{def:partial-direct-sum} is %
\[
T_1\oplusX\cdots\oplusX T_m\coloneqq T_1+\cdots+T_m\in U\otimes(V_1\oplus\cdots\oplus V_m)\otimes(W_1\oplus\cdots\oplus W_m).
\]%
\Cref{fig:partial-direct-sum} compares the partial direct sum with the direct sum.

Let~$\mathfrak{S}_3$ be the group of permutations of~$\{1,2,3\}$. For a tensor $T\in U\otimes V\otimes W$ and a permutation $\pi\in\mathfrak{S}_3$, let~$\pi\cdot T$ be the tensor obtained from~$T$ by permuting the factors, so that, for every $i\in\{1,2,3\}$, the~$i$th leg of~$T$ is the~$\pi(i)$th leg of~$\pi\cdot T$, as in~\cite[Section~2]{MR882307}.

For an integer $n\geq0$, let $e_1,\ldots,e_n$ be the standard basis of~$\FF^n$. The \emph{unit tensor} is $\langle n\rangle\coloneqq\sum_{i=1}^ne_i\otimes e_i\otimes e_i$. For integers $n_1,n_2,n_3\geq1$, the \emph{matrix multiplication tensor} is
\[
\langle n_1,n_2,n_3\rangle\coloneqq\sum_{i=1}^{n_1}\sum_{j=1}^{n_2}\sum_{k=1}^{n_3} E_{ij}\otimes E_{jk}\otimes E_{ki}\in\FF^{n_1\times n_2}\otimes\FF^{n_2\times n_3}\otimes\FF^{n_3\times n_1},
\]
where~$E_{ij}$ denotes the matrix with entry~$1$ at~$(i,j)$ and~$0$ elsewhere.

Generally, the tensor of a bilinear map $\mu\colon U\times V\to W$ is $T = \sum_{i,j}e_i^*\otimes e_j^*\otimes\mu(e_i,e_j)$ in $U^*\otimes V^*\otimes W$,
where the sums run over bases of~$U$ and~$V$ and~$e_i^*$ is the dual basis element of~$e_i$. We get~$\mu$ back from $T$ by evaluating the first two factors: $\mu(u,v)=\sum_{i,j}e_i^*(u)\,e_j^*(v)\,\mu(e_i,e_j)$ for all $u\in U$ and $v\in V$. %

\subsection{Restriction and tensor rank}\label{sec:tensors-rank}
Let $S\in U_1\otimes V_1\otimes W_1$ and $T\in U_2\otimes V_2\otimes W_2$ be tensors. Then~$T$ is a \emph{restriction} of~$S$, written $T\leq S$, if there are linear maps $A_1:U_1\to U_2$, $A_2:V_1\to V_2$ and $A_3:W_1\to W_2$ such that $T=(A_1\otimes A_2\otimes A_3)S$. If these maps can be chosen invertible, then~$S$ and~$T$ are \emph{isomorphic}, written $S\cong T$.

Two tensors are equivalent if each is a restriction of the other. Let~$\mathcal{S}_\FF$ be the set of equivalence classes. Strassen showed that direct sum and tensor product make~$\mathcal{S}_\FF$ a commutative semiring: both operations are associative and commutative and have identity elements, and the tensor product distributes over the direct sum~\cite[Section~2]{MR882307}. Restriction induces a partial order on~$\mathcal{S}_\FF$, and it is compatible with both operations: if $T_1\leq S_1$ and $T_2\leq S_2$, then $T_1\oplus T_2\leq S_1\oplus S_2$ and $T_1\otimes T_2\leq S_1\otimes S_2$~\cite[Section~3]{MR882307}.

The \emph{tensor rank} $\R_\FF(T)$ of a tensor $T\in U\otimes V\otimes W$ is the least integer $r\geq0$ such that~$T = \sum_{i=1}^r u_i \otimes v_i \otimes w_i$ with $u_i\in U$, $v_i\in V$ and $w_i\in W$ for $i \in [r]$. The \emph{exponent of matrix multiplication} over~$\FF$ is
\[
\omega_\FF\coloneqq\inf\{\tau\in\RR:\R_\FF(\langle n,n,n\rangle)=\mathcal{O}(n^\tau)\},
\]
as in~\cite[(1.3)]{MR882307}.

The \emph{flattening ranks} $\flatrank{1}(T)$, $\flatrank{2}(T)$ and~$\flatrank{3}(T)$ of a tensor $T\in U\otimes V\otimes W$ are the matrix ranks of the images of~$T$ in $U\otimes(V\otimes W)$, $V\otimes(U\otimes W)$ and $W\otimes(U\otimes V)$.

\begin{lemma}\label{lem:restriction-facts}
Let~$S$ and~$T$ be tensors over~$\FF$.
\begin{enumerate}[\upshape(i)]
\item\label{rf:unit} If $T\neq0$, then $\langle1\rangle\leq T$.
\item\label{rf:rank} The tensor rank $\R_\FF(T)$ is the least integer $r\geq0$ such that $T\leq\langle r\rangle$.
\item\label{rf:product} For all integers $n_1,n_2,n_3,m_1,m_2,m_3\geq1$, we have 
\[
\langle n_1,n_2,n_3\rangle\otimes\langle m_1,m_2,m_3\rangle\cong\langle n_1m_1,n_2m_2,n_3m_3\rangle.
\]
\item\label{rf:permutation} For every $\pi\in\mathfrak{S}_3$, we have $\pi\cdot(S\oplus T)\cong(\pi\cdot S)\oplus(\pi\cdot T)$, $\pi\cdot(S\otimes T)\cong(\pi\cdot S)\otimes(\pi\cdot T)$ and $\pi\cdot\langle1\rangle=\langle1\rangle$, and if $T\leq S$, then $\pi\cdot T\leq\pi\cdot S$.
\item\label{rf:cyclic} For all integers $n_1,n_2,n_3\geq1$,
\[
\begin{aligned}
(123)\cdot\langle n_1,n_2,n_3\rangle&=\langle n_3,n_1,n_2\rangle, &\qquad (12)\cdot\langle n_1,n_2,n_3\rangle&\cong\langle n_3,n_2,n_1\rangle,\\
(132)\cdot\langle n_1,n_2,n_3\rangle&=\langle n_2,n_3,n_1\rangle, & (13)\cdot\langle n_1,n_2,n_3\rangle&\cong\langle n_1,n_3,n_2\rangle,\\
&& (23)\cdot\langle n_1,n_2,n_3\rangle&\cong\langle n_2,n_1,n_3\rangle.
\end{aligned}
\]
Hence $\bigotimes_{\pi\in\mathfrak{S}_3}\pi\cdot\langle n_1,n_2,n_3\rangle\cong\langle n_1n_2n_3,n_1n_2n_3,n_1n_2n_3\rangle^{\otimes2}$.
\end{enumerate}
\end{lemma}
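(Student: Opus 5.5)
The plan is to check each item directly from the definitions, writing explicit linear maps on the three legs.

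For \ref{rf:unit}: since $T\neq0$, write $T=\sum_{i,j,k}t_{ijk}\,u_i\otimes v_j\otimes w_k$ in product bases with some $t_{i_0j_0k_0}\neq0$. Take $A_1$ to be the dual coordinate functional $u_{i_0}^*$ scaled by $t_{i_0j_0k_0}^{-1}$, and take $A_2=v_{j_0}^*$ and $A_3=w_{k_0}^*$, all viewed as maps to $\FF=\FF^1$. Then $(A_1\otimes A_2\otimes A_3)T=e_1\otimes e_1\otimes e_1=\langle1\rangle$.

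For \ref{rf:rank}: if $T=\sum_{i=1}^r u_i\otimes v_i\otimes w_i$, define $A_1e_i=u_i$, $A_2e_i=v_i$ and $A_3e_i=w_i$; these send $\langle r\rangle$ to $T$. Conversely, if $T=(A_1\otimes A_2\otimes A_3)\langle r\rangle$, then $T=\sum_i A_1e_i\otimes A_2e_i\otimes A_3e_i$ is a sum of $r$ rank-one terms. So the two minima coincide.

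For \ref{rf:product}: use the isomorphisms $\FF^{n_1\times n_2}\otimes\FF^{m_1\times m_2}\cong\FF^{n_1m_1\times n_2m_2}$ given by the Kronecker product, $E_{ij}\otimes E_{i'j'}\mapsto E_{(i,i'),(j,j')}$, applied on each leg. A term $E_{ij}\otimes E_{jk}\otimes E_{ki}$ tensored with $E_{i'j'}\otimes E_{j'k'}\otimes E_{k'i'}$ then becomes $E_{IJ}\otimes E_{JK}\otimes E_{KI}$ with paired indices $I=(i,i')$, $J=(j,j')$, $K=(k,k')$. Summing over all indices gives exactly $\langle n_1m_1,n_2m_2,n_3m_3\rangle$.

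For \ref{rf:permutation}: the leg permutation is a linear isomorphism of the ambient spaces. It commutes with the regroupings that define $\oplus$ and $\otimes$, and it fixes $e_1\otimes e_1\otimes e_1$. If $T=(A_1\otimes A_2\otimes A_3)S$, then $\pi\cdot T$ is obtained from $\pi\cdot S$ by the same maps placed on the permuted legs.

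For \ref{rf:cyclic}, which is the only step needing care, the main obstacle is bookkeeping of the convention. The $i$th leg of $T$ is the $\pi(i)$th leg of $\pi\cdot T$, so for $\pi=(123)$ the term $E_{ij}\otimes E_{jk}\otimes E_{ki}$ becomes $E_{ki}\otimes E_{ij}\otimes E_{jk}$. With $k\in[n_3]$, $i\in[n_1]$ and $j\in[n_2]$, this is literally the defining sum of $\langle n_3,n_1,n_2\rangle$ after renaming indices, and $(132)$ is handled the same way.

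For the transpositions, take $(12)$ as the model case. It gives $E_{jk}\otimes E_{ij}\otimes E_{ki}$. Apply transposition $X\mapsto X^{\top}$ on every leg; this is an isomorphism of each leg. The term becomes $E_{kj}\otimes E_{ji}\otimes E_{ik}$, which after relabelling is $\langle n_3,n_2,n_1\rangle$. The cases $(13)$ and $(23)$ are identical computations: permute, then transpose all three legs.

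Finally, for the symmetrisation: the six permutations produce matrix multiplication tensors whose formats are each permutation of $(n_1,n_2,n_3)$ exactly once, up to isomorphism. By \ref{rf:product}, their tensor product is $\langle N,N,N\rangle^{\otimes 2}$ with $N=n_1n_2n_3$. To see this, note that each coordinate position receives each $n_i$ exactly twice, so each coordinate of the product format equals $N^2$, and $\langle N^2,N^2,N^2\rangle\cong\langle N,N,N\rangle^{\otimes2}$ again by \ref{rf:product}. Isomorphisms are compatible with $\otimes$, which justifies replacing each factor by its isomorphic copy.
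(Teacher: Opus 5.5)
Your proposal is correct and follows essentially the same route as the paper: the same Kronecker-product maps on each leg for (iii), the same observation that permuting legs commutes with $\oplus$, $\otimes$ and linear maps for (iv), and the same permute-then-transpose computation for (v), followed by (iii) for the symmetrisation. The only difference is that you prove (i) and (ii) directly from the definitions, where the paper cites Strassen; your direct arguments are fine.
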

\begin{proof}
Part~(\ref{rf:unit}) is in~\cite[Section~3, before Proposition~3.3]{MR882307}, and part~(\ref{rf:rank}) is~\cite[Proposition~3.2]{MR882307}. For part~(\ref{rf:product}), the maps $E_{ij}\otimes E_{k\ell}\mapsto E_{(i,k),(j,\ell)}$ on the three legs give the isomorphism. Part~(\ref{rf:permutation}) holds since permuting the factors of a tensor product commutes with direct sums, tensor products and linear maps on the factors. For part~(\ref{rf:cyclic}), write $\langle n_1,n_2,n_3\rangle=\sum_{i,j,k}E_{ij}\otimes E_{jk}\otimes E_{ki}$. The permutation~$(123)$ moves the third leg to the front, which gives $\sum_{i,j,k}E_{ki}\otimes E_{ij}\otimes E_{jk}=\langle n_3,n_1,n_2\rangle$, and~$(132)$ moves the first leg to the back, which gives $\sum_{i,j,k}E_{jk}\otimes E_{ki}\otimes E_{ij}=\langle n_2,n_3,n_1\rangle$. The permutation~$(12)$ gives $\sum_{i,j,k}E_{jk}\otimes E_{ij}\otimes E_{ki}$, and transposing the matrices on all three legs turns it into $\sum_{i,j,k}E_{kj}\otimes E_{ji}\otimes E_{ik}=\langle n_3,n_2,n_1\rangle$. In the same way, for each transposition, exchanging the two legs and transposing the matrices keeps the parameter that the two legs have in common in its place and exchanges the other two: legs~$1$ and~$3$ have~$n_1$ in common, and legs~$2$ and~$3$ have~$n_3$. The six tensors~$\pi\cdot\langle n_1,n_2,n_3\rangle$ are thus isomorphic to the six tensors $\langle n_{\sigma(1)},n_{\sigma(2)},n_{\sigma(3)}\rangle$ with $\sigma\in\mathfrak{S}_3$, and by part~(\ref{rf:product}) their tensor product is isomorphic to $\langle n_1n_2n_3,n_1n_2n_3,n_1n_2n_3\rangle^{\otimes2}$.
\end{proof}

\subsection{Degeneration and border rank}\label{sec:degenerations}
Let~$\eps$ be an indeterminate over~$\FF$, let $S\in U_1\otimes V_1\otimes W_1$ and $T\in U_2\otimes V_2\otimes W_2$ be tensors, and choose bases of their legs. Then~$T$ is a \emph{degeneration} of~$S$, written $T\degen S$, if there are integers $q,d\geq0$, matrices~$A_1$, $A_2$ and~$A_3$ with entries in~$\FF[\eps]$ of sizes $\dim U_2\times\dim U_1$, $\dim V_2\times\dim V_1$ and $\dim W_2\times\dim W_1$, and tensors $T_1,\ldots,T_d\in U_2\otimes V_2\otimes W_2$, such that
\[
(A_1\otimes A_2\otimes A_3)S=\eps^qT+\eps^{q+1}T_1+\cdots+\eps^{q+d}T_d.
\]
This is the polynomial formulation of Strassen's definition~\cite[Lemmas~5.5 and~5.9]{MR882307}; see also the book of Bürgisser, Clausen and Shokrollahi~\cite[Definition~15.19]{burgisser1997algebraic}. The \emph{border rank} of a tensor~$T$ is the least integer $r\geq0$ such that $T\degen\langle r\rangle$.

From now on, $z$ is an indeterminate over~$\FF$. The field~$\FF(z)$ of rational functions in~$z$ with coefficients in~$\FF$ is an infinite extension field of~$\FF$, and every tensor over~$\FF$ is also a tensor over~$\FF(z)$.

\begin{lemma}[{Strassen~\cite[Lemmas~5.6 and~5.9]{MR882307}}]\label{lem:degen-facts}
Let~$S$ and~$T$ be tensors over~$\FF$.
\begin{enumerate}[\upshape(i)]
\item\label{df:restriction} If $T\leq S$, then $T\degen S$.
\item\label{df:transitive} $\degen$ is transitive.
\item\label{df:product} For all tensors $S_1$, $T_1$, $S_2$ and~$T_2$ over~$\FF$ with $T_1\degen S_1$ and $T_2\degen S_2$, we have $T_1\oplus T_2\degen S_1\oplus S_2$ and $T_1\otimes T_2\degen S_1\otimes S_2$.
\item\label{df:extension} If $T\leq S$ as tensors over~$\FF(z)$, then $T\degen S$.
\item\label{df:permutation} For every $\pi\in\mathfrak{S}_3$, if $T\degen S$, then $\pi\cdot T\degen\pi\cdot S$.
\end{enumerate}
\end{lemma}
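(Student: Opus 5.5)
The five parts of the lemma can be handled one at a time, directly from the polynomial definition of degeneration. Throughout, we fix bases and write a degeneration $T\degen S$ as $(A_1\otimes A_2\otimes A_3)S=\eps^qT+O(\eps^{q+1})$, where $O(\eps^{q+1})$ stands for a finite sum $\sum_{k\ge1}\eps^{q+k}T_k$.

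\textbf{Parts (\ref{df:restriction}) and (\ref{df:permutation}).} Part~(\ref{df:restriction}) is immediate: take the constant matrices of the restriction, with $q=d=0$. For part~(\ref{df:permutation}), apply the permutation of tensor factors to both sides of the defining identity. This permutation commutes with $A_1\otimes A_2\otimes A_3$ once the matrices are permuted accordingly, as in \cref{lem:restriction-facts}(\ref{rf:permutation}). It is also $\FF[\eps]$-linear, so $\pi\cdot S$ is mapped to $\eps^q\,\pi\cdot T+O(\eps^{q+1})$.

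\textbf{Part (\ref{df:transitive}).} Suppose $(B_i)\,S=\eps^{q}T+\sum_k\eps^{q+k}T_k$ and $(A_i)\,T'=\eps^{p}S+\sum_l\eps^{p+l}S_l$. Apply $B_1\otimes B_2\otimes B_3$ to the second identity. Its main term becomes $\eps^{p+q}T+O(\eps^{p+q+1})$. Each error term $\eps^{p+l}(B_i)S_l$ has entries in $\FF[\eps]$ and is divisible by $\eps^{p+1}$, but not necessarily by $\eps^{p+q+1}$, so the orders need to be rebalanced. The standard fix is to substitute $\eps\mapsto\eps^{N}$ in the second degeneration, for $N$ larger than $q+\deg$ of the $B$'s. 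Then every error term has order at least $N(p+1)>Np+q$, and we obtain $(B_iA_i(\eps^N))\,T'=\eps^{Np+q}T+$ higher-order terms. This order bookkeeping is the only step where some care is needed.

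\textbf{Part (\ref{df:product}).} For the direct sum, first equalise the exponents: if $q_1<q_2$, multiply $A_1$ of the first degeneration by $\eps^{q_2-q_1}$. Then take block-diagonal matrices, which gives $T_1\oplus T_2$ with leading order $q_2$. For the tensor product, take the Kronecker products $A_i\otimes A_i'$; expanding the product gives leading term $\eps^{q_1+q_2}T_1\otimes T_2$.

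\textbf{Part (\ref{df:extension}).} A restriction over $\FF(z)$ is given by matrices with rational entries in $z$. Clearing a common denominator $g(z)$ multiplies the tensor by a nonzero polynomial $g(z)^3$, so we may assume the maps $C_i(z)$ have entries in $\FF[z]$ and satisfy $(C_i(z))S=h(z)T$ with $h\neq0$. Write $h(z)=z^{m}h_0(z)$ with $h_0(0)\neq0$. Replacing $z$ by $\eps$ would give leading coefficient $h_0(0)z^m$ only if $h_0$ were constant, so we substitute $z=c+\eps$ is not available over finite $\FF$. Instead we substitute $z=\eps$ directly. Then $(C_i(\eps))S=\eps^m h_0(\eps)T=\eps^m h_0(0)T+O(\eps^{m+1})$, and we rescale $C_1$ by $h_0(0)^{-1}$. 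This yields $T\degen S$.
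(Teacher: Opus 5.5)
Your proposal is correct and follows the paper's proof. Parts (i), (iv) and (v) match it step for step, with your common denominator $g^3$ playing the role of the paper's $p_1p_2p_3$. For (ii) and (iii), which the paper only cites from Strassen, you give the standard arguments; your substitution $\eps\mapsto\eps^N$ is exactly the paper's remark on transitivity of $\degenTT$, and any $N>q$ already suffices, since the degrees of the $B_i$ play no role.

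The one blemish is the garbled sentence in part (iv). It claims that substituting $z=\eps$ gives the right leading term only when $h_0$ is constant, and mentions $z=c+\eps$. That claim is false and your own next lines contradict it, so just delete the sentence.
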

\begin{proof}
Part~(\ref{df:restriction}) holds with $q=d=0$ and the matrices of the restriction. Parts~(\ref{df:transitive}) and~(\ref{df:product}) are~\cite[Lemma~5.6(ii) and~(iii)]{MR882307}. 
Part~(\ref{df:permutation}) follows from the definition of degeneration, by permuting the three matrices.

Part~(\ref{df:extension}) is the implication from~(ii) to~(i) in~\cite[Lemma~5.9]{MR882307}, with $g=0$ and~$z$ in the role of~$\eps$. We give a proof for convenience of the reader.
Suppose that $T=(A_1\otimes A_2\otimes A_3)S$ for matrices~$A_i$ with entries in~$\FF(z)$. Choose nonzero polynomials $p_1,p_2,p_3\in\FF[z]$ such that the matrices $B_i=p_iA_i$ have entries in~$\FF[z]$. Then $(B_1\otimes B_2\otimes B_3)S=fT$, where $f=p_1p_2p_3$ is a nonzero polynomial. %
Write $f(z)=\lambda z^q+z^{q+1}g(z)$ with $\lambda\in\FF\setminus\{0\}$, an integer $q\geq0$ and $g\in\FF[z]$. Then
\[
(\lambda^{-1}B_1\otimes B_2\otimes B_3)S=z^qT+z^{q+1}\lambda^{-1}g(z)T,
\]
and replacing~$z$ by~$\eps$ shows that $T\degen S$.
\end{proof}

\subsection{Degeneration from block triangular tensors and weights}\label{sec:monomial}
We will give a degeneration from a block triangular tensor to its diagonal blocks, and then a sufficient condition for degeneration in terms of weights. The first degeneration acts only on the second and third legs, in the following sense.

Let $S\in U\otimes V_1\otimes W_1$ and $T\in U\otimes V_2\otimes W_2$ be tensors with the same first leg~$U$, and choose bases of their legs, with the same basis of~$U$ for both. Then~$T$ is a \emph{degeneration of~$S$ on the second and third legs}, written $T\degenTT S$, if $T\degen S$ with the identity matrix~$I$ as~$A_1$ (\cref{sec:degenerations}), that is, if
\[
(I\otimes A_2\otimes A_3)S=\eps^qT+\eps^{q+1}T_1+\cdots+\eps^{q+d}T_d
\]
for integers $q,d\geq0$, matrices~$A_2$ and~$A_3$ with entries in~$\FF[\eps]$, and tensors $T_1,\ldots,T_d\in U\otimes V_2\otimes W_2$. Like~$\degen$ (\cref{lem:degen-facts}(\ref{df:transitive})), the relation~$\degenTT$ is transitive: after~$\eps$ is replaced by a suitable power of~$\eps$ in the degeneration that is applied first, the matrices of two such degenerations compose leg by leg, so the matrix on the first leg stays the identity.

\begin{lemma}[{Strassen~\cite[Proposition~6.2]{MR882307}}]
\label{lem:triangular degeneration}
Let~$U$, $V_1$, $V_2$, $W_1$ and~$W_2$ be vector spaces. Write $T \in U \otimes (V_1 \oplus V_2) \otimes (W_1 \oplus W_2)$ as $T = T_{1,1} + T_{1,2} + T_{2,1} + T_{2,2}$ with $T_{i,j} \in U \otimes V_i \otimes W_j$. If $T_{2,1} = 0$, then $T \degengeqTT T_{1,1}\oplusX T_{2,2}$. In block form, with~$T_{i,j}$ in row~$i$ and column~$j$, this reads
\[
\begin{pmatrix}T_{1,1}&T_{1,2}\\0&T_{2,2}\end{pmatrix}\degengeqTT\begin{pmatrix}T_{1,1}&0\\0&T_{2,2}\end{pmatrix}.
\]
\end{lemma}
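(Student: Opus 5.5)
The plan is to write down an explicit degeneration on the second and third legs that multiplies the off-diagonal block by $\eps$ and leaves the diagonal blocks untouched. On the first leg we take the identity $I$, as the definition of $\degenTT$ requires. On the second leg $V_1\oplus V_2$ we take the block-diagonal matrix $A_2=\operatorname{diag}(I_{V_1},\eps I_{V_2})$. On the third leg $W_1\oplus W_2$ we take $A_3=\operatorname{diag}(\eps I_{W_1},I_{W_2})$. All entries of these matrices lie in $\FF[\eps]$.

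Next we compute the effect of $I\otimes A_2\otimes A_3$ on each block $T_{i,j}\in U\otimes V_i\otimes W_j$. The block $T_{i,j}$ gets multiplied by $\eps^{[i=2]+[j=1]}$. This gives:
\begin{itemize}
\item $T_{1,1}$ is multiplied by $\eps$;
\item $T_{2,2}$ is multiplied by $\eps$;
\item $T_{1,2}$ is multiplied by $\eps^0=1$;
\item $T_{2,1}$ would be multiplied by $\eps^2$, but $T_{2,1}=0$.
\end{itemize}
So $(I\otimes A_2\otimes A_3)T=\eps(T_{1,1}+T_{2,2})+T_{1,2}$.

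This is the wrong way round: the unwanted block $T_{1,2}$ carries the lowest power of $\eps$. So I would instead use the opposite scaling, $A_2=\operatorname{diag}(\eps I,I)$ and $A_3=\operatorname{diag}(I,\eps I)$. Now the exponent of $\eps$ on $T_{i,j}$ is $[i=1]+[j=2]$. This gives:
\begin{itemize}
\item $T_{1,1}$ is multiplied by $\eps$;
\item $T_{2,2}$ is multiplied by $\eps$;
\item $T_{1,2}$ is multiplied by $\eps^2$;
\item $T_{2,1}$ would be multiplied by $\eps^0=1$, but $T_{2,1}=0$ by hypothesis.
\end{itemize}
Hence
\[
(I\otimes A_2\otimes A_3)T=\eps\,(T_{1,1}+T_{2,2})+\eps^2\,T_{1,2}.
\]

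Finally we identify $T_{1,1}+T_{2,2}$, viewed as an element of $U\otimes(V_1\oplus V_2)\otimes(W_1\oplus W_2)$, with $T_{1,1}\oplusX T_{2,2}$, directly from the definition of the partial direct sum. This is the required form with $q=1$, $d=1$ and $T_1=T_{1,2}$, so $T_{1,1}\oplusX T_{2,2}\degenTT T$.

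The only genuine point is choosing the $\eps$-weights so that the vanishing block is the one receiving the lowest exponent. Everything else is bookkeeping of which block lies in which summand, and there is no real obstacle.
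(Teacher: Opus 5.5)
Your proof is correct and is the paper's proof: your final choice $A_2=\eps I_{V_1}\oplus I_{V_2}$, $A_3=I_{W_1}\oplus\eps I_{W_2}$ with the identity on $U$ is exactly theirs, and it yields $\eps(T_{1,1}+T_{2,2})+\eps^2T_{1,2}$. The first, wrong-way scaling plays no role in the argument and can simply be deleted from the write-up.
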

\begin{proof}
    Let $B(\eps)=\eps I_{V_1}\oplus I_{V_2}$ and $C(\eps)=I_{W_1}\oplus\eps I_{W_2}$.
    Then $(I\otimes B(\eps)\otimes C(\eps))T=\eps(T_{1,1}+T_{2,2})+\eps^2T_{1,2}+T_{2,1}$, which is $\eps(T_{1,1}+T_{2,2})+\eps^2T_{1,2}$ since $T_{2,1}=0$.
\end{proof}
\Cref{fig:triangular} draws \cref{lem:triangular degeneration} in three dimensions. Strassen applied the general version, for filtrations, to show that every $n$-dimensional algebra with unit degenerates to the null algebra, in which $u_iu_n=u_nu_i=u_i$ and $u_iu_j=0$ for $i,j<n$, and that $\langle\lfloor(h+1)^2/4\rfloor\rangle\degen D_h\degen\langle h,h,h\rangle$ for the algebra~$D_h$ of upper triangular $h\times h$ matrices~\cite[Section~6]{MR882307}.

\begin{figure}[t]
\centering
\begin{tikzpicture}[x={(1cm,0cm)},y={(0cm,-1cm)},z={(0.5cm,0.35cm)},line join=round,thin]
\begin{scope}
\psumframeback{2}{1.8}{1}
\psumbox{psumB}{1.2}{1}{0}{2}{1.8}{1}
\psumbox{psumA}{0}{0}{0}{1.2}{1}{1}
\psumbox{psumC}{1.2}{0}{0}{2}{1}{1}
\psumframefront{2}{1.8}{1}
\node at (0.6,0.5,0) {$T_{1,1}$};
\node at (1.6,0.5,0) {$T_{1,2}$};
\node at (1.6,1.4,0) {$T_{2,2}$};
\node[left] at (0,0.5,0) {$V_1$};
\node[left] at (0,1.4,0) {$V_2$};
\node[above] at (0.6,0,1) {$W_1$};
\node[above] at (1.6,0,1) {$W_2$};
\node[above left] at (0,0,0.5) {$U$};
\node[below=2pt] at (1,1.8,0) {$T$};
\end{scope}
\node at (4,0.75,0) {$\degengeqTT$};
\begin{scope}[xshift=5.5cm]
\psumframeback{2}{1.8}{1}
\psumbox{psumA}{0}{0}{0}{1.2}{1}{1}
\psumbox{psumB}{1.2}{1}{0}{2}{1.8}{1}
\psumframefront{2}{1.8}{1}
\node at (0.6,0.5,0) {$T_{1,1}$};
\node at (1.6,1.4,0) {$T_{2,2}$};
\node[below=2pt] at (1,1.8,0) {$T_{1,1}\oplus_1T_{2,2}$};
\end{scope}
\end{tikzpicture}
\caption{\Cref{lem:triangular degeneration}, with the tensors drawn as three-dimensional arrays: the first leg~$U$ points into the page, the second leg $V_1\oplus V_2$ downwards and the third leg $W_1\oplus W_2$ to the right. The coefficients of~$T_{i,j}$ lie in the box in block row~$i$ and block column~$j$, and all other coefficients are zero. The tensor~$T$ (left) has no box in block row~$2$ and block column~$1$, and degenerates on the second and third legs to the sum $T_{1,1}+T_{2,2}$ of its diagonal blocks (right).}
\label{fig:triangular}
\end{figure}
Since the decompositions $V_1\oplus V_2$ and $W_1\oplus W_2$ are arbitrary, \cref{lem:triangular degeneration} also applies to a tensor with more blocks that has this block form after its block rows and block columns are permuted and grouped. We will use this for three blocks.

\begin{lemma}\label{lem:three-block}
Let~$U$, $V_1$, $V_2$, $V_3$, $W_1$, $W_2$ and~$W_3$ be vector spaces. Write $T\in U\otimes(V_1\oplus V_2\oplus V_3)\otimes(W_1\oplus W_2\oplus W_3)$ as $T=\sum_{i,j=1}^3T_{i,j}$ with $T_{i,j}\in U\otimes V_i\otimes W_j$. If $T_{i,j}=0$ for all distinct $i,j\in[3]$ with $i\neq2$, then $T\degengeqTT T_{1,1}\oplusX T_{2,2}\oplusX T_{3,3}$. In block form, with~$T_{i,j}$ in row~$i$ and column~$j$, this reads
\[
\begin{pmatrix}T_{1,1}&0&0\\T_{2,1}&T_{2,2}&T_{2,3}\\0&0&T_{3,3}\end{pmatrix}\degengeqTT\begin{pmatrix}T_{1,1}&0&0\\0&T_{2,2}&0\\0&0&T_{3,3}\end{pmatrix}.
\]
\end{lemma}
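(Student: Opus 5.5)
We apply \cref{lem:triangular degeneration} twice, after regrouping the block rows and block columns, and compose the two degenerations using the transitivity of~$\degenTT$.

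\textbf{First application.} Group the blocks as $V_1'=V_1\oplus V_2$ and $V_2'=V_3$ on the second leg, and $W_1'=W_1\oplus W_2$ and $W_2'=W_3$ on the third leg. Then the $(2,1)$ block of~$T$ in this coarser decomposition is $T_{3,1}+T_{3,2}$. This vanishes by hypothesis, since $i=3\neq2$ and $j\neq3$. So \cref{lem:triangular degeneration} gives
\[
T\degengeqTT T'\oplusX T_{3,3},
\]
where $T'=T_{1,1}+T_{1,2}+T_{2,1}+T_{2,2}$. Since $T_{1,2}=0$, we have $T'=T_{1,1}+T_{2,1}+T_{2,2}$. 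In other words, this step kills the block~$T_{2,3}$.

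\textbf{Second application.} It remains to degenerate $T'\oplusX T_{3,3}$ to $T_{1,1}\oplusX T_{2,2}\oplusX T_{3,3}$. Here the triangular structure is the transpose of the one in \cref{lem:triangular degeneration}: the nonzero off-diagonal block~$T_{2,1}$ sits below the diagonal, and the block above it vanishes. So we reorder the blocks, taking $V_2$ as the first block row and $V_1$ as the second, and likewise $W_2$ as the first block column and $W_1$ as the second. In this ordering $T_{2,1}$ becomes the upper-right block, and the lower-left block is $T_{1,2}=0$. \cref{lem:triangular degeneration} then kills~$T_{2,1}$.

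\textbf{Carrying $T_{3,3}$ along.} We keep $V_3$ and $W_3$ untouched in this step. Concretely, we group $V_2$ with~$V_3$ and $W_2$ with~$W_3$, and check that the relevant zero block is still zero: it is $T_{1,2}+T_{1,3}=0$. Alternatively, we extend the matrices $B(\eps)$ and $C(\eps)$ from the proof of \cref{lem:triangular degeneration} by the scalar multiples $\eps I$ on~$V_3$ and $I$ on~$W_3$. This scales $T_{3,3}$ by~$\eps$, the same power as the diagonal blocks.

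Composing the two degenerations by transitivity of~$\degenTT$ gives the claim. Alternatively, the whole statement follows in one step. Take $A_2=\eps I_{V_1}\oplus I_{V_2}\oplus\eps^2 I_{V_3}$ and $A_3=\eps I_{W_1}\oplus \eps^2 I_{W_2}\oplus I_{W_3}$. Then every diagonal block gets the factor~$\eps^2$, the blocks $T_{2,1}$ and $T_{2,3}$ get the factor~$\eps^3$, and all other blocks are zero by hypothesis.

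The only delicate point is the bookkeeping of which off-diagonal blocks are zero in each grouping. The single-step exponent choice above makes this transparent, so I would present that version, with a check of the exponents $\eps^{a_i+b_j}$: we need $a_i+b_i=2$ for all~$i$ and $a_2+b_j>2$ for $j\neq2$. With $a=(1,0,2)$ and $b=(1,2,0)$ we get $a_2+b_1=1$, which fails. So we must instead choose $a=(0,1,2)$ and $b=(2,1,0)$, which gives $a_2+b_1=3$ and $a_2+b_3=1$; this also fails. Hence no single scaling works, since $T_{2,1}$ and $T_{2,3}$ pull in opposite directions. This confirms that the two-step argument via transitivity is the right approach, and it is the one I would present.
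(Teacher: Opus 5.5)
Your two-step argument is correct, and it takes a different route from the paper. In your first step the block that must vanish is $T_{3,1}+T_{3,2}=0$, and the removed block is $T_{1,3}+T_{2,3}$. In your second step, with $V_2\oplus V_3$ before $V_1$ and $W_2\oplus W_3$ before $W_1$, the block that must vanish is $T_{1,2}+T_{1,3}=0$, and the removed block is $T_{2,1}$. The surviving block $U\otimes(V_2\oplus V_3)\otimes(W_2\oplus W_3)$ equals $T_{2,2}+T_{3,3}$ only because the first step already removed $T_{2,3}$, so the order of your steps matters; you have it right. Transitivity of $\degenTT$ is recorded in the paper, so composing is fine.

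The paper instead uses a single application of \cref{lem:triangular degeneration}, with the middle block moved to the front: the decompositions are $V_2\oplus(V_1\oplus V_3)$ and $W_2\oplus(W_1\oplus W_3)$. The block that must vanish is then $T_{1,2}+T_{3,2}=0$, and the removed block is $T_{2,1}+T_{2,3}$. The other diagonal block, $T_{1,1}+T_{1,3}+T_{3,1}+T_{3,3}$, is $T_{1,1}\oplusX T_{3,3}$ because $T_{1,3}=T_{3,1}=0$. The point is that both possibly nonzero off-diagonal blocks lie in the same block row, so one grouping already makes the tensor $2\times2$ block triangular, and no transitivity is needed.

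Your final paragraph, however, is wrong and should be dropped. First, for $a=(1,0,2)$ and $b=(1,2,0)$ the blocks $T_{2,1}$ and $T_{2,3}$ get $\eps^{1}$ and $\eps^{0}$, not $\eps^{3}$. More importantly, a single scaling does exist. If $a_i+b_i=q$ for all $i$, then $a_2+b_j>q$ is equivalent to $a_2>a_j$. So $T_{2,1}$ and $T_{2,3}$ both ask for the same thing, namely the largest exponent on $V_2$: they pull in the same direction, not opposite ones.

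For example, take $A_2=I_{V_1}\oplus\eps I_{V_2}\oplus I_{V_3}$ and $A_3=\eps I_{W_1}\oplus I_{W_2}\oplus\eps I_{W_3}$. This gives $\eps$ on the three diagonal blocks and $\eps^{2}$ on $T_{2,1}$ and $T_{2,3}$, which is precisely what the paper's single application of the lemma does. Your own composite is also a single scaling: adding the exponents of your two steps gives $a=(1,2,1)$ and $b=(1,0,1)$. That is the choice above with an extra overall factor $\eps$ on the second leg.
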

\begin{proof}
Apply \cref{lem:triangular degeneration} to the decompositions $V_2\oplus(V_1\oplus V_3)$ of the second leg and $W_2\oplus(W_1\oplus W_3)$ of the third leg; this moves the middle block row and block column to the front and groups the other two. The component of~$T$ in $U\otimes(V_1\oplus V_3)\otimes W_2$ is $T_{1,2}+T_{3,2}=0$. So the lemma gives $T\degengeqTT T_{2,2}\oplusX(T_{1,1}+T_{1,3}+T_{3,1}+T_{3,3})$, which is $T_{1,1}\oplusX T_{2,2}\oplusX T_{3,3}$ since $T_{1,3}=T_{3,1}=0$.
\end{proof}

The following lemma states \cref{lem:triangular degeneration} in the language of bilinear maps and quotients.
\begin{lemma}[{Strassen~\cite[Proposition~6.2]{MR882307}}]\label{lem:restriction-quotient}
Let $\mu\colon U\times V\to W$ be a bilinear map, and let $V_1\subseteq V$ and $W_1\subseteq W$ be subspaces with $\mu(U\times V_1)\subseteq W_1$. Let $\mu_1$ be the induced map of $\mu$ on the quotients,
\[
\mu_1\colon U\times(V/V_1)\to W/W_1,
\]
and let $\mu_2$ be the induced map of $\mu$ on the subspaces,
\[
\mu_2\colon U\times V_1\to W_1.
\]
Let $T$, $T_1$ and~$T_2$ be the tensors of~$\mu$, $\mu_1$ and~$\mu_2$. Then
\[
T\degengeqTT T_1\oplusX T_2.
\]
\end{lemma}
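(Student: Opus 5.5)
The plan is to reduce the statement to \cref{lem:triangular degeneration} by choosing bases adapted to the flag of subspaces. Choose complements $V_2$ of $V_1$ in $V$ and $W_2$ of $W_1$ in $W$, so that $V=V_1\oplus V_2$ and $W=W_1\oplus W_2$, with $V_2\cong V/V_1$ and $W_2\cong W/W_1$ via the quotient maps. The tensor of $\mu$ lies in $U^*\otimes V^*\otimes W$. Dualising the second leg, we get $V^*=V_2^\perp\oplus V_1^\perp$, where $V_1^\perp\cong(V/V_1)^*$ and $V_2^\perp\cong V_1^*$. Concretely, take a basis of $V$ consisting of a basis of $V_1$ followed by a basis of $V_2$, and take the dual basis. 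Likewise take a basis of $W$ consisting of a basis of $W_1$ followed by a basis of $W_2$.

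Next I write $T=\sum_{i,j}T_{i,j}$ according to these decompositions. Here the second-leg index records whether the input vector lies in $V_1$ or $V_2$, and the third-leg index whether the output component lies in $W_1$ or $W_2$. The hypothesis $\mu(U\times V_1)\subseteq W_1$ says exactly that the block with input in $V_1$ and output in $W_2$ vanishes. This is one off-diagonal block, and after ordering the blocks appropriately it is the $T_{2,1}$ of \cref{lem:triangular degeneration}.

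I then identify the two diagonal blocks. The block with input in $V_1$ and output in $W_1$ is $\sum e_i^*\otimes f_j^*\otimes\mu(e_i,f_j)$ with $f_j$ ranging over the basis of $V_1$. Its values lie in $W_1$, so this is precisely the tensor $T_2$ of $\mu_2$. The block with input in $V_2$ and output in $W_2$ is $\sum e_i^*\otimes g_j^*\otimes \mathrm{pr}_{W_2}\mu(e_i,g_j)$. Under $V_2\cong V/V_1$ and $W_2\cong W/W_1$, the projection onto $W_2$ along $W_1$ is the quotient map, so this block is isomorphic to $T_1$. The isomorphism acts only on the second and third legs, so it is compatible with $\degenTT$.

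\Cref{lem:triangular degeneration} then gives $T\degengeqTT T_1\oplusX T_2$, since the first leg $U^*$ is untouched throughout. The only point that needs care is orientation. One has to check that the vanishing block sits in the position $T_{2,1}$ rather than $T_{1,2}$ for the chosen ordering, which means ordering the blocks as (quotient part, subspace part) on both legs or the reverse. Since the lemma is symmetric under simultaneously swapping the two block labels on both legs, either consistent ordering works. I expect no real obstacle beyond this bookkeeping.
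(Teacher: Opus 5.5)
Your proof is correct and is essentially the paper's proof. Both choose complements $V_2$ and $W_2$, observe that $\mu(U\times V_1)\subseteq W_1$ kills the component in $U^*\otimes V_1^*\otimes W_2$, identify the two diagonal components with $T_2$ and (via $V_2\cong V/V_1$, $W_2\cong W/W_1$) with $T_1$, and apply \cref{lem:triangular degeneration}. The paper uses your first ordering, with the quotient part before the subspace part on both legs ($V_2^*,V_1^*$ and $W_2,W_1$), which places the vanishing block directly at $T_{2,1}$; your remark that the reverse ordering also works needs the relabelling you mention, since the lemma as stated requires $T_{2,1}=0$.
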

\begin{proof}
Strassen's proposition, applied to the filtrations $V\supseteq V_1\supseteq0$ and $W\supseteq W_1\supseteq0$ and the trivial filtration of~$U$, gives $T_1\oplusX T_2\degen T$. We give a proof for convenience of the reader, which also keeps the first leg fixed. Choose complements~$V_2$ of~$V_1$ in~$V$ and~$W_2$ of~$W_1$ in~$W$. Extending functionals on each summand of~$V$ by zero on the other summand gives $V^*=V_1^*\oplus V_2^*$, and $W=W_1\oplus W_2$. The quotient maps restrict to isomorphisms $V_2\cong V/V_1$ and $W_2\cong W/W_1$. Since $\mu(U\times V_1)\subseteq W_1$, the component of~$T$ in $U^*\otimes V_1^*\otimes W_2$ is zero. Under these isomorphisms, the components of~$T$ in $U^*\otimes V_2^*\otimes W_2$ and in $U^*\otimes V_1^*\otimes W_1$ are~$T_1$ and~$T_2$. So \cref{lem:triangular degeneration}, applied with the first leg~$U^*$ and with the summands of the second and third legs in the orders $V_2^*,V_1^*$ and $W_2,W_1$, gives $T\degengeqTT T_1\oplusX T_2$.
\end{proof}

More generally, we can use integer-valued weights to algebraically select a region to degenerate to, as follows.

\begin{lemma}[{Strassen~\cite[Theorem~6.1 and~(6.6)]{MR882307}}]\label{lem:monomial-degen}
Let $n_1,n_2,n_3\geq0$ be integers, let $S,T\in\FF^{n_1}\otimes\FF^{n_2}\otimes\FF^{n_3}$ be tensors, and write $S=\sum_{a,b,c}S_{abc}\,e_a\otimes e_b\otimes e_c$ and $T=\sum_{a,b,c}T_{abc}\,e_a\otimes e_b\otimes e_c$ in the standard bases. Suppose that there are functions $w_i:[n_i]\to\ZZ$ for $i\in\{1,2,3\}$ such that $w_1(a)+w_2(b)+w_3(c)\geq0$ for all $a$, $b$ and~$c$ with $S_{abc}\neq0$, and, for all $a$, $b$ and~$c$,
\[
T_{abc}=\begin{cases}S_{abc}&\text{if }w_1(a)+w_2(b)+w_3(c)=0,\\0&\text{otherwise.}\end{cases}
\]
Then $S\degengeq T$, and $S\degengeqTT T$ if $w_1=0$.
\end{lemma}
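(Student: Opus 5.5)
We will use the weights to build explicit diagonal scalings. Define $A_i(\eps)$ to be the $n_i\times n_i$ diagonal matrix with entry $\eps^{w_i(x)}$ at position~$x$. Negative weights would make these entries Laurent monomials, so we first shift. Let $m_i\coloneqq\min_x w_i(x)$ (take $m_i=0$ if $n_i=0$, in which case everything is trivially zero) and put $w_i'=w_i-m_i\geq0$. Then $A_i(\eps)=\operatorname{diag}(\eps^{w_i'(x)})$ has entries in $\FF[\eps]$.

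Applying these matrices coordinatewise gives
\[
(A_1\otimes A_2\otimes A_3)S=\sum_{a,b,c}S_{abc}\,\eps^{w_1(a)+w_2(b)+w_3(c)-m}\,e_a\otimes e_b\otimes e_c,
\qquad m\coloneqq m_1+m_2+m_3 .
\]
Group the terms by the value $k=w_1(a)+w_2(b)+w_3(c)$. By hypothesis $k\geq0$ whenever $S_{abc}\neq0$. The terms with $k=0$ sum exactly to~$T$, by the definition of~$T$. So the image equals
\[
\eps^{q}T+\sum_{j\geq1}\eps^{q+j}T_j,\qquad q\coloneqq -m ,
\]
where $T_j$ collects the terms with $k=j$. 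There are only finitely many such~$j$, so we may take $d$ to be the maximal weight occurring.

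We must check that $q=-m\geq0$. This need not hold in general. If $m>0$, we instead multiply $A_1$ by $\eps^{m}$, which gives exponent $q=0$ with entries still polynomial. In either case we get integers $q,d\geq0$ and polynomial matrices realising the definition of degeneration, hence $T\degen S$.

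For the refinement $w_1=0$, we want $A_1=I$. Here $m_1=0$, so $A_1$ is already the identity, and only $m_2+m_3$ matters. If $m_2+m_3>0$, we absorb the factor $\eps^{m_2+m_3}$ into~$A_2$ rather than into~$A_1$. Doing so keeps $A_1=I$ and gives $T\degenTT S$ with the same bases on the shared first leg.

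The only real subtlety is this bookkeeping of the normalising power of~$\eps$. We must keep all entries polynomial and place any compensating scalar power on a leg other than the first when $w_1=0$. Everything else is a direct computation.
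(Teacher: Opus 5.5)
Your proof is correct and follows the paper's argument: diagonal matrices with entries $\eps$ raised to the (shifted) weights, then grouping the terms of the image by total weight. The only difference is bookkeeping. The paper shifts each $w_i$ by an integer $k_i\geq 0$ (with $k_1=0$ when $w_1=0$), so $q=k_1+k_2+k_3\geq 0$ holds automatically, and your case split on the sign of $m$ is not needed.
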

\begin{proof}
For every $i\in\{1,2,3\}$, let $k_i\geq0$ be an integer such that $w_i(a)+k_i\geq0$ for every $a\in[n_i]$, with $k_1=0$ if $w_1=0$, and let~$A_i$ be the diagonal matrix with entries~$\eps^{w_i(a)+k_i}$ for $a\in[n_i]$. Let $k\coloneqq k_1+k_2+k_3$. Then $(A_1\otimes A_2\otimes A_3)S=\sum_{a,b,c}\eps^{w_1(a)+w_2(b)+w_3(c)+k}S_{abc}\,e_a\otimes e_b\otimes e_c$. Every nonzero term in this sum has exponent at least~$k$, and the terms with exponent~$k$ sum to~$\eps^kT$. If $w_1=0$, then~$A_1$ is the identity.
\end{proof}
\Cref{lem:triangular degeneration} also follows from \cref{lem:monomial-degen}, applied in bases of~$U$, $V_1$, $V_2$, $W_1$ and~$W_2$ with $w_1=0$, with~$w_2$ equal to~$1$ on~$V_1$ and~$0$ on~$V_2$, and with~$w_3$ equal to~$-1$ on~$W_1$ and~$0$ on~$W_2$.

All the above degenerations are of a special kind, called \emph{monomial degeneration}: in the chosen basis the degeneration matrices have at most one nonzero entry per row and per column, and that entry is a power of~$\eps$. Strassen calls them $M$-degenerations~\cite[Section~6]{MR882307}.

\section{Degenerations among polynomial multiplication tensors}\label{sec:constructions}

In this section we will construct, over every field, the degenerations that the proofs of \cref{thm:main} and \cref{thm:rect} use: degenerations between unit tensors and polynomial multiplication tensors (\cref{sec:conv}), the degeneration from a partial direct sum to a direct sum (\cref{sec:separation}), and two degenerations between polynomial multiplication tensors (\cref{sec:det} and \cref{sec:tripling}).

\subsection{Polynomial multiplication tensors}\label{sec:conv}

We will define the polynomial multiplication tensors and show that, over every field, $\polmul_{a,b}$ has border rank at most $a+b-1$ and $\polmul_{n,n}$ degenerates to~$\langle n\rangle$.

\begin{definition}\label{def:polmul}
For integers $a,b\geq1$, the \emph{polynomial multiplication tensor} is
\[
\polmul_{a,b}\coloneqq\sum_{i=1}^a\sum_{j=1}^be_i\otimes e_j\otimes e_{i+j-1}\in\FF^a\otimes\FF^b\otimes\FF^{a+b-1}.
\]
\end{definition}
The tensor~$\polmul_{a,b}$ corresponds to the multiplication of a polynomial of degree at most~$a-1$ by a polynomial of degree at most~$b-1$: if~$e_i$ stands for the monomial~$t^{i-1}$ in an indeterminate~$t$, then the term $e_i\otimes e_j\otimes e_{i+j-1}$ of~$\polmul_{a,b}$ records the product $t^{i-1}t^{j-1}=t^{i+j-2}$. 
For an integer $d\geq0$, let $\FF[t]_{\leq d}$ be the space of polynomials of degree at most~$d$ in~$t$. In this way the tensor~$\polmul_{a,b}$ corresponds to the bilinear multiplication map
\begin{align}
\label{eq:pol mult as bilinear map}
\FF[t]_{\leq a-1} \times \FF[t]_{\leq b-1} \to \FF[t]_{\leq a+b-2}: (f,g) \mapsto fg.
\end{align}

For $a=1$, $\polmul_{1,b}=\sum_{j=1}^be_1\otimes e_j\otimes e_j$ is isomorphic to $\langle1,1,b\rangle=\sum_{k=1}^bE_{11}\otimes E_{1k}\otimes E_{k1}$: it describes the multiplication of a polynomial of degree at most~$b-1$ by a constant, which is nothing else than multiplying a vector of length $b$ by a constant.

\begin{proposition}\label{prop:polmul-rank}
Let $a,b\geq1$ be integers. Then $\polmul_{a,b}\degen\langle a+b-1\rangle$.
\end{proposition}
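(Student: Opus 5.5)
We work over $\FF(z)$ and then pass back to $\FF$ using \cref{lem:degen-facts}(\ref{df:extension}). That lemma says that a restriction over $\FF(z)$ gives a degeneration over $\FF$. So it suffices to show $\polmul_{a,b}\leq\langle a+b-1\rangle$ as tensors over $\FF(z)$, and we do this by interpolation. Since $\FF(z)$ is infinite, we can choose $n\coloneqq a+b-1$ pairwise distinct evaluation points $x_1,\ldots,x_n\in\FF(z)$, for instance $x_k=z^{k}$, or $x_k=z+\lambda_k$ for distinct elements $\lambda_k$ if $\FF$ is large. Choosing $x_k = z^k$ for $k=1,\ldots,n$ works over every field, including finite ones: these are distinct elements of $\FF(z)$, so the Vandermonde matrix $V=(x_k^{\,m-1})_{k,m\in[n]}$ has nonzero determinant $\prod_{k<l}(x_l-x_k)$ and is invertible over $\FF(z)$.

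Next we write down the restriction maps in the bilinear-map picture \eqref{eq:pol mult as bilinear map}. On the first leg we take $A_1\colon\FF(z)^n\to\FF(z)^a$, and on the second leg $A_2\colon\FF(z)^n\to\FF(z)^b$. These are the transposes of the truncated Vandermonde evaluation maps: $A_1(e_k)=\sum_{i=1}^a x_k^{\,i-1}e_i$ and $A_2(e_k)=\sum_{j=1}^b x_k^{\,j-1}e_j$. On the third leg we take $A_3\colon\FF(z)^n\to\FF(z)^{n}$, defined by $A_3(e_k)=\sum_{m=1}^n (V^{-1})_{m,k}\,e_m$; this is the interpolation map, which recovers the coefficient vector from the values at the points.

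We then compute the image of the unit tensor:
\[
(A_1\otimes A_2\otimes A_3)\langle n\rangle=\sum_{i,j,m}\Bigl(\sum_{k}x_k^{\,i+j-2}(V^{-1})_{m,k}\Bigr)e_i\otimes e_j\otimes e_m .
\]
For $i+j-1\leq n$, the inner sum equals $\sum_k V_{k,i+j-1}(V^{-1})_{m,k}=(V^{-1}V)_{m,i+j-1}=\delta_{m,i+j-1}$. Hence the image is exactly $\polmul_{a,b}$. This is the only calculation to check, and it is routine. Conceptually it says that a product $fg$ of degree at most $n-1$ is determined by its $n$ values $f(x_k)g(x_k)$, and each of these values costs one scalar multiplication.

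The only subtle point, and the main obstacle, is positive characteristic or a small field $\FF$: there $\FF$ may lack $n$ distinct points, so a rank bound over $\FF$ itself can fail. Passing to $\FF(z)$ with the points $z^k$ removes this problem, and \cref{lem:degen-facts}(\ref{df:extension}) turns the restriction over $\FF(z)$ into $\polmul_{a,b}\degen\langle a+b-1\rangle$ over $\FF$.
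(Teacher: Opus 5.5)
Your proposal is correct and follows the same route as the paper. You pass to the infinite field $\FF(z)$, obtain $\polmul_{a,b}\leq\langle a+b-1\rangle$ there by evaluation at $a+b-1$ distinct points followed by interpolation, and turn that restriction into a degeneration over $\FF$ with \cref{lem:degen-facts}(\ref{df:extension}); the only difference is that you write out the evaluation and interpolation maps and check the Vandermonde identity yourself, where the paper cites the rank bound over infinite fields from B\"urgisser, Clausen and Shokrollahi (Proposition~14.47).
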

\begin{proof}
We extend the field by a formal variable $z$ and consider the function field $\FF(z)$.
The field~$\FF(z)$ is infinite 
so $\R_{\FF(z)}(\polmul_{a,b})\leq a+b-1$~\cite[Proposition~14.47]{burgisser1997algebraic}, that is, $\polmul_{a,b}\leq\langle a+b-1\rangle$ as tensors over~$\FF(z)$ (\cref{lem:restriction-facts}(\ref{rf:rank})). A restriction over~$\FF(z)$ is a degeneration over~$\FF$ (\cref{lem:degen-facts}(\ref{df:extension})), which proves the claim.
\end{proof}
\begin{lemma}[{\cite[Proposition~14.48(1)]{burgisser1997algebraic}}]\label{lem:unit-polmul}
For every integer $n\geq1$, we have $\langle n\rangle\degen\polmul_{n,n}$.
\end{lemma}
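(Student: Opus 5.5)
We prove $\langle n\rangle\degen\polmul_{n,n}$ with a single monomial degeneration, using \cref{lem:monomial-degen}. The plan is to choose integer weights $w_1,w_2,w_3$ on the three legs of $\polmul_{n,n}$ so that every support triple has nonnegative total weight and the weight-zero triples form a diagonal $\{(i,\sigma(i),\tau(i))\}$ with $\sigma,\tau$ injective. Such a weight-zero part is isomorphic to $\langle n\rangle$, since permuting basis vectors is an isomorphism.

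The support of $\polmul_{n,n}$ is the set of triples $(i,j,i+j-1)$ with $i,j\in[n]$. We aim to keep the antidiagonal $j=n+1-i$, whose third coordinate is always $n$. This is bad, because the third index would be the same for every kept triple. So instead we keep the diagonal $j=i$, which has third index $2i-1$; these indices are distinct.

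Choose $w_1(i)=-i^2$, $w_2(j)=-j^2$ and $w_3(k)=(k+1)^2/2$. The third weight must be an integer, so multiply everything by $2$: take $w_1(i)=-2i^2$, $w_2(j)=-2j^2$ and $w_3(k)=(k+1)^2$.

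On a support triple $(i,j,i+j-1)$ the total weight is then
\[
(i+j)^2-2i^2-2j^2=-(i-j)^2\leq0.
\]
This is the wrong sign, so negate all three weights, taking $w_1(i)=2i^2$, $w_2(j)=2j^2$ and $w_3(k)=-(k+1)^2$.

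Now the total weight on each support triple is $(i-j)^2\geq0$. It equals $0$ exactly when $i=j$, and no other coefficient of $\polmul_{n,n}$ lies at weight zero.

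By \cref{lem:monomial-degen}, $\polmul_{n,n}\degengeq T$, where $T=\sum_{i=1}^n e_i\otimes e_i\otimes e_{2i-1}$. Restricting the third leg to the span of $e_1,e_3,\ldots,e_{2n-1}$, and relabelling $e_{2i-1}\mapsto e_i$, gives $T\cong\langle n\rangle$. An isomorphism is in particular a restriction, so $\langle n\rangle\degen T$ by \cref{lem:degen-facts}(\ref{df:restriction}), and transitivity (\cref{lem:degen-facts}(\ref{df:transitive})) finishes the proof.

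The only real obstacle is finding a weight whose zero set is a diagonal. The quadratic identity $2i^2+2j^2-(i+j)^2=(i-j)^2$ resolves it, and the rest is routine.
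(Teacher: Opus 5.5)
Your proof is correct and is essentially the paper's: a single application of \cref{lem:monomial-degen} with weights that differ from the paper's $2(i-1)^2$, $2(j-1)^2$, $-(k-1)^2$ only by an index shift, so every term of $\polmul_{n,n}$ gets weight $(i-j)^2$ and the diagonal $\sum_i e_i\otimes e_i\otimes e_{2i-1}$ survives, followed by the map $e_{2i-1}\mapsto e_i$ on the third leg. One small point of wording: that last step is a restriction $\langle n\rangle\leq T$ rather than an isomorphism, since the third legs have dimensions $n$ and $2n-1$, but a restriction is all the argument needs.
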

\begin{proof}
If~$\FF$ has at least~$n$ elements, the cited proposition gives $\langle n\rangle\leq\polmul_{n,n}$: for a product~$f$ of~$n$ distinct linear factors, the multiplication tensor of the algebra $\FF[t]/(f)\cong\FF^n$ is~$\langle n\rangle$. Over every field, we apply \cref{lem:monomial-degen} to $S=\polmul_{n,n}$, with the weights $w_1(i)=2(i-1)^2$, $w_2(j)=2(j-1)^2$ and $w_3(k)=-(k-1)^2$. The term $e_i\otimes e_j\otimes e_{i+j-1}$ of~$\polmul_{n,n}$ has weight $2(i-1)^2+2(j-1)^2-(i+j-2)^2=(i-j)^2\geq0$, which is~$0$ exactly when $i=j$. So $\polmul_{n,n}\degengeq\sum_{i=1}^ne_i\otimes e_i\otimes e_{2i-1}$. The linear map on the third leg that maps~$e_{2i-1}$ to~$e_i$ for every $i\in[n]$ and the other basis vectors to~$0$ maps this tensor to~$\langle n\rangle$, so $\langle n\rangle\degen\polmul_{n,n}$.
\end{proof}

\subsection{Separating partial direct sums into direct sums}\label{sec:separation}

In this subsection we will prove a separation lemma that degenerates any partial direct sum (\cref{def:partial-direct-sum}) to a direct sum, in the presence of a polynomial multiplication tensor factor.

A degeneration on the second and third legs (\cref{sec:monomial}) can be applied to some of the summands of a partial direct sum, while the other summands stay as they are: if $T\degenTT S$, then $T\oplusX C\degenTT S\oplusX C$ for every tensor~$C$ with the same first leg as~$S$ and~$T$. If $(I\otimes A_2\otimes A_3)S=\eps^qT+\eps^{q+1}T_1+\cdots+\eps^{q+d}T_d$, then $I\otimes(A_2\oplus\eps^qI)\otimes(A_3\oplus I)$ maps $S\oplusX C$ to $\eps^q(T\oplusX C)+\eps^{q+1}T_1+\cdots+\eps^{q+d}T_d$.

\begin{lemma}\label{lem:two-copies}
Let $T\in U\otimes V\otimes W$ be a tensor and let $k\geq1$ be an integer. Then the partial direct sum $T\oplusX\cdots\oplusX T$ of $k$ copies of~$T$ is isomorphic to $T\otimes\langle1,1,k\rangle$.
\end{lemma}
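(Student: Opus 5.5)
The plan is to write down both tensors explicitly in bases and match them term by term with invertible maps on the three legs. By the definition of partial direct sum, the $k$-fold partial direct sum is $\sum_{\ell=1}^k T^{(\ell)}\in U\otimes(V^{\oplus k})\otimes(W^{\oplus k})$, where $T^{(\ell)}$ is the copy of $T$ placed in the $\ell$th summand of $V^{\oplus k}$ and of $W^{\oplus k}$. I will identify $V^{\oplus k}\cong V\otimes\FF^k$ and $W^{\oplus k}\cong W\otimes\FF^k$ by sending the $\ell$th summand to $V\otimes e_\ell$ (respectively $W\otimes e_\ell$). If $T=\sum_r u_r\otimes v_r\otimes w_r$, then under these identifications the partial direct sum becomes
\[
\sum_{\ell=1}^k\sum_r u_r\otimes(v_r\otimes e_\ell)\otimes(w_r\otimes e_\ell).
\]

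Next I will compute $T\otimes\langle1,1,k\rangle$. By definition, $\langle1,1,k\rangle=\sum_{\ell=1}^k E_{11}\otimes E_{1\ell}\otimes E_{\ell1}$, which lies in $\FF^{1\times1}\otimes\FF^{1\times k}\otimes\FF^{k\times1}$. Identifying $E_{11}$ with $1\in\FF$ and $E_{1\ell}$, $E_{\ell1}$ with $e_\ell\in\FF^k$, this tensor becomes $\sum_\ell 1\otimes e_\ell\otimes e_\ell$. This identification is isomorphic to $\polmul_{1,k}$, as already noted after \cref{def:polmul}. The tensor product with $T$, after regrouping legs, is then
\[
\sum_{\ell}\sum_r (u_r\otimes1)\otimes(v_r\otimes e_\ell)\otimes(w_r\otimes e_\ell)\in(U\otimes\FF)\otimes(V\otimes\FF^k)\otimes(W\otimes\FF^k).
\]

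Finally, I will apply the invertible maps $U\otimes\FF\to U$, $u\otimes1\mapsto u$, together with the identifications of the second and third legs above. These send one tensor exactly to the other, so the two are isomorphic. No step is a real obstacle; the only care needed is to check that the same index $\ell$ appears on legs two and three in both expressions, i.e.\ that the $\ell$th copy uses the $\ell$th block of both the second and third legs.
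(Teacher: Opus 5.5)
Your proposal is correct and follows essentially the same argument as the paper: both expand $\langle1,1,k\rangle=\sum_{\ell}E_{11}\otimes E_{1\ell}\otimes E_{\ell1}$ and observe that tensoring with $T$ places the $\ell$th copy of $T$ in $(U\otimes\FF E_{11})\otimes(V\otimes\FF E_{1\ell})\otimes(W\otimes\FF E_{\ell1})$, sharing the first leg. Your explicit bookkeeping with $T=\sum_r u_r\otimes v_r\otimes w_r$ and the identifications $V^{\oplus k}\cong V\otimes\FF^k$, $W^{\oplus k}\cong W\otimes\FF^k$ just spells out what the paper states in one line.
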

\begin{proof}
We have $\langle1,1,k\rangle=\sum_{\ell=1}^kE_{11}\otimes E_{1\ell}\otimes E_{\ell1}$. So $T\otimes\langle1,1,k\rangle$ is a partial direct sum of $k$ copies of~$T$, the $\ell$th in $(U\otimes\FF^{1\times1})\otimes(V\otimes\FF E_{1\ell})\otimes(W\otimes\FF E_{\ell1})$.
\end{proof}

\begin{proposition}[From partial direct sum to direct sum]\label{prop:sep}
Let $m,t\geq1$ be integers. 
Let $T_1,\ldots, T_m$ be tensors with $T_\ell \in U \otimes V_\ell \otimes W_\ell$.
Then
\[
(T_1\oplus\cdots\oplus T_m)\otimes\polmul_{1,t}
\degen (T_1\oplusX\cdots\oplusX T_m) \otimes \polmul_{m,t}.
\]
\end{proposition}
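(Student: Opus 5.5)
The plan is to obtain the degeneration as a single monomial degeneration via \cref{lem:monomial-degen}, followed by a restriction. First I will write out the right-hand side explicitly. It is
\[
(T_1\oplusX\cdots\oplusX T_m)\otimes\polmul_{m,t}=\sum_{\ell=1}^m\sum_{i=1}^m\sum_{j=1}^t T_\ell\otimes\bigl(e_i\otimes e_j\otimes e_{i+j-1}\bigr),
\]
which lives in $(U\otimes\FF^m)\otimes\bigl(\bigoplus_\ell V_\ell\otimes\FF^t\bigr)\otimes\bigl(\bigoplus_\ell W_\ell\otimes\FF^{m+t-1}\bigr)$. The basis vectors of the legs are indexed as follows: by $(u,i)$ on the first leg, by $(\ell,v,j)$ with $v$ a basis vector of $V_\ell$ on the second leg, and by $(\ell,w,k)$ with $w$ a basis vector of $W_\ell$ on the third leg. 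The idea is that the summand with index $\ell$ should keep only the $i=\ell$ slice of the polynomial multiplication factor. Since $\polmul_{m,t}$ restricted to a fixed first-leg index $i$ is a copy of $\polmul_{1,t}$, these slices use pairwise distinct parts of all three legs.

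To select the terms with $i=\ell$, I will choose integer weights as in the proof of \cref{lem:unit-polmul}, so that a term has weight $(i-\ell)^2$. Expanding
\[
(i-\ell)^2=(i-1)^2-2(\ell-1)(i-1)+(\ell-1)^2
\]
and writing $i-1=(k-1)-(j-1)$ with $k=i+j-1$ gives a separable choice:
\begin{itemize}
\item $w_1(u,i)=(i-1)^2$ on the first leg;
\item $w_2(\ell,v,j)=2(\ell-1)(j-1)+(\ell-1)^2$ on the second leg;
\item $w_3(\ell,w,k)=-2(\ell-1)(k-1)$ on the third leg.
\end{itemize}
For every nonzero coefficient, with $k=i+j-1$, these weights sum to $(i-\ell)^2\geq0$, and the sum is $0$ exactly when $i=\ell$. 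Then \cref{lem:monomial-degen} shows that the right-hand side degenerates to
\[
\sum_{\ell=1}^m T_\ell\otimes\sum_{j=1}^t e_\ell\otimes e_j\otimes e_{\ell+j-1}.
\]

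Finally, I will identify this tensor with $(T_1\oplus\cdots\oplus T_m)\otimes\polmul_{1,t}$. Its $\ell$th summand lies in $(U\otimes\FF e_\ell)\otimes(V_\ell\otimes\FF^t)\otimes\bigl(W_\ell\otimes\linspan\{e_\ell,\ldots,e_{\ell+t-1}\}\bigr)$. For distinct $\ell$ these subspaces are independent on all three legs: on the first leg because the $e_\ell$ are distinct, and on the other two legs because the $V_\ell$ and the $W_\ell$ are separate direct summands. So the tensor is a genuine direct sum. In summand $\ell$, shifting $e_{\ell+j-1}\mapsto e_j$ on the third leg and $e_\ell\mapsto e_1$ on the first leg turns it into $T_\ell\otimes\polmul_{1,t}$. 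Hence the tensor is isomorphic, and in particular a restriction of the other, to $\bigoplus_\ell T_\ell\otimes\polmul_{1,t}\cong(T_1\oplus\cdots\oplus T_m)\otimes\polmul_{1,t}$. Transitivity together with \cref{lem:degen-facts}(\ref{df:restriction}) and (\ref{df:transitive}) then finishes the proof.

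The main obstacle is the choice of weights: they have to be separable across the three legs while vanishing exactly on the diagonal $i=\ell$. The completed-square trick above handles this. The only remaining care is bookkeeping, namely that \cref{lem:monomial-degen} is applied in a product basis of the legs and that the surviving blocks really sit in disjoint coordinates.
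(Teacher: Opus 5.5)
Your proposal is correct and is essentially the paper's proof: a single monomial degeneration with weights summing to $(i-\ell)^2$, followed by the restriction $e_{\ell+j-1}\mapsto e_j$ on the third leg. Your weights are the paper's ($w_1=i^2$, $w_2=\ell^2+2\ell(j-1)$, $w_3=-2\ell k$) with the indices shifted by one. One small wording point: since the third legs have different dimensions, the last step is a restriction rather than an isomorphism in the paper's strict sense, but the restriction is all you actually use.
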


Note how in the degeneration of \cref{prop:sep}, the right-hand side has the polynomial multiplication tensor $\polmul_{m,t}$ as a factor, and the left-hand side the polynomial multiplication tensor $\polmul_{1,t}$. We may think of $\polmul_{m,t}$ as a resource that enables the degeneration while being partially ``consumed'' in the process, leaving $\polmul_{1,t}$ in the result.

\begin{proof}
Let $T=T_1\oplusX\cdots\oplusX T_m$.
Without loss of generality (embed the spaces into~$\FF^n$ for a large enough~$n$, which gives restrictions in both directions), we take $U = V_\ell = W_\ell = \FF^n$. 
Let $\{e_a\}_a$, $\{e_b^\ell\}_b$ and $\{e_c^\ell\}_c$ be the standard bases for $U$, $V_\ell$ and $W_\ell$ respectively.
By definition,
\begin{align}
\label{equation:sum kronecker pol}
    T
    \otimes\polmul_{m,t} 
    = 
    \sum_{\ell=1}^m
    \sum_{a,b,c=1}^n 
    \
    \sum_{\substack{i\in[m],\, j\in[t]}} (T_\ell)_{a,b,c}\,
    \bigl(e_a\otimes e_i\bigr) \otimes 
    \bigl(e_b^\ell \otimes e_j\bigr) \otimes
    \bigl(e_c^\ell \otimes e_{i+j-1}\bigr).
\end{align}
We will apply a monomial degeneration, as in \cref{lem:monomial-degen}, to this tensor. Define the weights
\[
w_1(e_a\otimes e_i)=i^2,\qquad w_2(e_b^\ell\otimes e_j)=\ell^2+2\ell(j-1),\qquad w_3(e_c^\ell\otimes e_k)=-2\ell k.
\]
On the term with indices $\ell$, $i$ and~$j$, the third-leg index is $k=i+j-1$, and the sum of the weights is $i^2+\ell^2+2\ell(j-1)-2\ell(i+j-1)=(i-\ell)^2$. This is nonnegative, and zero exactly when $i=\ell$. So $T\otimes\polmul_{m,t}$ degenerates to
\[
\sum_{\ell=1}^m\sum_{a,b,c}\ \sum_{j\in[t]}(T_\ell)_{a,b,c}\,(e_a\otimes e_\ell)\otimes(e_b^\ell\otimes e_j)\otimes(e_c^\ell\otimes e_{\ell+j-1}).
\]
The $\ell$th summand lies in $(U\otimes e_\ell)\otimes(V_\ell\otimes\FF^t)\otimes(W_\ell\otimes\FF^{m+t-1})$, so the summands no longer share the first leg. Mapping $e_c^\ell\otimes e_{\ell+j-1}$ to $e_c^\ell\otimes e_j$, and all other basis vectors to~$0$, on the third leg restricts this tensor to $(T_1\otimes\polmul_{1,t})\oplus\cdots\oplus(T_m\otimes\polmul_{1,t})=(T_1\oplus\cdots\oplus T_m)\otimes\polmul_{1,t}$.
\end{proof}

\subsection{Concavity degeneration for $\polmul_{a,b}$}\label{sec:det}

In this subsection we will prove a degeneration relation among polynomial multiplication tensors $\polmul_{a,b}$, which can be interpreted as a discrete concavity in $b$. 
We will first prove it using \cref{lem:restriction-quotient}.
Afterwards, we will give a second proof over~$\CC$, using the Clebsch--Gordan rule for representations of the special linear group~$\SL_2(\CC)$. %

\begin{lemma}[Concavity degeneration]\label{lem:det-degen}
Let $a\geq1$ and $b\geq2$ be integers. Then
\[
\polmul_{a,b+1}\oplusX \polmul_{a,b-1}\degenTT\polmul_{a,b}\oplusX \polmul_{a,b}.
\]
\end{lemma}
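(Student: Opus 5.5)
The plan is to apply \cref{lem:restriction-quotient} to the bilinear map whose tensor is $\polmul_{a,b}\oplusX\polmul_{a,b}$. Set $U=\FF[t]_{\leq a-1}$, $V=\FF[t]_{\leq b-1}^{\,2}$, $W=\FF[t]_{\leq a+b-2}^{\,2}$, and
\[
\mu\colon U\times V\to W,\qquad \mu\bigl(f,(g_1,g_2)\bigr)=(fg_1,fg_2).
\]
By \eqref{eq:pol mult as bilinear map}, the tensor of $\mu$ is two copies of $\polmul_{a,b}$ sharing the first leg. That is, it is $\polmul_{a,b}\oplusX\polmul_{a,b}$, in the monomial bases, with the same basis of the first leg as each summand.

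The key idea is to ``glue'' the two copies with a shift by $t$. Define the surjections
\[
\sigma_V\colon V\to\FF[t]_{\leq b},\quad (g_1,g_2)\mapsto g_1+tg_2,\qquad
\sigma_W\colon W\to\FF[t]_{\leq a+b-1},\quad (w_1,w_2)\mapsto w_1+tw_2.
\]
Take their kernels as the subspaces:
\[
V_1=\{(th,-h):h\in\FF[t]_{\leq b-2}\},\qquad W_1=\{(tk,-k):k\in\FF[t]_{\leq a+b-3}\}.
\]
These are $0$ when $b=2$ for $V_1$, which is fine since then $\polmul_{a,b-1}=\polmul_{a,1}$ still makes sense and the dimension count below works.

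We check the hypothesis $\mu(U\times V_1)\subseteq W_1$. For $f\in U$ and $h\in\FF[t]_{\leq b-2}$,
\[
\mu\bigl(f,(th,-h)\bigr)=(t\,fh,\,-fh),
\]
and $\deg(fh)\leq a+b-3$, so this lies in $W_1$.

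Next we identify the two induced maps.
\begin{itemize}
\item On the quotients, $\sigma_V$ and $\sigma_W$ give isomorphisms $V/V_1\cong\FF[t]_{\leq b}$ and $W/W_1\cong\FF[t]_{\leq a+b-1}$. Under these, the induced map $\mu_1$ sends $(f,g_1+tg_2)$ to $fg_1+tfg_2=f(g_1+tg_2)$. So $\mu_1$ is the polynomial multiplication map with tensor $\polmul_{a,b+1}$.
\item On the subspaces, $h\mapsto(th,-h)$ and $k\mapsto(tk,-k)$ give isomorphisms $\FF[t]_{\leq b-2}\cong V_1$ and $\FF[t]_{\leq a+b-3}\cong W_1$. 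Under these, $\mu_2$ is $(f,h)\mapsto fh$, with tensor $\polmul_{a,b-1}$.
\end{itemize}
\Cref{lem:restriction-quotient} then gives $\polmul_{a,b}\oplusX\polmul_{a,b}\degengeqTT\polmul_{a,b+1}\oplusX\polmul_{a,b-1}$.

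The only delicate point, and the main thing to verify carefully, is that every identification used acts on the second and third legs only. This is what makes the relation $\degenTT$ and not merely $\degen$. The first leg is $U^*$ throughout, in the fixed monomial basis, in all three tensors. All the isomorphisms above live on $V$, $W$ and their subspaces and quotients, and composing a $\degenTT$ with isomorphisms on legs 2 and 3 stays a $\degenTT$. As a sanity check, the dimensions match: $\dim V=2b=(b+1)+(b-1)$ and $\dim W=2(a+b-1)=(a+b)+(a+b-2)$.
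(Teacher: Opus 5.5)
Your proposal is correct and is essentially the paper's proof: the same bilinear map $(f,(g_1,g_2))\mapsto(fg_1,fg_2)$, the same subspaces (the paper writes them as images of $p\mapsto(-tp,p)$, which is your $V_1$ and $W_1$ up to sign), the same gluing maps $(g,h)\mapsto g+th$ identifying the quotients, and an application of \cref{lem:restriction-quotient}. One small slip that does not affect the argument: for $b=2$ the subspace $V_1$ is not zero but one-dimensional, spanned by $(t,-1)$, as your own count $\dim V_1=b-1$ shows; it would vanish only for the excluded case $b=1$.
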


\begin{proof}
We will apply \cref{lem:restriction-quotient} to the bilinear map that sends $(f,(g,h))$ to~$(fg,fh)$, whose tensor is $\polmul_{a,b}\oplusX\polmul_{a,b}$, and to subspaces for which the induced maps on the subspaces and on the quotients have tensors~$\polmul_{a,b-1}$ and~$\polmul_{a,b+1}$.
For every integer $d\geq0$, let $\mathcal{W}_d\coloneqq\FF[t]_{\leq d}$. For every integer $d\geq2$, let
\[
\iota_d\colon\mathcal{W}_{d-2}\to\mathcal{W}_{d-1}\oplus\mathcal{W}_{d-1}\colon p\mapsto(-tp,p),\qquad
\sigma_d\colon\mathcal{W}_{d-1}\oplus\mathcal{W}_{d-1}\to\mathcal{W}_d\colon(g,h)\mapsto g+th.
\]
Then~$\iota_d$ is injective and $\sigma_d$ is surjective. For $g,h\in\mathcal{W}_{d-1}$, we have $g+th=0$ exactly when $g=-th$ and $h\in\mathcal{W}_{d-2}$, so the kernel of~$\sigma_d$ is the image of~$\iota_d$. Let $V\coloneqq\mathcal{W}_{b-1}\oplus\mathcal{W}_{b-1}$ and $W\coloneqq\mathcal{W}_{a+b-2}\oplus\mathcal{W}_{a+b-2}$, and consider the bilinear map
\[
\mu\colon\mathcal{W}_{a-1}\times V\to W\colon\bigl(f,(g,h)\bigr)\mapsto(fg,fh),
\]
whose tensor~$T$ is $\polmul_{a,b}\oplusX\polmul_{a,b}$. %
Let~$V_1$ be the image of~$\iota_b$ and~$W_1$ the image of~$\iota_{a+b-1}$. We claim \[
\mu(\mathcal{W}_{a-1}\times V_1)\subseteq W_1.
\]
This claim will let us apply \cref{lem:restriction-quotient}.
Indeed, for all $f\in\mathcal{W}_{a-1}$ and $p\in\mathcal{W}_{b-2}$,
\begin{align}\label{eq:concavity any field mu1}
\mu\bigl(f,\iota_b(p)\bigr)=(-tfp,fp)=\iota_{a+b-1}(fp).
\end{align}
By \cref{lem:restriction-quotient}, $T\degengeqTT T_1\oplusX T_2$ for the tensors $T_1$ and~$T_2$ of the bilinear maps
\[
    \mu_1 \colon \mathcal{W}_{a-1} \times ( V/V_1) \to (W/W_1)
    \quad
    \text{and}
    \quad
    \mu_2 \colon \mathcal{W}_{a-1} \times V_1 \to W_1
\]
induced by $\mu$.
It remains to show that %
$T_1 \cong \polmul_{a,b+1}$ and $T_2 \cong \polmul_{a,b-1}$.

For $T_2$ this follows from \eqref{eq:concavity any field mu1}, where we identify $\mathcal{W}_{b-2} \cong V_1$ using the isomorphism $\iota_b\colon\mathcal{W}_{b-2}\to V_1$, and similarly $\mathcal{W}_{a+b-3} \cong W_1$.

For $T_1$ we use that 
the maps~$\sigma_b$ and~$\sigma_{a+b-1}$ are surjective with kernels~$V_1$ and~$W_1$, so they induce isomorphisms from the quotients to~$\mathcal{W}_b$ and~$\mathcal{W}_{a+b-1}$. 
Since 
\[
\sigma_{a+b-1}\bigl(\mu(f,(g,h))\bigr)=fg+tfh=f\sigma_b(g,h)
\]
for all $f\in\mathcal{W}_{a-1}$ and $g,h\in\mathcal{W}_{b-1}$, these isomorphisms identify the induced map~$\mu_1$ on the quotients with the multiplication map $\mathcal{W}_{a-1}\times\mathcal{W}_b\to\mathcal{W}_{a+b-1}$, whose tensor is~$\polmul_{a,b+1}$. 
\end{proof}

We now give the second proof, over~$\CC$, which is insightful for readers familiar with representation theory. 
For $d \geq 1$ let $\mathcal{V}_d \coloneqq \Sym^d \CC^2 \subseteq (\CC^2)^{\otimes d}$ be the subspace of symmetric tensors of order $d$ and let $\mathcal{V}_0 \coloneqq \CC$. 
Then $\mathcal{V}_d$ is an irreducible $\SL_2(\CC)$ representation under the diagonal action $g \mapsto g^{\otimes d}$ of $\SL_2(\CC)$, and its dual $\mathcal{V}_d^*$ is isomorphic to the space $\CC[x,y]_d$ of homogeneous polynomials of degree $d$ as an $\SL_2(\CC)$ representation (see Fulton and Harris~\cite[Chapter~11]{MR1153249}).
We can identify $\CC[t]_{\leq d}$ with $\CC[x,y]_{d}$ via homogenization $t^k \mapsto x^k y^{d-k}$.
Under this identification, the polynomial multiplication tensor $\polmul_{a,b}$ by definition describes the (bilinear) multiplication map $\CC[x,y]_{a-1} \times \CC[x,y]_{b-1} \to \CC[x,y]_{a+b-2}$. This map is $\SL_2(\CC)$ equivariant. Under the standard identification of bilinear maps with tensors we have that $\polmul_{a,b} \in \mathcal{V}_{a-1} \otimes \mathcal{V}_{b-1} \otimes \mathcal{V}_{a+b-2}^*$.
Because this identification is $\SL_2(\CC)$ equivariant, we obtain that $\polmul_{a,b}$ is invariant under the diagonal $\SL_2(\CC)$ action, that is, $\polmul_{a,b} \in (\mathcal{V}_{a-1} \otimes \mathcal{V}_{b-1} \otimes \mathcal{V}_{a+b-2}^*)^{\SL_2(\CC)}$. 

\begin{lemma}\label{lem:pol mult rep theory}
Let $i,j\geq0$ be integers. We have that $\mathcal{V}_i \otimes \mathcal{V}_j \cong
    \bigoplus_{k=0}^{\min(i,j)}
    \mathcal{V}_{i+j-2k}$.
In particular, for every integer $\ell>i+j$,
$(\mathcal{V}_i \otimes \mathcal{V}_j \otimes \mathcal{V}_\ell^*)^{\SL_2(\CC)} = 0$
and
$(\mathcal{V}_i \otimes \mathcal{V}_j \otimes \mathcal{V}_{i+j}^*)^{\SL_2(\CC)} = \CC \polmul_{i+1,j+1}.$
\end{lemma}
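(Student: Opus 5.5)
The first claim is the Clebsch--Gordan rule for $\SL_2(\CC)$, and I would take it from Fulton and Harris~\cite[Chapter~11]{MR1153249} instead of reproving it. If a self-contained argument is wanted, I would use characters. A maximal torus element $\mathrm{diag}(s,s^{-1})$ acts on $\mathcal{V}_d$ with weights $s^{d},s^{d-2},\ldots,s^{-d}$, each with multiplicity one. The character of $\mathcal{V}_i\otimes\mathcal{V}_j$ is the product of the two characters. By induction on $\min(i,j)$, this product equals the sum of the characters of $\mathcal{V}_{i+j-2k}$ for $k=0,\ldots,\min(i,j)$. Complete reducibility of finite-dimensional $\SL_2(\CC)$-representations, together with the fact that characters determine representations, then gives the isomorphism.

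For the invariant statements, I would use $(\mathcal{V}_i\otimes\mathcal{V}_j\otimes\mathcal{V}_\ell^*)^{\SL_2(\CC)}\cong\operatorname{Hom}_{\SL_2(\CC)}(\mathcal{V}_\ell,\mathcal{V}_i\otimes\mathcal{V}_j)$. By the decomposition and Schur's lemma, using that the $\mathcal{V}_d$ are irreducible and pairwise non-isomorphic (their dimensions differ), the dimension of this space is the number of $k\in\{0,\ldots,\min(i,j)\}$ with $i+j-2k=\ell$. For $\ell>i+j$ there is no such $k$, so the space of invariants is $0$. For $\ell=i+j$ there is exactly one, namely $k=0$, so the space of invariants is one-dimensional.

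It remains to exhibit a nonzero invariant in the case $\ell=i+j$. The discussion just before the lemma shows that $\polmul_{i+1,j+1}$ lies in $(\mathcal{V}_i\otimes\mathcal{V}_j\otimes\mathcal{V}_{i+j}^*)^{\SL_2(\CC)}$. Concretely, it is the $\SL_2(\CC)$-equivariant multiplication map $\CC[x,y]_i\times\CC[x,y]_j\to\CC[x,y]_{i+j}$, transported under the equivariant identification of bilinear maps with tensors. It is nonzero, since for example $x^i\cdot x^j=x^{i+j}\neq0$. A nonzero vector in a one-dimensional space spans it, so $(\mathcal{V}_i\otimes\mathcal{V}_j\otimes\mathcal{V}_{i+j}^*)^{\SL_2(\CC)}=\CC\polmul_{i+1,j+1}$.

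The only delicate point is bookkeeping of duals. The tensor of a bilinear map $U\times V\to W$ lives in $U^*\otimes V^*\otimes W$. Here the paper uses $\mathcal{V}_d^*\cong\CC[x,y]_d$, so the input spaces $\CC[x,y]_i$ and $\CC[x,y]_j$ contribute factors $\mathcal{V}_i$ and $\mathcal{V}_j$, and the output space $\CC[x,y]_{i+j}$ contributes $\mathcal{V}_{i+j}^*$. I would check this matches the stated ambient space. Since $\mathcal{V}_d\cong\mathcal{V}_d^*$ for $\SL_2(\CC)$ anyway, the dimension counts are unaffected by which side carries the dual.
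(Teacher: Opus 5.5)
Your proposal is correct and follows the paper's proof: cite Clebsch--Gordan, identify the invariants with $\SL_2(\CC)$-equivariant maps $\mathcal{V}_\ell\to\mathcal{V}_i\otimes\mathcal{V}_j$, use Schur's lemma to get dimensions $0$ and $1$, and conclude from the invariance of $\polmul_{i+1,j+1}$. Your explicit check that $\polmul_{i+1,j+1}\neq0$ and your optional character argument for the decomposition are small additions; the paper leaves the first implicit and delegates the second to the reference.
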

\begin{proof}
    The first statement is standard, see e.g.\ \cite[Exercise~11.11]{MR1153249}. The two equalities follow from the standard identification of $\SL_2(\CC)$-equivariant linear maps $\mathcal{V}_\ell \to \mathcal{V}_i \otimes \mathcal{V}_j$ with 
    $(\mathcal{V}_i \otimes \mathcal{V}_j \otimes \mathcal{V}_\ell^*)^{\SL_2(\CC)}$:
    by the first statement, $\mathcal{V}_\ell$ with $\ell>i+j$ does not occur in $\mathcal{V}_i\otimes\mathcal{V}_j$ and $\mathcal{V}_{i+j}$ occurs once, so Schur's lemma tells us the two spaces have dimension 0 and 1 respectively; the second equality follows because $\polmul_{i+1,j+1} \in (\mathcal{V}_i \otimes \mathcal{V}_j \otimes \mathcal{V}_{i+j}^*)^{\SL_2(\CC)}$.
\end{proof}

\begin{proof}[Proof of \cref{lem:det-degen} over~$\CC$]
By \cref{lem:two-copies}, we have
\begin{align*}
    \polmul_{a,b} \oplusX \polmul_{a,b} 
    \cong \polmul_{a,b} \otimes \polmul_{1,2}
    &\in 
    \bigl( ( \mathcal{V}_{a-1} \otimes \mathcal{V}_{0} ) \otimes
    ( \mathcal{V}_{b-1} \otimes \mathcal{V}_{1} ) \otimes
    ( \mathcal{V}_{a+b-2}^* \otimes \mathcal{V}_{1}^* ) \bigr)^{\SL_2(\CC)}.
\end{align*}
Since $\mathcal{V}_0 = \CC$, we have $\mathcal{V}_{a-1} \otimes \mathcal{V}_{0} = \mathcal{V}_{a-1}$.
By \cref{lem:pol mult rep theory} we know that $\mathcal{V}_{d} \otimes \mathcal{V}_{1} \cong \mathcal{V}_{d+1} \oplus \mathcal{V}_{d-1}$ as $\SL_2(\CC)$ representations, for every integer $d\geq1$.
Thus, there exist $\SL_2(\CC)$-equivariant isomorphisms $B:\mathcal{V}_{b-1}\otimes\mathcal{V}_1\to\mathcal{V}_b\oplus\mathcal{V}_{b-2}$ and $C:\mathcal{V}_{a+b-2}^*\otimes\mathcal{V}_1^*\to\mathcal{V}_{a+b-1}^*\oplus\mathcal{V}_{a+b-3}^*$ such that 
\begin{align}
    \label{equation:polmul repr components}
    T \coloneqq (I \otimes B \otimes C)
    (\polmul_{a,b} \otimes \polmul_{1,2})
    &\in
    \bigl( \mathcal{V}_{a-1} \otimes
    ( \mathcal{V}_{b} \oplus \mathcal{V}_{b-2} ) \otimes
    ( \mathcal{V}_{a+b-1}^* \oplus \mathcal{V}_{a+b-3}^* )
    \bigr)^{\SL_2(\CC)}
\end{align}
So $T = T_{1,1} + T_{1,2} + T_{2,1} + T_{2,2}$ for some tensors
\begin{align*}
\renewcommand{\arraystretch}{1.2} 
\begin{array}{ll}
    T_{1,1} \in \mathcal{V}_{a-1} \otimes \mathcal{V}_{b} \otimes \mathcal{V}_{a+b-1}^*,
    \qquad &
    T_{1,2} \in \mathcal{V}_{a-1} \otimes \mathcal{V}_{b} \otimes \mathcal{V}_{a+b-3}^*,
    \\
    T_{2,1} \in \mathcal{V}_{a-1} \otimes \mathcal{V}_{b-2} \otimes \mathcal{V}_{a+b-1}^*,
    &
    T_{2,2} \in \mathcal{V}_{a-1} \otimes \mathcal{V}_{b-2} \otimes \mathcal{V}_{a+b-3}^*.
\end{array}
\end{align*}
(Here $\mathcal{V}_{b}$ is considered as the subspace $\mathcal{V}_{b} \oplus 0\subseteq\mathcal{V}_{b} \oplus \mathcal{V}_{b-2}$, similarly for the other factors.)
Because we can obtain $T_{i,j}$ from $T$ using $\SL_2(\CC)$ equivariant projections along the direct sums, each $T_{i,j}$ is $\SL_2(\CC)$ invariant.
By \cref{lem:pol mult rep theory}, $T_{1,1} = \alpha_1 \polmul_{a,b+1}$ and $T_{2,2} = \alpha_2 \polmul_{a,b-1}$ for some $\alpha_1,\alpha_2 \in \CC$, and $T_{2,1} = 0$. 
Since~$B$ and~$C$ are isomorphisms, $T$, like $\polmul_{a,b}\otimes\polmul_{1,2}$, is \emph{concise}: no leg can be replaced by a proper subspace containing it. So $\alpha_1\neq0$ and $\alpha_2\neq0$, since otherwise~$T$ would lie in $\mathcal{V}_{a-1}\otimes(\mathcal{V}_b\oplus\mathcal{V}_{b-2})\otimes\mathcal{V}_{a+b-3}^*$ or in $\mathcal{V}_{a-1}\otimes\mathcal{V}_b\otimes(\mathcal{V}_{a+b-1}^*\oplus\mathcal{V}_{a+b-3}^*)$.
Composing~$B$ with $\alpha_1^{-1}I_{\mathcal{V}_b}\oplus\alpha_2^{-1}I_{\mathcal{V}_{b-2}}$ makes both diagonal coefficients equal to~$1$.
Then the result follows from \cref{lem:triangular degeneration}.
\end{proof}

\begin{remark}
Over a field of positive characteristic, the Clebsch--Gordan decomposition can fail.
For example, one checks that over a field~$\FF$ of characteristic~$2$, the representation $\FF^2\otimes\FF^2$ of $\SL_2(\FF)$ is not isomorphic to $\Sym^2\FF^2\oplus\FF$. 
This is why the proof over every field uses a subspace and a quotient instead of this direct sum decomposition.
\end{remark}

\begin{remark}
After homogenisation, $\sigma_d$ is the multiplication map $\FF[x,y]_{d-1}\otimes\FF[x,y]_1\to\FF[x,y]_d$ of binary forms, which is equivariant for the action of~$\SL_2(\FF)$ by linear substitutions in~$x$ and~$y$. Over~$\CC$, where $\CC[x,y]_d\cong\mathcal{V}_d$, it is, up to a nonzero scalar, the component $\mathcal{V}_{d-1}\otimes\mathcal{V}_1\to\mathcal{V}_d$ of the $\SL_2(\CC)$-equivariant isomorphism $\mathcal{V}_{d-1}\otimes\mathcal{V}_1\cong\mathcal{V}_d\oplus\mathcal{V}_{d-2}$ that the proof over~$\CC$ uses for $d=b$.
\end{remark}

\begin{remark}\label{rem:det-general}
\Cref{lem:det-degen} generalises as follows. Let $a,b,c\geq1$ and $d\geq0$ be integers with $b\geq c$ and $d<c$. Then $\polmul_{a,b+d}\oplusX\polmul_{a,c-d}\degenTT\polmul_{a,b}\oplusX\polmul_{a,c}$; \cref{lem:det-degen} is the case $b=c$ and $d=1$.
The proof of \cref{lem:det-degen} over every field gives this, with~$t^{b-c+d}$ in place of~$t$ in~$\iota_d$ and~$\sigma_d$.
For an integer $r\geq1$ and a tuple $\lambda=(\lambda_1,\ldots,\lambda_r)$ of positive integers, let $T_\lambda=\polmul_{a,\lambda_1}\oplusX\cdots\oplusX\polmul_{a,\lambda_r}$. Let~$\lambda$ and~$\nu$ be tuples of~$r$ positive integers with the same sum. If~$\lambda$ majorises~$\nu$, that is, if for every $k\in[r]$ the sum of the~$k$ largest entries of~$\lambda$ is at least that of~$\nu$, then $T_\lambda\degenTT T_\nu$, by repeated use of the case $d=1$ on two of the summands.
\end{remark}

\subsection{Tripling degeneration for $\polmul_{a,b}$}\label{sec:tripling}

We will construct a degeneration of a polynomial multiplication tensor to a partial direct sum of three smaller ones, the middle one with its second and third legs exchanged. 
The trivial starting point is this: for all integers $a,b\geq1$,
\begin{equation}\label{eq:two-copies}
    \polmul_{a,b}\oplusX\polmul_{a,b}\leq\polmul_{a,2b+a-1}.
\end{equation}
In terms of polynomials, the idea is as follows. The left-hand side describes the simultaneous multiplication $(f, (g_1,g_2)) \mapsto (fg_1,fg_2)$ of polynomials $f \in \FF[t]_{\leq a-1}$ and $g_1,g_2\in\FF[t]_{\leq b-1}$.
The restriction then follows the embedding of $(g_1,g_2) \mapsto g \coloneqq g_1+t^{a+b-1}g_2 \in \FF[t]_{\leq 2b+a-2}$, and the subsequent observation that the coefficients of $fg_1$ and $fg_2$ appear as the coefficients of $fg$.

However, the coefficients of~$g$ at $t^b,\ldots,t^{a+b-2}$ are not used. 
The tripling degeneration uses these coefficients, and~$b$ further coefficients, for a third summand in the middle. 

\begin{lemma}[Tripling degeneration]\label{lem:tripling-degen}
Let $a,b\geq1$ be integers. Then
\[
\polmul_{a,b}\oplusX \bigl((23)\cdot\polmul_{a,b}\bigr)\oplusX\polmul_{a,b}\degenTT\polmul_{a,3b+a-1}.
\]
\end{lemma}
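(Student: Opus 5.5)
The plan is to obtain the degeneration as a single monomial degeneration, via \cref{lem:monomial-degen} with $w_1=0$, followed by an isomorphism that only permutes basis vectors on the second and third legs. Write $N\coloneqq3b+a-1$. The tensor $\polmul_{a,N}$ has terms $e_i\otimes e_m\otimes e_k$ with $k=i+m-1$, for $i\in[a]$ and $m\in[N]$, and $k\in[N+a-1]=[2a+3b-2]$. I will split the second-leg indices into three consecutive blocks
\[
L=[1,b],\qquad M=[b+1,\,a+2b-1],\qquad H=[a+2b,\,N],
\]
of sizes $b$, $a+b-1$ and $b$. I will split the third-leg indices into
\[
L'=[1,a+b-1],\qquad K=[a+b,\,a+2b-1],\qquad H'=[a+2b,\,N+a-1],
\]
of sizes $a+b-1$, $b$ and $a+b-1$. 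These sizes match the second and third legs of $\polmul_{a,b}$, $(23)\cdot\polmul_{a,b}$ and $\polmul_{a,b}$.

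The weights are $w_1=0$, $w_2=1$ on $M$ and $0$ on $L\cup H$, and $w_3=-1$ on $K$ and $0$ on $L'\cup H'$. I then check the weight of every term. If $m\in L$, then $k\le a+b-1$, so $k\in L'$ and the weight is $0$. If $m\in H$, then $k\ge a+2b$, so $k\in H'$ and the weight is $0$. If $m\in M$, the weight is $0$ when $k\in K$ and $1$ otherwise. So all weights are nonnegative, and \cref{lem:monomial-degen} gives $\polmul_{a,N}\degengeqTT T$, where $T$ keeps exactly the terms with $m\in L$, the terms with $m\in H$, and the terms with $m\in M$, $k\in K$. One more check is needed: every pair $(i,k)$ with $i\in[a]$ and $k\in K$ has $m=k-i+1\in M$. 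This holds because $m$ ranges between $a+b-a+1=b+1$ and $a+2b-1$. Hence the middle part of $T$ is the full set of terms $\{e_i\otimes e_{k-i+1}\otimes e_k: i\in[a],\,k\in K\}$.

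It remains to identify the three parts of $T$; they have pairwise disjoint second-leg and third-leg supports, so $T$ is their partial direct sum. The part with $m\in L$ is literally $\polmul_{a,b}$, on the second-leg coordinates $L$ and third-leg coordinates $L'$. The part with $m\in H$ is $\polmul_{a,b}$ after shifting the indices by $a+2b-1$. For the middle part, I reverse the order of both windows. I set $j\coloneqq a+2b-k\in[b]$ and $s\coloneqq a+2b-m\in[a+b-1]$. Then $m=k-i+1$ becomes $s=i+j-1$. So the middle part is $\sum_{i,j}e_i\otimes e_{i+j-1}\otimes e_j$, which is $(23)\cdot\polmul_{a,b}$.

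These reindexings are basis permutations on the second and third legs, so they preserve $\degenTT$. Composing them with the monomial degeneration gives the lemma. The only step that needed care was choosing the windows so that the middle output window $K$ is fed exactly by the middle input window $M$, which is what the range check above establishes. I therefore do not expect a real obstacle.
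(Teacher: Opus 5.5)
Your proof is correct and is essentially the paper's proof. You use the same three-by-three block decomposition of the second and third legs, and the same reversal $j=a+2b-k$, $s=a+2b-m$ to identify the middle block with $(23)\cdot\polmul_{a,b}$. The only difference is presentational: the paper invokes \cref{lem:three-block} to remove the two off-diagonal blocks in the middle block row, whereas you write out the underlying monomial degeneration directly. Your weights ($w_2=1$ on the middle rows, $w_3=-1$ on the middle columns) are exactly the ones \cref{lem:three-block} uses through \cref{lem:triangular degeneration}.
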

The degeneration can be seen quite directly from an example. 
For a tensor $T\in\FF^{n_1}\otimes\FF^{n_2}\otimes\FF^{n_3}$ and $i\in[n_1]$, the \emph{$i$th slice} of~$T$ is the $n_2\times n_3$ matrix whose entry in row~$j$ and column~$k$ is the coefficient of $e_i\otimes e_j\otimes e_k$ in~$T$.
Take $a=3$ and $b = 3$. Then the slices of $\polmul_{a,3b+a-1}$ look as follows, where we have split each slice into 9 blocks,
\begin{align*}
\scalebox{0.6}{$
\left[
\begin{array}{ccccc|ccc|ccccc}
1 & 0 & 0 & 0 & 0 &  &  &  &  &  &  &  &  \\
0 & 1 & 0 & 0 & 0 &  & \smash{\text{\Large 0}} &  &  &  & \smash{\text{\Large 0}} &  &  \\
0 & 0 & 1 & 0 & 0 &  &  &  &  &  &  &  &  \\\hline
0 & 0 & 0 & 1 & 0 & 0 & 0 & 0 & 0 & 0 & 0 & 0 & 0 \\
0 & 0 & 0 & 0 & 1 & 0 & 0 & 0 & 0 & 0 & 0 & 0 & 0 \\
0 & 0 & 0 & 0 & 0 & 1 & 0 & 0 & 0 & 0 & 0 & 0 & 0 \\
0 & 0 & 0 & 0 & 0 & 0 & 1 & 0 & 0 & 0 & 0 & 0 & 0 \\
0 & 0 & 0 & 0 & 0 & 0 & 0 & 1 & 0 & 0 & 0 & 0 & 0 \\\hline
 &  &  &  &  &  &  &  & 1 & 0 & 0 & 0 & 0 \\
 &  & \smash{\text{\Large 0}} &  &  &  & \smash{\text{\Large 0}} &  & 0 & 1 & 0 & 0 & 0 \\
 &  &  &  &  &  &  &  & 0 & 0 & 1 & 0 & 0 \\
\end{array}
\right]$}\quad
\scalebox{0.6}{$
\left[
\begin{array}{ccccc|ccc|ccccc}
0 & 1 & 0 & 0 & 0 &  &  &  &  &  &  &  &  \\
0 & 0 & 1 & 0 & 0 &  & \smash{\text{\Large 0}} &  &  &  & \smash{\text{\Large 0}} &  &  \\
0 & 0 & 0 & 1 & 0 &  &  &  &  &  &  &  &  \\\hline
0 & 0 & 0 & 0 & 1 & 0 & 0 & 0 & 0 & 0 & 0 & 0 & 0 \\
0 & 0 & 0 & 0 & 0 & 1 & 0 & 0 & 0 & 0 & 0 & 0 & 0 \\
0 & 0 & 0 & 0 & 0 & 0 & 1 & 0 & 0 & 0 & 0 & 0 & 0 \\
0 & 0 & 0 & 0 & 0 & 0 & 0 & 1 & 0 & 0 & 0 & 0 & 0 \\
0 & 0 & 0 & 0 & 0 & 0 & 0 & 0 & 1 & 0 & 0 & 0 & 0 \\\hline
 &  &  &  &  &  &  &  & 0 & 1 & 0 & 0 & 0 \\
 &  & \smash{\text{\Large 0}} &  &  &  & \smash{\text{\Large 0}} &  & 0 & 0 & 1 & 0 & 0 \\
 &  &  &  &  &  &  &  & 0 & 0 & 0 & 1 & 0 \\
\end{array}
\right]$}\quad
\scalebox{0.6}{$
\left[
\begin{array}{ccccc|ccc|ccccc}
0 & 0 & 1 & 0 & 0 &  &  &  &  &  &  &  &  \\
0 & 0 & 0 & 1 & 0 &  & \smash{\text{\Large 0}} &  &  &  & \smash{\text{\Large 0}} &  &  \\
0 & 0 & 0 & 0 & 1 &  &  &  &  &  &  &  &  \\\hline
0 & 0 & 0 & 0 & 0 & 1 & 0 & 0 & 0 & 0 & 0 & 0 & 0 \\
0 & 0 & 0 & 0 & 0 & 0 & 1 & 0 & 0 & 0 & 0 & 0 & 0 \\
0 & 0 & 0 & 0 & 0 & 0 & 0 & 1 & 0 & 0 & 0 & 0 & 0 \\
0 & 0 & 0 & 0 & 0 & 0 & 0 & 0 & 1 & 0 & 0 & 0 & 0 \\
0 & 0 & 0 & 0 & 0 & 0 & 0 & 0 & 0 & 1 & 0 & 0 & 0 \\\hline
 &  &  &  &  &  &  &  & 0 & 0 & 1 & 0 & 0 \\
 &  & \smash{\text{\Large 0}} &  &  &  & \smash{\text{\Large 0}} &  & 0 & 0 & 0 & 1 & 0 \\
 &  &  &  &  &  &  &  & 0 & 0 & 0 & 0 & 1 \\
\end{array}
\right]$}.
\end{align*}
The top-left blocks of the three slices are
\begin{equation}\label{eq:tripling-outer}
\scalebox{0.8}{$\begin{bmatrix}1&0&0&0&0\\0&1&0&0&0\\0&0&1&0&0\end{bmatrix}\quad\begin{bmatrix}0&1&0&0&0\\0&0&1&0&0\\0&0&0&1&0\end{bmatrix}\quad\begin{bmatrix}0&0&1&0&0\\0&0&0&1&0\\0&0&0&0&1\end{bmatrix}$}.
\end{equation}
These are the slices of~$\polmul_{3,3}$. The bottom-right blocks are the same. The middle blocks are
\begin{equation}\label{eq:tripling-middle}
\scalebox{0.8}{$\begin{bmatrix}0&0&0\\0&0&0\\1&0&0\\0&1&0\\0&0&1\end{bmatrix}\quad\begin{bmatrix}0&0&0\\1&0&0\\0&1&0\\0&0&1\\0&0&0\end{bmatrix}\quad\begin{bmatrix}1&0&0\\0&1&0\\0&0&1\\0&0&0\\0&0&0\end{bmatrix}$}.
\end{equation}
Rotated by 180 degrees, that is, with the order of both the rows and the columns reversed, the matrices in~\eqref{eq:tripling-middle} are the transposes of those in~\eqref{eq:tripling-outer}. These are the slices of $(23)\cdot\polmul_{3,3}$, the tensor~$\polmul_{3,3}$ with its second and third legs exchanged. The other two blocks of the middle block row are not zero. So the three slices together have the block structure
\[
\polmul_{3,11}=\begin{pmatrix}\polmul_{3,3}&0&0\\\ast&(23)\cdot\polmul_{3,3}&\ast\\0&0&\polmul_{3,3}\end{pmatrix},
\]
where each entry stands for the tensor whose slices are the corresponding blocks, the middle one read in reverse order, and~$\ast$ marks the two blocks that are not zero. By \cref{lem:three-block},
\[
\polmul_{3,11}\degengeqTT\begin{pmatrix}\polmul_{3,3}&0&0\\0&(23)\cdot\polmul_{3,3}&0\\0&0&\polmul_{3,3}\end{pmatrix}\cong\polmul_{3,3}\oplusX((23)\cdot\polmul_{3,3})\oplusX\polmul_{3,3}.
\]
\begin{proof}
For every $i\in[a]$, the $i$th slice of $\polmul_{a,3b+a-1}$ has a~$1$ in row~$j$ and column~$i+j-1$ for every $j\in[3b+a-1]$, and zeros elsewhere. As in the example, split the rows and the columns of the slices into three intervals each, where $[p,q]$ denotes the set of integers $p,p+1,\ldots,q$:
\[
\begin{array}{r|ccc}
 \text{rows} & [1,b] & [b+1,2b+a-1] & [2b+a,3b+a-1]\\
 \text{columns} & [1,b+a-1] & [b+a,2b+a-1] & [2b+a,3b+2a-2]
\end{array}
\]
This splits every slice into nine blocks, in three block rows and three block columns. 
For all $i \in [a]$,
if $j\leq b$ then $i+j-1\leq b+a-1$, and if $j\geq2b+a$ then $i+j-1\geq2b+a$. So in the top and the bottom block row of the $i$th slice, only the diagonal block is nonzero. Numbered from its top-left corner, it has a~$1$ in row~$r$ and column~$i+r-1$ for every $r\in[b]$, and zeros elsewhere: it is the $i$th slice of~$\polmul_{a,b}$. In the middle block column, column~$k$ has its~$1$ in row $k-i+1$, which is a middle row. Number the rows and the columns of the middle block in reverse order, from its bottom-right corner. Then column~$k$ becomes column $c=2b+a-k$, and row $k-i+1$ becomes row $2b+a-(k-i+1)=i+c-1$. So the middle block has a~$1$ in row~$i+c-1$ and column~$c$ for every $c\in[b]$, and zeros elsewhere: it is the $i$th slice of $(23)\cdot\polmul_{a,b}$. In block form, where each block stands for the tensor whose slices are the corresponding blocks of the slices, we have shown
\[
\polmul_{a,3b+a-1}=\begin{pmatrix}\polmul_{a,b}&0&0\\\ast&(23)\cdot\polmul_{a,b}&\ast\\0&0&\polmul_{a,b}\end{pmatrix},
\]
where~$\ast$ marks the two blocks that need not be zero. An application of \cref{lem:three-block} makes the $\ast$~blocks zero and proves the lemma.
\end{proof}

\begin{remark}\label{rem:tripling-general}
The proof of \cref{lem:tripling-degen} naturally extends to multiple blocks of different sizes.
Let $s,r,n\geq1$ and $a,b_i,c_j\geq1$ be integers. Then
\[
\polmul_{a,b_1}\oplusX\cdots\oplusX\polmul_{a,b_r}\oplusX\bigl((23)\cdot\polmul_{a,c_1}\bigr)\oplusX\cdots\oplusX\bigl((23)\cdot\polmul_{a,c_s}\bigr)
\degenTT \polmul_{a,n}
\]
if and only if $n \geq N \coloneqq \sum_i b_i + \sum_j c_j + (a-1)\max(s,r-1)$.
The proof interleaves the tensors $\polmul_{a,b_i}$ and $(23) \cdot \polmul_{a,c_j}$ as the diagonal blocks of $\polmul_{a,n}$, and degenerates the off-diagonal blocks using \cref{lem:monomial-degen}, as in the proof of \cref{lem:tripling-degen}. In block form, this looks as follows:
\begin{align*}
\scalebox{0.9}{$
\arraycolsep=1.4pt
    \left(
    \begin{array}{ccccccccc}
        \polmul_{a,b_1} &   &   \\
        * & (23) \cdot \polmul_{a,c_1} & * &   \\
          &   & \polmul_{a,b_2} &   &   \\
          &   & * & (23)\cdot \polmul_{a,c_2} & * &   \\
        & & &  &  &  &   \\
        & & &  & \ddots & \phantom{\polmul_{a}} & * &   \\
        & & & & &  & \polmul_{a,b_{k-1}} &   &   \\
        & & & & &  & * & (23)\cdot \polmul_{a,c_{k-1}} & * \\
        & & & & &  &  &   &   \polmul_{a,b_{k}} \\
    \end{array}\right)$},
\end{align*}
where $k \coloneqq \max\{r,s+1\}$, $(b_1,b_2,\ldots)$ and $(c_1,c_2,\ldots)$ are extended with zeros, and~$\polmul_{a,0}$ is the zero tensor in $\FF^a\otimes\FF^0\otimes\FF^{a-1}$. 
If $n<N$, there is no such degeneration: the second and third flattening ranks are $\sum_ib_i+\sum_jc_j+s(a-1)$ and $\sum_ib_i+\sum_jc_j+r(a-1)$ on the left, and~$n$ and $n+a-1$ on the right.
\end{remark}

\begin{remark}[Tripling from a short exact sequence]\label{rem:exact-sequences}
\Cref{lem:tripling-degen} also follows from \cref{lem:restriction-quotient}, applied to the multiplication map $\FF[t]_{\leq a-1}\times\FF[t]_{\leq 3b+a-2}\to\FF[t]_{\leq 3b+2a-3}\colon(f,g)\mapsto fg$, whose tensor is $\polmul_{a,3b+a-1}$, with~$V_1$ the polynomials $g_1+t^{2b+a-1}g_3$ with $g_1,g_3\in\FF[t]_{\leq b-1}$, and with~$W_1$ the polynomials whose coefficients at $t^{a+b-1},\ldots,t^{a+2b-2}$ vanish. 
The induced map on the subspaces has tensor $\polmul_{a,b}\oplusX\polmul_{a,b}$ (in the same way that we obtained the trivial restriction \eqref{eq:two-copies} before).
The induced map on the quotients sends $(f,g+V_1)$ to $fg+W_1$. Take the basis $t^{a-1-i}$ of $\FF[t]_{\leq a-1}$, for $0\leq i\leq a-1$; take the basis $t^{b+j}+V_1$ of $\FF[t]_{\leq 3b+a-2}/V_1$, for $0\leq j\leq a+b-2$; and take the basis $t^{a+b-1+k}+W_1$ of  $\FF[t]_{\leq 3b+2a-3}/ W_1$, for $0\leq k\leq b-1$. The induced map sends the $i$th and the $j$th basis vector to the $k$th one if $j=i+k$, and to~$0$ if $j-i$ is not in $\{0,\ldots,b-1\}$. So it has tensor $(23)\cdot\polmul_{a,b}$.
\end{remark}

\section{Upper bounds via spectral points}\label{sec:spectral}
In this section we will prove \cref{thm:main} and \cref{thm:rect} using Strassen's asymptotic spectrum. Strassen's duality theorems allow us to prove these bounds by establishing inequalities for the values of each spectral point at matrix multiplication tensors. We will obtain these inequalities from the degenerations of \cref{sec:constructions}.

\subsection{Spectral points, asymptotic spectrum duality, and basic properties}\label{sec:spectral-points}
In this subsection we will define Strassen's asymptotic spectrum, whose elements are called spectral points, and state the properties that we will use, including the asymptotic spectrum duality theorem. Strassen introduced these to study the asymptotic restriction of tensors, and in particular the exponent of matrix multiplication~\cite{strassen1988asymptotic}.  For background, see the survey of Wigderson and Zuiddam~\cite{wigderson2022asymptotic}.

\begin{definition}[{Strassen~\cite[Section~3]{strassen1988asymptotic}}]\label{def:spectral-point}
A \emph{spectral point} is a map $\phi:\mathcal{S}_\FF\to\RR_{\geq0}$ such that $\phi(\langle1\rangle)=1$ and, for all tensors~$S$ and~$T$ over~$\FF$, we have $\phi(S\oplus T)=\phi(S)+\phi(T)$ and $\phi(S\otimes T)=\phi(S)\,\phi(T)$, and if $T\leq S$, then $\phi(T)\leq\phi(S)$. The \emph{asymptotic spectrum} of tensors over~$\FF$ is the set~$\Delta_\FF$ of spectral points. 
\end{definition}

For the upper bounds on~$\omega_\FF$ and~$\omega_\FF(a,b,c)$, we will not need to know what the spectral points are beyond their abstract definition! Spectral points have been constructed explicitly, but it is not known whether these constructions give all of the asymptotic spectrum~\cite{strassen1991degeneration,MR4495838,sakabe2026strassenssupportfunctionalscoincide}.

The simplest spectral points are the flattening ranks~$\flatrank{1}$, $\flatrank{2}$ and~$\flatrank{3}$ of \cref{sec:tensors-rank}, since the rank of matrices is normalised, additive, multiplicative and monotone under restriction. They are the only explicit spectral points that we will use, in examples and to show that some of the bounds are attained.

From the definition it readily follows that spectral points lower bound the matrix multiplication exponent: for every $\phi \in \Delta_\FF$ we have $\log \phi(\langle 2,2,2\rangle) \leq \omega_\FF$. Strassen proved that these spectral point lower bounds are complete:

\begin{theorem}[{Strassen~\cite[Theorem~3.8 and Proposition~4.2]{strassen1988asymptotic}}]\label{thm:spectral}
We have
\[
\max_{\phi\in\Delta_\FF}\log\phi(\langle2,2,2\rangle) = \omega_\FF.
\]
\end{theorem}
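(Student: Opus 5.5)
The equality splits into two inequalities. The first, $\log\phi(\langle2,2,2\rangle)\leq\omega_\FF$ for every spectral point~$\phi$, follows directly from the axioms. The second, that some~$\phi$ attains~$\omega_\FF$, is Strassen's asymptotic spectrum duality, and that is where all the depth lies. Throughout I use $\langle2,2,2\rangle^{\otimes k}\cong\langle2^k,2^k,2^k\rangle$, which is \cref{lem:restriction-facts}(\ref{rf:product}).

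\emph{Step 1: the bound $\log\phi(\langle2,2,2\rangle)\leq\omega_\FF$.} Fix $\phi\in\Delta_\FF$ and $\tau>\omega_\FF$, and let $C$ be a constant with $\R_\FF(\langle n,n,n\rangle)\leq Cn^\tau$.
\begin{itemize}
\item By \cref{lem:restriction-facts}(\ref{rf:rank}), $\langle2^k,2^k,2^k\rangle\leq\langle \lfloor C2^{k\tau}\rfloor\rangle$.
\item Additivity and normalisation give $\phi(\langle r\rangle)=r$.
\item Multiplicativity and monotonicity under restriction then give $\phi(\langle2,2,2\rangle)^k\leq C2^{k\tau}$.
\end{itemize}
Taking $k$th roots and letting $k\to\infty$ gives $\phi(\langle2,2,2\rangle)\leq2^\tau$. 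Letting $\tau\downarrow\omega_\FF$ finishes this direction.

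\emph{Step 2: reduce $\omega_\FF$ to asymptotic rank.} Define $\wtilde{\R}(T)\coloneqq\lim_k\R_\FF(T^{\otimes k})^{1/k}$. The limit exists by Fekete's lemma, since tensor rank is submultiplicative. I claim $\omega_\FF=\log\wtilde{\R}(\langle2,2,2\rangle)$.
\begin{itemize}
\item The direction $\geq$ is the computation of Step 1 applied to rank instead of~$\phi$.
\item For $\leq$, given $n$, choose $k$ with $2^{k-1}<n\leq2^k$. Then $\langle n,n,n\rangle\leq\langle2^k,2^k,2^k\rangle$, so $\R_\FF(\langle n,n,n\rangle)\leq\R_\FF(\langle2,2,2\rangle^{\otimes k})=\wtilde{\R}(\langle2,2,2\rangle)^{k(1+o(1))}$, and $2^k\leq 2n$.
\end{itemize}

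\emph{Step 3: duality and attainment.} It remains to show $\wtilde{\R}(T)=\max_{\phi\in\Delta_\FF}\phi(T)$ for $T=\langle2,2,2\rangle$.
\begin{itemize}
\item Verify that $(\mathcal{S}_\FF,\oplus,\otimes,\leq)$ is a Strassen preordered semiring. The natural numbers embed via $n\mapsto\langle n\rangle$ with $\langle n\rangle\leq\langle m\rangle$ iff $n\leq m$. Every nonzero $T$ satisfies $\langle1\rangle\leq T$ by \cref{lem:restriction-facts}(\ref{rf:unit}), and $T\leq\langle r\rangle$ for some~$r$ by~(\ref{rf:rank}).
\item Invoke Strassen's Positivstellensatz (a Kadison--Dubois type representation theorem). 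In this setting it says that $\wtilde{\R}(T)\leq r$ iff $\phi(T)\leq r$ for all $\phi\in\Delta_\FF$, which gives $\wtilde{\R}(T)=\sup_\phi\phi(T)$.
\item To replace the supremum by a maximum, note that $\Delta_\FF$ is a closed subset of the product $\prod_T[\,\cdot\,,\R_\FF(T)]$. The defining conditions are closed, and $\phi(T)\leq\R_\FF(T)$ by Step 1. By Tychonoff this product is compact, so $\Delta_\FF$ is compact and the continuous evaluation map $\phi\mapsto\phi(\langle2,2,2\rangle)$ attains its supremum.
\end{itemize}

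The main obstacle is the representation theorem in Step 3. Proving it from scratch needs a Hahn--Banach or Zorn separation argument: one extends the preorder to a maximal one that is compatible with the semiring operations, and then shows that maximal preorders come from homomorphisms to $\RR_{\geq0}$. Since the statement is attributed to \cite{strassen1988asymptotic}, I would cite Theorem~3.8 and Proposition~4.2 there for this step rather than reprove it. I would only check the Strassen-preorder hypotheses and the elementary reductions of Steps 1 and 2.
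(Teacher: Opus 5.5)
Your proposal is correct and follows the same route as the paper. The paper treats $\log\phi(\langle2,2,2\rangle)\leq\omega_\FF$ as immediate from the definition of spectral points, and gets attainment by citing Strassen's duality theorem~\cite[Theorem~3.8 and Proposition~4.2]{strassen1988asymptotic} without further proof; your Steps~2 and~3 (the asymptotic-rank identity, the Strassen preorder check, and the Tychonoff compactness argument) just spell out what those cited results contain.
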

To upper bound $\omega_\FF$ it thus suffices to upper bound $\phi(\langle 2,2,2\rangle)$ for every $\phi \in \Delta_\FF$. This we will do by taking the degenerations constructed in \cref{sec:constructions}, applying the spectral points to them, and optimising over these inequalities to upper bound the value of $\phi(\langle 2,2,2\rangle)$. In \cref{sec:optimality} we will show that no combination of these degenerations by the combination rules gives a better bound.
The values of a spectral point~$\phi$ at the tensors $\langle1,1,2\rangle$, $\langle2,1,1\rangle$ and~$\langle1,2,1\rangle$ will play a central role. The following are the basic properties of spectral points that we will need.
For a given $\pi\in\mathfrak{S}_3$, we define the permutation $\phi_\pi$ of $\phi$ as
\begin{align}
    \phi_\pi(T)\coloneqq\phi(\pi\cdot T).
\end{align}

\begin{lemma}[{Strassen~\cite[Proposition~5.10]{MR882307}, \cite[Theorem~3.1 and Proposition~4.3]{strassen1988asymptotic}}]\label{lem:spectral-facts}
Let $\phi\in\Delta_\FF$.
\begin{enumerate}[\upshape(i)]
\item\label{sf:monotone} For all tensors~$S$ and~$T$ over~$\FF$ with $T\degen S$, we have $\phi(T)\leq\phi(S)$.
\item\label{sf:unit} For every integer $n\geq0$, we have $\phi(\langle n\rangle)=n$.
\item\label{sf:nonzero} For every nonzero tensor~$T$ over~$\FF$, we have $\phi(T)\geq1$.
\item\label{sf:power} The values $\phi(\langle1,1,2\rangle)$, $\phi(\langle2,1,1\rangle)$ and~$\phi(\langle1,2,1\rangle)$ lie in the interval~$[1,2]$, and for all integers $n_1,n_2,n_3\geq1$,
\[
\phi(\langle n_1,n_2,n_3\rangle)=\phi(\langle2,1,1\rangle)^{\log n_1}\,\phi(\langle1,2,1\rangle)^{\log n_2}\,\phi(\langle1,1,2\rangle)^{\log n_3}.
\]
In particular, $\phi(\langle m,m,m\rangle)=\phi(\langle2,2,2\rangle)^{\log m}$ for every integer $m\geq1$.
\item \label{sf:permutation}
$\phi_\pi \in \Delta_\FF$.
\end{enumerate}
\end{lemma}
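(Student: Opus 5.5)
The plan is to derive each of the five parts from the defining properties of a spectral point in \cref{def:spectral-point}: normalisation, additivity, multiplicativity and monotonicity under restriction. Some parts also need the restriction and degeneration facts of \cref{lem:restriction-facts} and \cref{lem:degen-facts}.

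\emph{Part~(\ref{sf:unit}).} Since $\langle n\rangle=\langle1\rangle^{\oplus n}$, additivity and $\phi(\langle1\rangle)=1$ give $\phi(\langle n\rangle)=n$; the case $n=0$ is the zero tensor, which additivity forces to have value~$0$.

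\emph{Part~(\ref{sf:nonzero}).} If $T\neq0$, then $\langle1\rangle\leq T$ by \cref{lem:restriction-facts}(\ref{rf:unit}), so monotonicity gives $\phi(T)\geq1$.

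\emph{Part~(\ref{sf:permutation}).} By \cref{lem:restriction-facts}(\ref{rf:permutation}), $\pi$ commutes with $\oplus$ and $\otimes$ up to isomorphism, fixes $\langle1\rangle$, and preserves restriction. Hence $\phi_\pi$ inherits all four defining properties, so $\phi_\pi\in\Delta_\FF$.

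\emph{Part~(\ref{sf:power}).} Multiplicativity and \cref{lem:restriction-facts}(\ref{rf:product}) give $\phi(\langle n_1,n_2,n_3\rangle)=\phi(\langle n_1,1,1\rangle)\phi(\langle1,n_2,1\rangle)\phi(\langle1,1,n_3\rangle)$, so it suffices to show $\phi(\langle n,1,1\rangle)=\phi(\langle2,1,1\rangle)^{\log n}$, and similarly for the other two positions. Write $f(n)=\phi(\langle n,1,1\rangle)$.
\begin{itemize}
\item[] $f$ is multiplicative, since $\langle nm,1,1\rangle\cong\langle n,1,1\rangle\otimes\langle m,1,1\rangle$.
\item[] $f$ is monotone, since $\langle n,1,1\rangle\leq\langle m,1,1\rangle$ for $n\leq m$ by restricting to a subspace.
\item[] $1\leq f(n)\leq n$: the lower bound is part~(\ref{sf:nonzero}), and the upper bound comes from the rank-$n$ decomposition $\langle n,1,1\rangle\leq\langle n\rangle$ together with part~(\ref{sf:unit}).
\end{itemize}
A monotone multiplicative function on the positive integers is a power function. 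To see this, for integers $n,k$ choose $m$ with $2^m\leq n^k<2^{m+1}$. Then $f(2)^m\leq f(n)^k\leq f(2)^{m+1}$. Taking $k$-th roots and letting $k\to\infty$ gives $f(n)=f(2)^{\log n}$. The bound $f(2)\in[1,2]$ is the case $n=2$ of $1\leq f(n)\leq n$. The final claim of part~(\ref{sf:power}) follows by setting $n_1=n_2=n_3=m$, which gives $\phi(\langle m,m,m\rangle)=\bigl(\phi(\langle2,1,1\rangle)\phi(\langle1,2,1\rangle)\phi(\langle1,1,2\rangle)\bigr)^{\log m}$. The same identity at $m=2$ shows that this product equals $\phi(\langle2,2,2\rangle)$.

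\emph{Part~(\ref{sf:monotone}).} I expect this to be the main obstacle, since monotonicity is only assumed for restriction and degeneration is weaker. Suppose $(A_1\otimes A_2\otimes A_3)S=\eps^qT+\dots$ with polynomial degree at most $e$ in~$\eps$. The standard result is Strassen's (Bini's argument): $T\leq S\otimes\langle c\rangle$ for some constant $c$ depending only on the degeneration order, namely $c\leq\binom{q+2}{2}$ after interpolation (or via $\polmul$-type tensors over any field). Applying this to tensor powers, $T^{\otimes N}\degen S^{\otimes N}$ holds with order $Nq$, so $T^{\otimes N}\leq S^{\otimes N}\otimes\langle \mathrm{poly}(N)\rangle$. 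Then $\phi(T)^N\leq\phi(S)^N\cdot\mathrm{poly}(N)$, and taking $N$-th roots gives $\phi(T)\leq\phi(S)$.

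Over small fields, the interpolation step needs the coefficient extraction of $\eps^q$ to be expressible as a restriction. I would cite Strassen~\cite[Proposition~5.10]{MR882307} for this, or realise it by multiplying with $\polmul_{q+1,\cdot}$-type truncated polynomial tensors, which have degeneration-free rank bounds polynomial in $q$.
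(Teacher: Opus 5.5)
Your proposal is correct. Parts (ii), (iii), (v), the bounds in (iv) and the final claim of (iv) are argued exactly as in the paper. The difference is that the paper simply cites Strassen for part (i) and for the power formula in (iv), while you supply the standard proofs.

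Your power-law argument is complete and uses only monotonicity under restriction: $f(n)=\phi(\langle n,1,1\rangle)$ is multiplicative and monotone with $f(2)\geq1$, so $2^m\leq n^k<2^{m+1}$ forces $f(n)=f(2)^{\log n}$. Your proof of (i) is the argument behind the citation: a degeneration of order $q$ gives $T\leq\langle c\rangle\otimes S$, and then you take tensor powers. This makes the lemma self-contained at the cost of a few lines.

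One correction to (i): the small-field worry is unfounded, and ``after interpolation'' is misplaced. Write $A_i=\sum_k\eps^kA_i^{(k)}$ and compare coefficients of $\eps^q$. This gives
\[
T=\sum_{k_1+k_2+k_3=q}\bigl(A_1^{(k_1)}\otimes A_2^{(k_2)}\otimes A_3^{(k_3)}\bigr)S,
\]
which is a sum of $\binom{q+2}{2}$ restrictions of $S$. Hence $T$ is a restriction of $S^{\oplus\binom{q+2}{2}}\cong\langle\binom{q+2}{2}\rangle\otimes S$, over every field. Applied to $T^{\otimes N}\degen S^{\otimes N}$, which has order $Nq$, this yields $\phi(T)^N\leq\binom{Nq+2}{2}\phi(S)^N$. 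No interpolation or auxiliary $\polmul$ tensors are needed.
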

\begin{proof}
Part~(\ref{sf:monotone}) is~\cite[Proposition~5.10]{MR882307} and~\cite[Theorem~3.1]{strassen1988asymptotic}. Part~(\ref{sf:unit}) holds since~$\langle n\rangle$ is the direct sum of~$n$ copies of~$\langle1\rangle$, and part~(\ref{sf:nonzero}) follows since $\langle1\rangle\leq T$ for every nonzero tensor~$T$ (\cref{lem:restriction-facts}(\ref{rf:unit})). The formula in part~(\ref{sf:power}) is~\cite[Proposition~4.3]{strassen1988asymptotic}.
The bounds on the three values follow from parts (\ref{sf:monotone})--(\ref{sf:nonzero}), since each of the tensors $\langle1,1,2\rangle$, $\langle2,1,1\rangle$ and~$\langle1,2,1\rangle$ is nonzero and a restriction of~$\langle2\rangle$. The last claim is the formula for $n_1=n_2=n_3=m$, together with the formula for $n_1=n_2=n_3=2$.
Part (\ref{sf:permutation}) follows since permuting the legs of a tensor commutes with direct sums and tensor products, fixes~$\langle1\rangle$ and preserves restriction (\cref{lem:restriction-facts}(\ref{rf:permutation})). 
\end{proof}

Strassen also expressed the exponents of rectangular matrix multiplication as a maximum over spectral points.
\begin{lemma}[{Strassen~\cite[Proposition~4.5]{strassen1988asymptotic}}]\label{lem:spectral-omega}
Let $a,b,c\geq0$ be real numbers. Then
\[
2^{\omega_\FF(a,b,c)}=\max_{\phi\in\Delta_\FF}\phi(\langle2,1,1\rangle)^a\,\phi(\langle1,2,1\rangle)^b\,\phi(\langle1,1,2\rangle)^c.
\]
\end{lemma}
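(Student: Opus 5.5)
The plan is to deduce the formula from \cref{thm:spectral} together with \cref{lem:spectral-facts}(\ref{sf:power}), applied to products of powers of matrix multiplication tensors. Fix $\phi\in\Delta_\FF$ and write $\alpha=\phi(\langle2,1,1\rangle)$, $\beta=\phi(\langle1,2,1\rangle)$, $\gamma=\phi(\langle1,1,2\rangle)$. For the lower bound $2^{\omega_\FF(a,b,c)}\geq\alpha^a\beta^b\gamma^c$, I would argue as follows. Suppose $\R_\FF(\langle\lceil n^a\rceil,\lceil n^b\rceil,\lceil n^c\rceil\rangle)\leq Cn^{\tau}$ for all $n$. By \cref{lem:restriction-facts}(\ref{rf:rank}) this tensor is a restriction of $\langle Cn^\tau\rangle$ (rounded). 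Monotonicity and \cref{lem:spectral-facts}(\ref{sf:unit}) then give $\alpha^{\log\lceil n^a\rceil}\beta^{\log\lceil n^b\rceil}\gamma^{\log\lceil n^c\rceil}\leq Cn^\tau+1$. Since $\alpha,\beta,\gamma\geq1$, taking $n=2^k$ and letting $k\to\infty$ gives $a\log\alpha+b\log\beta+c\log\gamma\leq\tau$. Taking the infimum over $\tau$ yields the inequality for every $\phi$, so the maximum is at most $2^{\omega_\FF(a,b,c)}$.

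For the reverse inequality, I would use Strassen's duality (the general form of \cref{thm:spectral}, \cite[Theorem~3.8]{strassen1988asymptotic}). It states that the asymptotic rank $\lim_N\R_\FF(T^{\otimes N})^{1/N}$, which equals the asymptotic border rank, equals $\max_{\phi}\phi(T)$. The maximum is attained because $\Delta_\FF$ is compact and evaluation maps are continuous. First treat rational $a,b,c$, say $a=p/q$, $b=r/q$, $c=s/q$. Consider $T=\langle2^p,2^r,2^s\rangle$, so that $\phi(T)=(\alpha^a\beta^b\gamma^c)^q$. Its $N$th power is $\langle2^{pN},2^{rN},2^{sN}\rangle$. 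This equals the tensor $\langle n^a,n^b,n^c\rangle$ exactly for $n=2^{qN}$, which gives $\max_\phi\phi(T)=2^{q\,\omega_\FF(a,b,c)}$. Here I need that the rank growth along the subsequence $n=2^{qN}$ determines $\omega_\FF(a,b,c)$. That follows by monotonicity in $n$ up to constant factors: for $2^{qN}\leq n<2^{q(N+1)}$ the tensor is a restriction of the next power. Taking $q$th roots gives the formula for rational parameters.

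The main obstacle is passing to real $a,b,c$. On the right-hand side, $F(a,b,c)=\max_\phi\alpha^a\beta^b\gamma^c$ is a maximum of log-linear functions with bounded slopes ($\log\alpha,\log\beta,\log\gamma\in[0,1]$). So $\log F$ is convex and $1$-Lipschitz in each coordinate. On the left-hand side, $\omega_\FF(a,b,c)$ is monotone in each argument. It also satisfies $\omega_\FF(a+\delta,b,c)\leq\omega_\FF(a,b,c)+\delta$, by splitting the first dimension into $\lceil n^\delta\rceil$ blocks and taking a direct sum of products. Hence it is also $1$-Lipschitz in each coordinate. Both sides are continuous and agree on rationals, so they agree everywhere. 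That proves the lemma.
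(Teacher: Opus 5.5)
Your proof is correct, but it takes a different route from the paper's. The paper does not derive the formula. For $a,b,c>0$ it cites Strassen's Propositions~4.5 and~3.3 directly. It handles a zero exponent, say $c=0$, by hand: $\R_\FF(\langle n_1,n_2,1\rangle)=n_1n_2$ gives $\omega_\FF(a,b,0)=a+b$, every spectral point is at most~$2$ at $\langle2,1,1\rangle$ and $\langle1,2,1\rangle$, and $\flatrank{1}$ attains $2^{a+b}$.

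You instead reduce everything to the single general fact that the asymptotic rank of a tensor is the maximum of the spectral points at it. For rational exponents with common denominator~$q$, the powers of $\langle2^p,2^r,2^s\rangle$ are isomorphic to $\langle n^a,n^b,n^c\rangle$ along $n=2^{qN}$, and your sandwiching argument covers the intermediate~$n$. You then pass to real exponents using the coordinatewise $1$-Lipschitz continuity of both sides. On the spectral side this comes from the logarithms of the three values lying in $[0,1]$. On the rank side it comes from $\lceil n^{a+\delta}\rceil\leq\lceil n^\delta\rceil\lceil n^a\rceil$ together with $\R_\FF(\langle m,1,1\rangle\otimes T)\leq m\,\R_\FF(T)$.

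What your route buys is threefold. It is self-contained modulo the duality theorem. It makes explicit how irrational exponents are handled, which is what the citation of Proposition~4.5 packages. And zero exponents need no separate case, since they are covered by the rational case and continuity. The paper's route is shorter only because it defers all of this to Strassen.

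Your separate direct proof of $\max_\phi\phi(\langle2,1,1\rangle)^a\phi(\langle1,2,1\rangle)^b\phi(\langle1,1,2\rangle)^c\leq2^{\omega_\FF(a,b,c)}$ is correct. It is redundant, however, once the rational case gives equality and continuity of both sides is established.
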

\begin{proof}
For $a,b,c>0$, this follows directly from~\cite[Propositions~4.5 and~3.3]{strassen1988asymptotic}.

If one of $a$, $b$ and~$c$ is~$0$, say $c=0$, then both sides equal~$2^{a+b}$, since the tensor rank of $\langle n_1,n_2,1\rangle$ is~$n_1n_2$, and $\flatrank{1} \in \Delta_\FF$ satisfies
$\flatrank{1}(\langle2,1,1\rangle) = \flatrank{1}(\langle1,2,1\rangle) = 2$.
\end{proof}

\subsection{Superadditivity on partial direct sums}\label{sec:superadd}
In this subsection we will show that, for every $\phi\in\Delta_\FF$ with $p\coloneqq\log\phi(\langle1,1,2\rangle)>0$, the function~$\phi^{1/p}$ is superadditive on partial direct sums (\cref{lem:superadd-general}). We will first treat isomorphic summands using \cref{prop:sep} (\cref{lem:iso-summands}), and then remove the hypothesis that the summands are isomorphic by applying this case to tensor powers. As a first application, we will obtain $\omega_\FF<2.3212$ from a tensor of Strassen (\cref{prop:laser-sep}).

\begin{lemma}[Isomorphic summands]\label{lem:iso-summands}
Let $\phi\in\Delta_\FF$, and suppose that $p\coloneqq\log\phi(\langle1,1,2\rangle)$ is positive. Let $r\geq1$ be an integer, and let $S=S_1\oplusX\cdots\oplusX S_r$ be a partial direct sum in which each~$S_i$ is isomorphic to~$S_1$. Then
\[
\phi(S)\geq\tfrac12\,r^p\,\phi(S_1).
\]
\end{lemma}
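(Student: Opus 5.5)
The plan is to apply \cref{prop:sep} with $m=r$ and $t=r$, take $\phi$ of both sides, and then remove the factor $\phi(\polmul_{r,r})$, which is at most $2r-1$, so that only a factor $\tfrac12$ is lost. The values of the polynomial multiplication tensors enter only through \ruleref{D0} and \ruleref{D1}.

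By \cref{prop:sep} with $m=t=r$, and since $\phi$ is monotone under degeneration (\cref{lem:spectral-facts}(\ref{sf:monotone})) and multiplicative, we get
\[
\phi(S_1\oplus\cdots\oplus S_r)\,\phi(\polmul_{1,r})\leq\phi(S)\,\phi(\polmul_{r,r}).
\]
Each $S_i\cong S_1$, and isomorphic tensors are equal in $\mathcal{S}_\FF$, so by additivity the left-hand side is $r\,\phi(S_1)\,\phi(\polmul_{1,r})$.

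Next I evaluate the two polynomial multiplication factors. By \ruleref{D0}, $\polmul_{1,r}\cong\langle1,1,r\rangle$, and \cref{lem:spectral-facts}(\ref{sf:power}) gives $\phi(\langle1,1,r\rangle)=\phi(\langle1,1,2\rangle)^{\log r}=r^p$. By \cref{prop:polmul-rank} and \cref{lem:spectral-facts}(\ref{sf:monotone}),(\ref{sf:unit}), $\phi(\polmul_{r,r})\leq 2r-1<2r$. Substituting gives
\[
r\cdot r^p\,\phi(S_1)\leq 2r\,\phi(S),
\]
that is, $\phi(S)\geq\tfrac12 r^p\phi(S_1)$.

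The only point needing care is the reduction of \cref{prop:sep} to a common first leg, and this is already built into its statement, since the partial direct sum requires the $S_i$ to share the leg $U$. The step I expect to be the main potential obstacle is choosing $t=r$ so that the factor $r$ cancels. A smaller $t$ would leave a weaker factor: for example $t=1$ gives only $\phi(S)\geq r\phi(S_1)/(2r-1)$. Note also that positivity of $p$ is not actually used in this inequality; it presumably matters only for the later use of $\phi^{1/p}$.
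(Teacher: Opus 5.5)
Your proof is correct and is exactly the paper's argument: apply \cref{prop:sep} with $m=t=r$, use $\phi(\polmul_{1,r})=r^p$ and $\phi(\polmul_{r,r})\leq 2r-1$ from \cref{prop:polmul-rank}, and obtain $r^{1+p}\phi(S_1)\leq(2r-1)\phi(S)$, which gives the claim. You are also right that positivity of $p$ is not needed for this particular inequality.
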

\begin{proof}
We have
\[
(S_1\oplus\cdots\oplus S_r)\otimes\polmul_{1,r}\degen S\otimes\polmul_{r,r}\degen S\otimes\langle2r-1\rangle.
\]
The first degeneration is \cref{prop:sep} with $m=t=r$. The second follows by tensoring the border rank bound $\polmul_{r,r}\degen\langle2r-1\rangle$ (\cref{prop:polmul-rank}) with~$S$. The left-hand side is isomorphic to $\langle r\rangle\otimes S_1\otimes\langle1,1,r\rangle$, since $S_1\oplus\cdots\oplus S_r\cong\langle r\rangle\otimes S_1$ and $\polmul_{1,r}\cong\langle1,1,r\rangle$. The spectral point~$\phi$ is multiplicative and monotone under degeneration, $\phi(\langle n\rangle)=n$ for every integer $n\geq1$, and $\phi(\langle1,1,r\rangle)=r^p$ (\cref{lem:spectral-facts}(\ref{sf:monotone}), (\ref{sf:unit}) and~(\ref{sf:power})). So
\[
r^{1+p}\phi(S_1)\leq(2r-1)\,\phi(S).
\]
Since $r/(2r-1)\geq1/2$, it follows that $\phi(S)\geq\frac12r^p\phi(S_1)$.
\end{proof}

\begin{lemma}[Superadditivity]\label{lem:superadd-general}
Let $\phi\in\Delta_\FF$, and suppose that $p\coloneqq\log\phi(\langle1,1,2\rangle)$ is positive.
Let $m\geq1$ be an integer. For $\ell \in [m]$, let $T_\ell \in U \otimes V_\ell \otimes W_\ell$ be tensors over~$\FF$.
Then
\[
\phi(T_1\oplusX\cdots\oplusX T_m)^{1/p}\geq\phi(T_1)^{1/p}+\cdots+\phi(T_m)^{1/p}.
\]
\end{lemma}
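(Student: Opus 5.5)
The plan is to apply \cref{lem:iso-summands} to a large tensor power of $T\coloneqq T_1\oplusX\cdots\oplusX T_m$ and let the power tend to infinity, so that the lossy factor $\tfrac12$ disappears. First note that partial direct sums distribute over tensor products. For $N\geq1$, the power $T^{\otimes N}$ is the partial direct sum, over all words $w=(\ell_1,\ldots,\ell_N)\in[m]^N$, of the tensors $T_w\coloneqq T_{\ell_1}\otimes\cdots\otimes T_{\ell_N}$. All of these share the first leg $U^{\otimes N}$, and they have disjoint second and third legs $V_{\ell_1}\otimes\cdots\otimes V_{\ell_N}$ and $W_{\ell_1}\otimes\cdots\otimes W_{\ell_N}$. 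Group the words by their type $k=(k_1,\ldots,k_m)$ with $\sum_\ell k_\ell=N$. Within one type, all $T_w$ are isomorphic to $T_1^{\otimes k_1}\otimes\cdots\otimes T_m^{\otimes k_m}$, since permuting tensor factors is an isomorphism. There are $\binom{N}{k}$ such words.

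Next fix a type $k$. Dropping the other summands of a partial direct sum is a restriction: project the second and third legs onto the relevant summands. So $T^{\otimes N}$ restricts to the partial direct sum $S^{(k)}$ of the $\binom Nk$ isomorphic copies. \Cref{lem:iso-summands} with $r=\binom Nk$, together with monotonicity and multiplicativity of $\phi$, gives
\[
\phi(T)^N\geq\phi\bigl(S^{(k)}\bigr)\geq\tfrac12\binom Nk^{p}\prod_{\ell}\phi(T_\ell)^{k_\ell}.
\]
Taking $1/p$-th powers yields $\phi(T)^{N/p}\geq 2^{-1/p}\binom Nk\prod_\ell x_\ell^{k_\ell}$, where $x_\ell\coloneqq\phi(T_\ell)^{1/p}$.

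Now choose $k$ to maximise $\binom Nk\prod_\ell x_\ell^{k_\ell}$. By the multinomial theorem these terms sum to $(x_1+\cdots+x_m)^N$, and there are at most $(N+1)^m$ types. Hence the maximum is at least $(x_1+\cdots+x_m)^N/(N+1)^m$. This gives $\phi(T)^{1/p}\geq 2^{-1/(pN)}(N+1)^{-m/N}(x_1+\cdots+x_m)$, and letting $N\to\infty$ proves the claim.

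The main obstacle is purely bookkeeping: checking that $T^{\otimes N}$ really is a partial direct sum of the $T_w$, and that restricting to one type class is a restriction of the stated form. Both follow from expanding $(\sum_\ell T_\ell)^{\otimes N}$ in the direct sum decompositions of the second and third legs. Zero summands cause no issue, since their $x_\ell$ are then $0$ (or we can simply discard them first).
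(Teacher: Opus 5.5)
Your proposal is correct and follows the paper's proof essentially step by step: you write $T^{\otimes N}$ as a partial direct sum of the $T_w$, restrict to a single type class, apply \cref{lem:iso-summands}, and use the multinomial theorem with at most $(N+1)^m$ types before letting $N\to\infty$. The only difference is cosmetic: you pick the maximising type, whereas the paper sums the resulting inequalities over all types, and the two are equivalent.
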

\begin{proof}
Let $T=T_1\oplusX\cdots\oplusX T_m$.
Then~$U$ is the first leg of~$T$. Let $V=V_1\oplus\cdots\oplus V_m$ and $W=W_1\oplus\cdots\oplus W_m$ be its second and third legs with the decompositions of the partial direct sum. 
Let $n\geq1$ be an integer and consider $T^{\otimes n}$. 
For a sequence $\alpha=(\alpha_1,\ldots,\alpha_n)\in[m]^n$, let $V_\alpha\coloneqq V_{\alpha_1}\otimes\cdots\otimes V_{\alpha_n}$, $W_\alpha\coloneqq W_{\alpha_1}\otimes\cdots\otimes W_{\alpha_n}$ and $T_\alpha\coloneqq T_{\alpha_1}\otimes\cdots\otimes T_{\alpha_n}$. 
Since the tensor product distributes over direct sums, the second leg $V^{\otimes n}$ of~$T^{\otimes n}$ is the direct sum of the subspaces~$V_\alpha$, and its third leg $W^{\otimes n}$ is the direct sum of the subspaces~$W_\alpha$. 
Expanding $T^{\otimes n}=(T_1+\cdots+T_m)^{\otimes n}$ gives $T^{\otimes n}=\sum_\alpha T_\alpha$. 
Since $T_i\in U\otimes V_i\otimes W_i$ for every~$i$, and the tensor product of tensors is formed leg by leg (\cref{sec:tensors-basic}), $T_\alpha\in U^{\otimes n}\otimes V_\alpha\otimes W_\alpha$ for every~$\alpha$. 
We observe that $T^{\otimes n}$ is the partial direct sum of the tensors $T_\alpha$, $\alpha \in [m]^n$: the tensor product distributes over partial direct sums.

Fix integers $k_1,\ldots,k_m\geq0$ with $k_1+\cdots+k_m=n$ and let~$J$ be the set of sequences $\alpha\in[m]^n$ in which every $i\in[m]$ occurs exactly~$k_i$ times (i.e.~the elements of $J$ have ``type''~$k$). We have
\[
|J|=\frac{n!}{k_1!\cdots k_m!}.
\]
Projecting the second and third legs of~$T^{\otimes n}$ onto $\bigoplus_{\alpha\in J}V_\alpha$ and $\bigoplus_{\alpha\in J}W_\alpha$ gives a tensor $S$ for which $S\leq T^{\otimes n}$. 
Then $S$ is also a partial direct sum, with shared first leg~$U^{\otimes n}$, of the $|J|$~summands~$T_\alpha$ with $\alpha\in J$. 
Each such summand $T_\alpha =  T_{\alpha_1}\otimes\cdots\otimes T_{\alpha_n}$ is isomorphic to $T_1^{\otimes k_1}\otimes\cdots\otimes T_m^{\otimes k_m}$, 
where $T_i^{\otimes0}=\langle1\rangle$. Since $S\leq T^{\otimes n}$, and~$\phi$ is multiplicative and monotone, \cref{lem:iso-summands}, applied to~$S$, gives
\[
\phi(T)^n=\phi(T^{\otimes n})\geq\phi(S)\geq\tfrac12\,|J|^p\,\phi(T_1)^{k_1}\cdots\phi(T_m)^{k_m}.
\]
There are at most $(n+1)^m$ tuples of integers $k_1,\ldots,k_m\geq0$ with $k_1+\cdots+k_m=n$. Raising to the power~$1/p$, summing over these tuples, and using the multinomial theorem gives
\[
2^{1/p}\,(n+1)^m\,\phi(T)^{n/p}\geq\bigl(\phi(T_1)^{1/p}+\cdots+\phi(T_m)^{1/p}\bigr)^n.
\]
Taking~$n$th roots and letting $n\to\infty$ proves the claim.
\end{proof}

\begin{remark}\label{rem:partial-vs-direct}
Partial direct sums and direct sums need not be equivalent under asymptotic restriction. Indeed, the spectral point~$\flatrank{1}$ is~$1$ on $\langle1\rangle\oplusX\langle1\rangle\cong\langle1,1,2\rangle$ (\cref{lem:two-copies}) and~$2$ on $\langle1\rangle\oplus\langle1\rangle=\langle2\rangle$. So~$\langle2\rangle$ is not an asymptotic restriction of~$\langle1,1,2\rangle$. 
\end{remark}

\label{sec:laser}As a first application of \cref{lem:superadd-general}, we obtain an upper bound on~$\omega_\FF$ from a tensor of Strassen. For an integer $q\geq1$, let
\[
\operatorname{Str}_q\coloneqq\sum_{i=1}^q\bigl(e_i\otimes e_0\otimes e_i+e_i\otimes e_i\otimes e_0\bigr)\in\FF^q\otimes\FF^{q+1}\otimes\FF^{q+1},
\]
where the first leg has basis $e_1,\ldots,e_q$ and the other two have basis $e_0,\ldots,e_q$. This is Strassen's tensor~\cite[Lemma~7.4]{MR882307}, with the first and third legs exchanged so that the shared leg comes first; see also Bl\"aser~\cite[Section~8]{blaser2013fast}. The sums with~$e_0$ on the second and on the third leg are isomorphic to $\langle q,1,1\rangle$ and $\langle1,q,1\rangle$, respectively, and $\operatorname{Str}_q = \langle q,1,1 \rangle \oplus_1 \langle 1,q,1\rangle$, since on the second and on the third leg they use the complementary subspaces $\linspan(e_0)$ and $\linspan(e_1,\ldots,e_q)$. 
It is known that $\operatorname{Str}_q\degen\langle q+1\rangle$, which follows from
\[
\sum_{i=1}^qe_i\otimes(e_0+\eps e_i)\otimes(e_0+\eps e_i)-\Bigl(\sum_{i=1}^qe_i\Bigr)\otimes e_0\otimes e_0=\eps\operatorname{Str}_q+\eps^2\sum_{i=1}^qe_i\otimes e_i\otimes e_i,
\]
where the left-hand side is a sum of~$q+1$ tensors of tensor rank one. %

\begin{proposition}\label{prop:laser-sep}
For every integer $q\geq1$, we have $(2q)^{\omega_\FF}\leq(q+1)^3$. For $q=3$ this gives $6^{\omega_\FF}\leq64$, so $\omega_\FF<2.3212$.
\end{proposition}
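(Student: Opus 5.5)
By \cref{thm:spectral}, it suffices to show that every $\phi\in\Delta_\FF$ satisfies $\phi(\langle2,2,2\rangle)^{\log(2q)}\leq(q+1)^3$, that is, $\phi(\langle 2q,2q,2q\rangle)\leq(q+1)^3$. Write $p_1,p_2,p_3$ for the logarithms of $\phi(\langle1,1,2\rangle)$, $\phi(\langle2,1,1\rangle)$ and $\phi(\langle1,2,1\rangle)$, so that $\log\phi(\langle2,2,2\rangle)=p_1+p_2+p_3$ by \cref{lem:spectral-facts}(\ref{sf:power}).

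The key inequality comes from $\operatorname{Str}_q=\langle q,1,1\rangle\oplusX\langle1,q,1\rangle\degen\langle q+1\rangle$. Assume first $p_1>0$. Then \cref{lem:superadd-general} and monotonicity give
\[
(q+1)^{1/p_1}\geq\phi(\operatorname{Str}_q)^{1/p_1}\geq q^{p_2/p_1}+q^{p_3/p_1}.
\]
Next I would use the permuted spectral points $\phi_\pi$ (\cref{lem:spectral-facts}(\ref{sf:permutation})) to get the analogous inequalities with the shared leg in each of the three positions. By \cref{lem:restriction-facts}(\ref{rf:cyclic}), the cyclic permutations of $\operatorname{Str}_q$ also degenerate to $\langle q+1\rangle$. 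For $\phi_\pi$ the roles of $(p_1,p_2,p_3)$ get permuted, so I expect three inequalities $(q+1)^{1/p_i}\geq q^{p_j/p_i}+q^{p_k/p_i}$, one for each $i$ with $\{i,j,k\}=\{1,2,3\}$.

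To combine them, apply weighted AM--GM to each: $q^{p_j/p_i}+q^{p_k/p_i}\geq 2q^{(p_j+p_k)/(2p_i)}$. This gives $p_i\log(q+1)\geq p_i+(p_j+p_k)(\log q)/2$ up to the factor $2$. Summing over $i$ yields
\[
(p_1+p_2+p_3)\log(q+1)\geq(p_1+p_2+p_3)+(p_1+p_2+p_3)\log q,
\]
that is, $3\log(q+1)\geq 3\cdot(p_1+p_2+p_3)/3\cdot(1+\log q)\cdot$, so $(p_1+p_2+p_3)\log(2q)\leq 3\log(q+1)$. This is exactly the claim.

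The main obstacle is getting the permutation bookkeeping right: checking which $\phi_\pi$ places which $p$ on the shared leg, and that it is the exponents $p_j,p_k$ of the non-shared parameters of $\langle q,1,1\rangle$ and $\langle1,q,1\rangle$ that appear. Symmetrisation via $\phi_\pi$ should handle this.

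The degenerate case, where some $p_i=0$, also needs care. There the inequality for that $i$ cannot be used, but I would argue the limiting version instead. If $p_1=0$, use monotonicity directly on the other two inequalities, or note that $\phi(\langle q,1,1\rangle)\leq\phi(\operatorname{Str}_q)\leq q+1$ still bounds the sum. I would try to handle this by continuity or by the trivial bounds $p_i\leq1$.

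Finally, plugging in $q=3$ gives $6^{\omega_\FF}\leq64$, so $\omega_\FF\leq 6\log 2/\log 6<2.3212$.
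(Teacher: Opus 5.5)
Your route is the same as the paper's: superadditivity for each permuted spectral point $\phi_\pi$ applied to $\operatorname{Str}_q=\langle q,1,1\rangle\oplusX\langle1,q,1\rangle\degen\langle q+1\rangle$, then AM--GM, then combining the three inequalities. But the combination step is wrong as written. Taking logarithms in $(q+1)^{1/p_i}\geq 2q^{(p_j+p_k)/(2p_i)}$ and multiplying by $p_i$ gives
\[
\log(q+1)\geq p_i+\tfrac{p_j+p_k}{2}\log q,\qquad\text{that is,}\qquad 2^{p_i}q^{(p_j+p_k)/2}\leq q+1,
\]
with no factor $p_i$ in front of $\log(q+1)$. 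Your version is $p_i\log(q+1)\geq p_i+\frac{p_j+p_k}{2}\log q$. Summed over $i$, it gives $\tau\log(q+1)\geq\tau\log(2q)$ with $\tau=p_1+p_2+p_3>0$ (you are in the case where all $p_i>0$). Hence $q+1\geq 2q$, which is false for every $q\geq2$. So the inequality $3\log(q+1)\geq\tau\log(2q)$ that you write at the end does not follow from the line before it. With the corrected per-$i$ inequality, summing over $i$ (or multiplying the three exponential forms, as the paper does) gives $(2q)^\tau\leq(q+1)^3$ directly.

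The corrected form also settles the case $p_i=0$, which you leave unresolved. The inequality $2^{p_i}q^{(p_j+p_k)/2}\leq q+1$ makes sense at $p_i=0$. It holds there by the bounds $p_j,p_k\leq1$ that you mention, since they give $q^{(p_j+p_k)/2}\leq q$. This is exactly the paper's argument, and no continuity or limiting argument is needed. In your form, the case $p_i=0$ would read $0\geq\frac{p_j+p_k}{2}\log q$, which fails for instance for $\flatrank{1}$, where $(p_1,p_2,p_3)=(0,1,1)$.

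A minor point: the permuted degenerations $\pi\cdot\operatorname{Str}_q\degen\langle q+1\rangle$ come from \cref{lem:degen-facts}(\ref{df:permutation}), not from \cref{lem:restriction-facts}(\ref{rf:cyclic}). Since you use $\phi_\pi\in\Delta_\FF$ anyway, it is simplest to apply $\phi_\pi$ to the unpermuted degeneration $\operatorname{Str}_q\degen\langle q+1\rangle$.
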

\begin{proof}
Let $\phi\in\Delta_\FF$, and let $p_1$, $p_2$ and~$p_3$ be the logarithms of~$\phi(\langle1,1,2\rangle)$, $\phi(\langle2,1,1\rangle)$ and~$\phi(\langle1,2,1\rangle)$. Let $\pi\in\mathfrak{S}_3$, and let $(i,j,k)=(\pi(1),\pi(2),\pi(3))$. 
By \cref{lem:spectral-facts}(\ref{sf:power}), we have that
$\phi_\pi(\langle q,1,1\rangle)=q^{p_j}$, $\phi_\pi(\langle1,q,1\rangle)=q^{p_k}$
and $\phi_\pi(\langle1,1,2\rangle)=2^{p_i}$
(this can be most easily seen by considering which leg of $\pi \cdot T$ is one-dimensional, with $T$ the tensor where $\phi_\pi$ is evaluated).

If $p_i>0$, then the AM-GM inequality, \cref{lem:superadd-general} for~$\phi_\pi$, and $\operatorname{Str}_q\degen\langle q+1\rangle$ give
\[
2q^{(p_j+p_k)/(2p_i)}\leq q^{p_j/p_i}+q^{p_k/p_i}\leq\phi_\pi(\operatorname{Str}_q)^{1/p_i}\leq(q+1)^{1/p_i}.
\]
So $2^{p_i}q^{(p_j+p_k)/2}\leq q+1$. This also holds if $p_i=0$, since $p_j,p_k\leq1$. Multiplying these three inequalities, one for each value of~$i$, gives $(2q)^{p_1+p_2+p_3}\leq(q+1)^3$. Since the maximum of $p_1+p_2+p_3$ over all $\phi\in\Delta_\FF$ is~$\omega_\FF$ (\cref{thm:spectral} and \cref{lem:spectral-facts}(\ref{sf:power})), this proves the claim.
\end{proof}

From the same tensor, Strassen's laser method~\cite[Corollary~7.5]{MR882307} (see also~\cite[Section~8 and Corollary~8.8]{blaser2013fast}) gives $4q^{\omega_\FF}\leq(q+1)^3$ for every integer $q\geq2$. This is best for $q=5$, where it gives $5^{\omega_\FF}\leq54$, equivalently $\omega_\FF<2.48$.

The proof of \cref{prop:laser-sep} is easily generalized to obtain \cref{lem:sym-partial} below, which we use in \cref{sec:intro-outline} for illustrative purposes. We will not need it for the remainder of the text, so it may be skipped.
We recover \cref{prop:laser-sep} by setting $S = \langle q,1,1\rangle$ and $s=t=1$, and applying $\operatorname{Str}_q \degen \langle q+1 \rangle$.

\begin{lemma}[Symmetrised partial direct sums]\label{lem:sym-partial}
Let~$S$ be a tensor over~$\FF$, and let $s,t\geq1$ be integers. Then
\[
\bigotimes_{\pi\in\mathfrak{S}_3}\pi\cdot\bigl(S\otimes\langle1,1,s+t\rangle\bigr)\asympleq\bigotimes_{\pi\in\mathfrak{S}_3}\pi\cdot\Bigl(\bigl(S\otimes\langle1,1,s\rangle\bigr)\oplusX\bigl(((23)\cdot S)\otimes\langle1,1,t\rangle\bigr)\Bigr).
\]
\end{lemma}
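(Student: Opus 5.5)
By Strassen's duality theorem, $T\asympleq S$ holds if and only if $\phi(T)\leq\phi(S)$ for every $\phi\in\Delta_\FF$. So we fix a spectral point $\phi$ and compare its values on the two symmetrisations. Both sides are tensor products over $\pi\in\mathfrak{S}_3$, so $\phi$ of each side is a product of six factors $\phi_\pi(\cdot)$, and $\phi_\pi\in\Delta_\FF$ by \cref{lem:spectral-facts}(\ref{sf:permutation}). Let $p_1,p_2,p_3$ be the logarithms of $\phi(\langle1,1,2\rangle)$, $\phi(\langle2,1,1\rangle)$ and $\phi(\langle1,2,1\rangle)$. For $\pi$ write $i=\pi(1)$, so that $\phi_\pi(\langle1,1,n\rangle)=n^{p_i}$, exactly as in the proof of \cref{prop:laser-sep}. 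Put $x\coloneqq\phi_\pi(S)$ and $y\coloneqq\phi_\pi((23)\cdot S)$.

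The factor on the left is $\phi_\pi(S\otimes\langle1,1,s+t\rangle)=x(s+t)^{p_i}$. For the factor on the right, first suppose $p_i>0$. Apply \cref{lem:superadd-general} to $\phi_\pi$; its hypothesis holds because $\phi_\pi(\langle1,1,2\rangle)=2^{p_i}$. This gives
\[
\phi_\pi(R)^{1/p_i}\geq s\,x^{1/p_i}+t\,y^{1/p_i},
\]
where $R$ is the partial direct sum in the statement. If $p_i=0$, superadditivity is unavailable. In that case we use monotonicity instead: $S\otimes\langle1,1,s\rangle\leq R$ gives the factor bound $\phi_\pi(R)\geq x=x(s+t)^{0}$ directly.

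A single factor need not satisfy $x(s+t)^{p_i}\leq\phi_\pi(R)$, since $y$ could be smaller than $x$. The fix is to pair $\pi$ with $\pi'\coloneqq\pi\circ(23)$. Because $(23)$ fixes leg $1$, we have $\pi'(1)=\pi(1)=i$, so $p_i$ is unchanged. Moreover $\phi_{\pi'}(S)=\phi(\pi\cdot((23)\cdot S))=y$ and $\phi_{\pi'}((23)\cdot S)=x$. This needs a quick check that $\pi\cdot(\sigma\cdot T)=(\pi\sigma)\cdot T$ with the paper's convention; if the composition order is the other way round, we take $(23)\circ\pi$ instead. Either way, $\mathfrak{S}_3$ splits into three cosets $\{\pi,\pi'\}$, each of which has a common $p_i$ and the roles of $x$ and $y$ swapped.

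For one coset with $p\coloneqq p_i>0$, the two right-hand factors multiply to at least
\[
\bigl(s x^{1/p}+t y^{1/p}\bigr)^{p}\bigl(s y^{1/p}+t x^{1/p}\bigr)^{p}.
\]
By Cauchy--Schwarz, or AM--GM on the cross terms,
\[
(su+tv)(sv+tu)=st(u^2+v^2)+(s^2+t^2)uv\geq(s+t)^2uv
\]
with $u=x^{1/p}$ and $v=y^{1/p}$. Raising to the power $p$ gives at least $(s+t)^{2p}xy$, which is exactly the product of the two left-hand factors for this coset. When $p_i=0$ the two factors give $x\cdot y$ on both sides, directly. Multiplying over the three cosets yields $\phi(\mathrm{LHS})\leq\phi(\mathrm{RHS})$ for every $\phi$, and duality concludes the proof.

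I expect the main obstacle to be the bookkeeping in the pairing step. One has to verify that composing with $(23)$ preserves which leg becomes the first leg, and that the values $x$ and $y$ really swap. The rest is \cref{lem:superadd-general} together with an elementary inequality.
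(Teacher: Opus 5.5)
Your proof is correct and follows the paper's argument essentially step for step. The shared steps are: pairing $\pi$ with $\pi\circ(23)$; applying \cref{lem:superadd-general} to $\phi_\pi$ and to $\phi_{\pi\circ(23)}$; the identity $(su+tv)(sv+tu)-(s+t)^2uv=st(u-v)^2$; and, when $p=0$, the restriction $S\otimes\langle1,1,s\rangle\leq R$. Your composition-order check comes out as you hoped: with the paper's convention $\pi\cdot(\sigma\cdot T)=(\pi\circ\sigma)\cdot T$, which the paper also uses, so the values $x$ and $y$ really swap within each pair.
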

\begin{proof}
Let $T\coloneqq(S\otimes\langle1,1,s\rangle)\oplusX(((23)\cdot S)\otimes\langle1,1,t\rangle)$, and let $\phi\in\Delta_\FF$. By Strassen's duality theorem~\cite[Theorem~3.2]{strassen1988asymptotic}, and since~$\phi$ is multiplicative and the six permutations form three pairs $\{\pi,\pi\circ(23)\}$, it suffices to show that
\[
\phi_\pi(T)\,\phi_{\pi\circ(23)}(T)\geq\phi_\pi\bigl(S\otimes\langle1,1,s+t\rangle\bigr)\,\phi_{\pi\circ(23)}\bigl(S\otimes\langle1,1,s+t\rangle\bigr)
\]
for every $\pi\in\mathfrak{S}_3$. Fix~$\pi$, and let $p\coloneqq\log\phi_\pi(\langle1,1,2\rangle)$, $x\coloneqq\phi_\pi(S)$ and $y\coloneqq\phi_{\pi\circ(23)}(S)$. Since $(23)\cdot\langle1,1,2\rangle\cong\langle1,1,2\rangle$, also $\phi_{\pi\circ(23)}(\langle1,1,2\rangle)=2^p$, so $\phi_\pi(\langle1,1,m\rangle)=\phi_{\pi\circ(23)}(\langle1,1,m\rangle)=m^p$ for every integer $m\geq1$ (\cref{lem:spectral-facts}(\ref{sf:power}) and~(\ref{sf:permutation})). So the right-hand side is $(s+t)^{2p}xy$. Moreover, $\phi_\pi((23)\cdot S)=y$ and $\phi_{\pi\circ(23)}((23)\cdot S)=x$. If $p>0$, then \cref{lem:superadd-general} for~$\phi_\pi$ and for~$\phi_{\pi\circ(23)}$ gives
\[
\phi_\pi(T)^{1/p}\geq sx^{1/p}+ty^{1/p}\quad\text{and}\quad\phi_{\pi\circ(23)}(T)^{1/p}\geq sy^{1/p}+tx^{1/p},
\]
and $(sx^{1/p}+ty^{1/p})(sy^{1/p}+tx^{1/p})\geq(s+t)^2(xy)^{1/p}$, since the difference is $st(x^{1/p}-y^{1/p})^2$. Raising to the power~$p$ gives the claim. If $p=0$, then the restriction $T\geq S\otimes\langle1,1,s\rangle$ gives $\phi_\pi(T)\geq x$ and $\phi_{\pi\circ(23)}(T)\geq y$, which is the claim.
\end{proof}

\subsection{Concavity and tripling for each leg}\label{sec:rect-single}
In this subsection we will bound the values $\phi(\polmul_{a,b})$ of a spectral point~$\phi$ at the polynomial multiplication tensors.
We will show that, if $p\coloneqq\log\phi(\langle1,1,2\rangle)>0$, then $b \mapsto \phi(\polmul_{a,b})^{1/p}$ is concave and satisfies a tripling inequality (\cref{lem:profiles}). 
The results follows directly from the degenerations of \cref{sec:constructions}, combined with \cref{lem:superadd-general}, see \cref{tab:profiles}.
\begin{table}[H]
\centering
\renewcommand{\arraystretch}{1.3}
\begin{tabular*}{\textwidth}{@{}l@{\hspace{0.6em}}l@{\extracolsep{\fill}\hspace{0.8em}}l@{}}
\toprule
& Degeneration & Inequality \\
\midrule
(D1) & $\polmul_{a,b} \degen \langle a+b-1\rangle$ & $\phi(\polmul_{a,b}) \leq a+b-1$\\
(D3) & $\polmul_{a,b+1}\oplusX\polmul_{a,b-1} \degen \polmul_{a,b}\oplusX\polmul_{a,b}$, $b\geq2$ & $\phi(\polmul_{a,b+1})^{1/p}+\phi(\polmul_{a,b-1})^{1/p} \leq 2\phi(\polmul_{a,b})^{1/p}$\\
(D4) & $\polmul_{a,b}\oplusX((23)\cdot\polmul_{a,b})\oplusX\polmul_{a,b} \degen \polmul_{a,3b+a-1}$ & $2\phi(\polmul_{a,b})^{1/p}+\phi((23)\cdot\polmul_{a,b})^{1/p}$\\
&& $\qquad\qquad\qquad\leq \phi(\polmul_{a,3b+a-1})^{1/p}$\\
\bottomrule
\end{tabular*}
\caption{The ingredients \ruleref{D1}, \ruleref{D3} and~\ruleref{D4} of \cref{sec:ingredients} (\cref{prop:polmul-rank}, \cref{lem:det-degen} and \cref{lem:tripling-degen}), for integers $a,b\geq1$, and the inequalities that they give for a spectral point $\phi\in\Delta_\FF$, where $p\coloneqq\log\phi(\langle1,1,2\rangle)>0$ in the last two rows.}
\label{tab:profiles}
\end{table}

We call a function $h:\ZZ_{\geq1}\to\RR$ \emph{concave} if $h(m+1)+h(m-1)\leq2h(m)$ for every $m\geq2$.

\begin{lemma}\label{lem:profiles}
Let $\phi\in\Delta_\FF$, and let $p\coloneqq\log\phi(\langle1,1,2\rangle)$.
\begin{enumerate}[\upshape(i)]
\item\label{pr:basic} For all $a,b\geq1$, we have $\min(a,b)\leq\phi(\polmul_{a,b})\leq a+b-1$, and $\phi(\polmul_{a,b})$ is nondecreasing in~$a$ and in~$b$.
\item\label{pr:concave} If $p>0$, then for all $a\geq1$, the function $b\mapsto\phi(\polmul_{a,b})^{1/p}$ is concave.
\item\label{pr:tripling} If $p>0$, then for all $a,b\geq1$, we have $2\phi(\polmul_{a,b})^{1/p}+\phi((23)\cdot\polmul_{a,b})^{1/p}\leq\phi(\polmul_{a,3b+a-1})^{1/p}$.
\end{enumerate}
\end{lemma}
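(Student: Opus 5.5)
The three parts are direct consequences of the degenerations in \cref{sec:constructions} together with the basic properties of spectral points, as summarised in \cref{tab:profiles}.

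For part~(\ref{pr:basic}), the upper bound $\phi(\polmul_{a,b})\leq a+b-1$ follows from \cref{prop:polmul-rank} and \cref{lem:spectral-facts}(\ref{sf:monotone}) and~(\ref{sf:unit}). For monotonicity, we use restrictions: for $a'\leq a$ and $b'\leq b$, projecting the first leg onto $e_1,\ldots,e_{a'}$, the second onto $e_1,\ldots,e_{b'}$, and the third onto $e_1,\ldots,e_{a'+b'-1}$ maps $\polmul_{a,b}$ to $\polmul_{a',b'}$. This is because the terms $e_i\otimes e_j\otimes e_{i+j-1}$ with $i\leq a'$ and $j\leq b'$ automatically have $i+j-1\leq a'+b'-1$. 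So $\polmul_{a',b'}\leq\polmul_{a,b}$, and $\phi$ is monotone. For the lower bound, with $n=\min(a,b)$, monotonicity gives $\phi(\polmul_{a,b})\geq\phi(\polmul_{n,n})$. \Cref{lem:unit-polmul} gives $\langle n\rangle\degen\polmul_{n,n}$, hence $\phi(\polmul_{n,n})\geq n$.

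For part~(\ref{pr:concave}), fix $a\geq1$ and $b\geq2$. By \cref{lem:det-degen}, $\polmul_{a,b+1}\oplusX\polmul_{a,b-1}\degen\polmul_{a,b}\oplusX\polmul_{a,b}$, since $\degenTT$ implies $\degen$. Applying $\phi$ and raising to the power $1/p$ gives
\[
\phi(\polmul_{a,b+1}\oplusX\polmul_{a,b-1})^{1/p}\leq\phi(\polmul_{a,b}\oplusX\polmul_{a,b})^{1/p}.
\]
We then bound each side. On the left, \cref{lem:superadd-general} gives the lower bound $\phi(\polmul_{a,b+1})^{1/p}+\phi(\polmul_{a,b-1})^{1/p}$. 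On the right, we need the exact value. By \cref{lem:two-copies}, $\polmul_{a,b}\oplusX\polmul_{a,b}\cong\polmul_{a,b}\otimes\langle1,1,2\rangle$, so multiplicativity gives $\phi(\polmul_{a,b}\oplusX\polmul_{a,b})=2^p\phi(\polmul_{a,b})$. Its $1/p$th power is exactly $2\phi(\polmul_{a,b})^{1/p}$, which gives the concavity inequality.

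For part~(\ref{pr:tripling}), we apply $\phi$ to the degeneration of \cref{lem:tripling-degen}. This gives $\phi(\polmul_{a,b}\oplusX((23)\cdot\polmul_{a,b})\oplusX\polmul_{a,b})\leq\phi(\polmul_{a,3b+a-1})$. We raise this to the power $1/p$ and lower-bound the left-hand side by \cref{lem:superadd-general} with $m=3$. That lemma applies because the three summands share the first leg $\FF^a$: the middle summand $(23)\cdot\polmul_{a,b}$ still has its first leg equal to $\FF^a$.

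The only point needing care is the right-hand side in part~(\ref{pr:concave}). There we need an upper bound rather than superadditivity, and this upper bound comes from the isomorphism with $\polmul_{a,b}\otimes\langle1,1,2\rangle$. Everything else is a direct combination of the earlier results.
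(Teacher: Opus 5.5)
Your proposal is correct and follows the paper's proof essentially step for step. Part (i) uses the border rank bound, the degeneration $\langle n\rangle\degen\polmul_{n,n}$ and the restriction $\polmul_{a',b'}\leq\polmul_{a,b}$, and parts (ii)--(iii) combine the concavity and tripling degenerations with superadditivity on the left and the isomorphism $\polmul_{a,b}\oplusX\polmul_{a,b}\cong\polmul_{a,b}\otimes\langle1,1,2\rangle$ on the right, with your explicit checks of the projection restriction and the shared first leg filling in details the paper leaves implicit.
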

\begin{proof}
(\ref{pr:basic}) For $n=\min(a,b)$, the degenerations $\langle n\rangle\degen\polmul_{n,n}\leq\polmul_{a,b}$ (\cref{lem:unit-polmul}) give $n=\phi(\langle n\rangle)\leq\phi(\polmul_{a,b})$ (\cref{lem:spectral-facts}(\ref{sf:monotone}) and~(\ref{sf:unit})), and $\polmul_{a,b}\degen\langle a+b-1\rangle$ (\cref{prop:polmul-rank}) gives $\phi(\polmul_{a,b})\leq a+b-1$. If $a_1\leq a_2$ and $b_1\leq b_2$, then $\polmul_{a_1,b_1}\leq\polmul_{a_2,b_2}$, so $\phi(\polmul_{a_1,b_1})\leq\phi(\polmul_{a_2,b_2})$.

(\ref{pr:concave}) and~(\ref{pr:tripling}) Suppose that $p>0$. By \cref{lem:det-degen} and \cref{lem:two-copies}, for all integers $a\geq1$ and $b\geq2$,
\[
\polmul_{a,b+1}\oplusX\polmul_{a,b-1}\degen\polmul_{a,b}\otimes\langle1,1,2\rangle,
\]
and by \cref{lem:tripling-degen}, for all integers $a,b\geq1$,
\[
\polmul_{a,b}\oplusX((23)\cdot\polmul_{a,b})\oplusX\polmul_{a,b}\degen\polmul_{a,3b+a-1}.
\]
Moreover, $\phi(\polmul_{a,b}\otimes\langle1,1,2\rangle)=2^p\phi(\polmul_{a,b})$. So \cref{lem:superadd-general} and the monotonicity of~$\phi$ under degeneration give~(\ref{pr:concave}) from the first degeneration and~(\ref{pr:tripling}) from the second.
\end{proof}

In the subsequent sections we will use the following definition as a shorthand:
\begin{align}
    \phi(a,b) \coloneqq \phi(\polmul_{a,b}).
\end{align}

\subsection{Square matrix multiplication}\label{sec:rect-square}
In this subsection we will prove \cref{thm:main}. By \cref{thm:spectral}, it suffices to show that $\phi(\langle2,2,2\rangle)\leq2^{9/4}$ for every $\phi\in\Delta_\FF$. 
For this we will use \cref{lem:profiles} applied to $\phi_{\pi}$ with $\pi$ ranging over $\mathfrak{S}_3$.
In the case that $p\coloneqq\log\phi_\pi(\langle1,1,2\rangle)>0$, we will apply \cref{lem:paired-step} to the functions $b \mapsto \phi_\pi(a,b)^{1/p}$ and $b\mapsto \phi_{\pi \circ (23)}(a,b)^{1/p}$ appearing in \cref{lem:profiles}. 
Roughly speaking, this will then allow us to lower bound the growth of $n \mapsto \phi_{\pi}(n,n)$ in terms of~$p$.
Subsequently, we will obtain a lower bound on $\phi$ evaluated on $\bigotimes_{\pi \in \mathfrak{S}_3} \pi \cdot \polmul_{n,n}$ in terms of $\phi(\langle2,2,2\rangle)$. The upper bound will then follow from the fact that $\bigotimes_{\pi \in \mathfrak{S}_3} \pi \cdot \polmul_{a,b}$ has low border rank (\cref{prop:polmul-rank}).

In \Cref{sec:rect-cap,sec:rect-proof} we will treat the rectangular case. Those sections will not rely on any result from this section, and the results there will also recover the results of this section. So, this section can also safely be skipped. 

We begin with a general lemma about concave functions.

\begin{lemma}\label{lem:paired-step}
Let $a\geq1$ be an integer, and let $g,h\colon\ZZ_{\geq1}\to\RR_{>0}$ be concave functions such that $2g(m)+h(m)\leq g(3m+a-1)$ and $2h(m)+g(m)\leq h(3m+a-1)$ for every integer $m\geq1$. Then for every integer $b\geq1$,
\[
\frac{g(b+1)}{g(b)}\, \frac{h(b+1)}{h(b)}\geq\Bigl(\frac{2b+a+1}{2b+a-1}\Bigr)^2.
\]
\end{lemma}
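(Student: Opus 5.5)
The plan is to turn the tripling inequality into a lower bound on the forward differences $g(b+1)-g(b)$ and $h(b+1)-h(b)$, using concavity, and then to finish with AM--GM.

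\emph{Step 1 (linear upper bound from concavity).} Fix $b\geq1$ and put $L\coloneqq 2b+a-1\geq 2$, so that $3b+a-1=b+L$. Concavity of $g$ in the sense of the paper, $g(m+1)+g(m-1)\leq 2g(m)$ for $m\geq2$, says exactly that the differences $g(m+1)-g(m)$ are nonincreasing in $m\geq1$. Hence every difference $g(m+1)-g(m)$ with $m\geq b$ is at most $\Delta_g\coloneqq g(b+1)-g(b)$. Summing these $L$ differences from $b$ to $b+L-1$ gives
\[
g(3b+a-1)\leq g(b)+L\,\Delta_g .
\]
The same argument gives $h(3b+a-1)\leq h(b)+L\,\Delta_h$, where $\Delta_h\coloneqq h(b+1)-h(b)$. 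This step does not need the sign of $\Delta_g$ or $\Delta_h$.

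\emph{Step 2 (combine with tripling).} The hypothesis $2g(b)+h(b)\leq g(3b+a-1)$ and Step 1 give $L\Delta_g\geq g(b)+h(b)\eqqcolon s$. Symmetrically, the hypothesis for $h$ gives $L\Delta_h\geq s$. Therefore
\[
g(b+1)\geq g(b)+s/L
\quad\text{and}\quad
h(b+1)\geq h(b)+s/L .
\]
Both right-hand sides are positive, since $g,h>0$.

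\emph{Step 3 (optimise).} Write $x\coloneqq g(b)/s$ and $y\coloneqq h(b)/s$, so that $x,y>0$ and $x+y=1$. Then
\[
\frac{g(b+1)}{g(b)}\,\frac{h(b+1)}{h(b)}
\geq\frac{(x+1/L)(y+1/L)}{xy}
=1+\frac{1/L+1/L^2}{xy}.
\]
Since $xy\leq 1/4$ by AM--GM, the right-hand side is at least $1+4/L+4/L^2=\bigl((L+2)/L\bigr)^2$. This equals $\bigl((2b+a+1)/(2b+a-1)\bigr)^2$, which is the claim.

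The only delicate point is Step 1. There one has to check that the paper's discrete notion of concavity really yields monotone differences starting at $m=1$. It does: the defining inequality for $m\geq2$ compares the difference at $m-1$ with the one at $m$. One also has to check that $3b+a-1\geq b+1$, so that the telescoping sum is nonempty; this holds because $L\geq2$. Everything else is elementary algebra.
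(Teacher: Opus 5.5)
Your proof is correct and is essentially the paper's argument: concavity together with the tripling hypothesis gives $g(b+1)-g(b)\geq (g(b)+h(b))/(2b+a-1)$, and likewise for $h$, and then $(g(b)+h(b))^2\geq 4g(b)h(b)$ finishes it. The only difference is cosmetic: you normalise by $s=g(b)+h(b)$ before applying AM--GM, whereas the paper expands $(g(b)+S/D)(h(b)+S/D)$ directly.
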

\begin{proof}
Let $D=2b+a-1$ and $S=g(b)+h(b)$. Since~$g$ is concave, its differences $g(k+1)-g(k)$ do not increase with~$k$. So $g(b+1)-g(b)$ is at least the average of the~$D$ differences $g(j+1)-g(j)$ with $b\leq j<3b+a-1$, and the first hypothesis with $m=b$ gives
\[
g(b+1)-g(b)\geq\frac{g(3b+a-1)-g(b)}{D}\geq\frac SD.
\]
Likewise, the second hypothesis gives $h(b+1)-h(b)\geq S/D$. Since $S^2\geq4g(b)\,h(b)$, we find
\[
g(b+1)\,h(b+1)\geq\Bigl(g(b)+\frac SD\Bigr)\Bigl(h(b)+\frac SD\Bigr)=g(b)\,h(b)+\frac{S^2}D+\frac{S^2}{D^2}\geq g(b)\,h(b)\Bigl(1+\frac2D\Bigr)^2.\qedhere
\]
\end{proof}

\begin{proposition}\label{prop:angular-square}
Let $\phi\in\Delta_\FF$, and let $\tau\coloneqq\log\phi(\langle2,2,2\rangle)$. Then for every integer $n\geq1$,
\[
\prod_{\pi\in\mathfrak{S}_3}\phi(\pi\cdot\polmul_{n,n})\geq\Bigl(\frac{3n-1}2\prod_{k=1}^{n-1}\frac{3k+1}{3k}\Bigr)^{2\tau}\geq n^{8\tau/3}.
\]
\end{proposition}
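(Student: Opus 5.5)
The plan is to build up $\prod_{\pi}\phi(\pi\cdot\polmul_{a,b})$ along a diagonal path from $(1,1)$ to $(n,n)$. Each step raises $b$ by one at fixed $a$, or raises $a$ by one at fixed $b$. Every step is controlled by \cref{lem:paired-step}. Throughout, write $p_1,p_2,p_3$ for the logarithms of $\phi(\langle1,1,2\rangle)$, $\phi(\langle2,1,1\rangle)$ and $\phi(\langle1,2,1\rangle)$, so that $\tau=p_1+p_2+p_3$ by \cref{lem:spectral-facts}(\ref{sf:power}). Group $\mathfrak{S}_3$ into the three pairs $\{\pi,\pi\circ(23)\}$. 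For a fixed pair let $p\coloneqq\log\phi_\pi(\langle1,1,2\rangle)$. As in the proof of \cref{lem:sym-partial}, this $p$ is also $\log\phi_{\pi\circ(23)}(\langle1,1,2\rangle)$, and as $\pi$ runs over representatives of the three pairs, $p$ runs over $p_1,p_2,p_3$.

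\emph{Step in $b$.} Suppose $p>0$. Set $g(m)\coloneqq\phi_\pi(a,m)^{1/p}$ and $h(m)\coloneqq\phi_{\pi\circ(23)}(a,m)^{1/p}$. Both functions are positive by \cref{lem:spectral-facts}(\ref{sf:nonzero}) and concave by \cref{lem:profiles}(\ref{pr:concave}) applied to $\phi_\pi$ and to $\phi_{\pi\circ(23)}$, which are spectral points by \cref{lem:spectral-facts}(\ref{sf:permutation}). The tripling inequality \cref{lem:profiles}(\ref{pr:tripling}) for $\phi_\pi$ involves $\phi_\pi((23)\cdot\polmul_{a,m})=\phi_{\pi\circ(23)}(\polmul_{a,m})$, with the composition convention checked against the definition of $\phi_\pi$. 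So it reads $2g(m)+h(m)\le g(3m+a-1)$, and symmetrically $2h(m)+g(m)\le h(3m+a-1)$. \Cref{lem:paired-step} then gives
\[
\phi_\pi(a,b+1)\,\phi_{\pi\circ(23)}(a,b+1)\ge\phi_\pi(a,b)\,\phi_{\pi\circ(23)}(a,b)\Bigl(\tfrac{2b+a+1}{2b+a-1}\Bigr)^{2p}.
\]
If $p=0$, this inequality follows from monotonicity in $b$ (\cref{lem:profiles}(\ref{pr:basic})). Multiplying over the three pairs shows that $\prod_\pi\phi(\pi\cdot\polmul_{a,b})$ grows by at least the factor $\bigl(\frac{2b+a+1}{2b+a-1}\bigr)^{2\tau}$ when $b$ increases by one.

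\emph{Step in $a$.} Use $\polmul_{a,b}\cong(12)\cdot\polmul_{b,a}$. This turns the six-fold product at $(a,b)$ into the same six-fold product at $(b,a)$, because $\pi\mapsto\pi\circ(12)$ permutes $\mathfrak{S}_3$. Hence increasing $a$ at fixed $b$ is a $b$-step with the roles of the two parameters swapped, and it gains at least $\bigl(\frac{2a+b+1}{2a+b-1}\bigr)^{2\tau}$.

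Now go from $(k,k)$ to $(k,k+1)$, which gains $\bigl(\frac{3k+1}{3k-1}\bigr)^{2\tau}$, and then from $(k,k+1)$ to $(k+1,k+1)$, which gains $\bigl(\frac{3k+2}{3k}\bigr)^{2\tau}$. The base value is $\prod_\pi\phi(\pi\cdot\polmul_{1,1})=1$. Telescoping gives $\prod_{k=1}^{n-1}\frac{3k+2}{3k-1}=\frac{3n-1}{2}$, so the product at $(n,n)$ is at least $\bigl(\frac{3n-1}2\prod_{k=1}^{n-1}\frac{3k+1}{3k}\bigr)^{2\tau}$. This is the first inequality.

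For the second inequality, I will show $Q_n\coloneqq\frac{3n-1}2\prod_{k=1}^{n-1}\frac{3k+1}{3k}\ge n^{4/3}$. For $n=1$ this is an equality. For the induction step $n\to n+1$, the claim reduces to
\[
\frac{(3n+2)(3n+1)}{(3n-1)\,3n}\ge\Bigl(1+\frac1n\Bigr)^{4/3},
\]
which can be checked by expanding both sides, for example via the bound $(1+x)^{4/3}\le 1+\tfrac43x+\tfrac29x^2$ for $x\in[0,1]$. I expect this elementary inequality, together with getting the composition convention right in the pairing $\phi_\pi((23)\cdot T)=\phi_{\pi\circ(23)}(T)$, to be the only delicate points.
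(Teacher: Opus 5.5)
Your proposal is correct and is essentially the paper's proof. The paper also pairs $\sigma$ with $\sigma\circ(23)$, applies \cref{lem:paired-step} at $b=n$ with first parameter $a\in\{n,n+1\}$, and uses $\phi_\pi(a,b)=\phi_{\pi\circ(12)}(b,a)$ to split each diagonal step into two steps in the second argument, giving the same factors $\frac{3n+1}{3n-1}$ and $\frac{3n+2}{3n}$. The only real difference is the final estimate: the paper uses $\frac{3n-1}2\ge n$ and $\frac{3k+1}{3k}\ge(\frac{k+1}k)^{1/3}$, while you use induction, which works since $9n^2(9n^2+9n+2)-(9n^2-3n)(9n^2+12n+2)=36n^2+6n>0$. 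In either version, you should say explicitly that $\tau\ge0$ when you raise $Q_n\ge n^{4/3}$ to the power $2\tau$.
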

\begin{proof}
Let $p_1$, $p_2$ and~$p_3$ be the logarithms of~$\phi(\langle1,1,2\rangle)$, $\phi(\langle2,1,1\rangle)$ and~$\phi(\langle1,2,1\rangle)$, so that $\tau=p_1+p_2+p_3$ (\cref{lem:spectral-facts}(\ref{sf:power})). For every integer $n\geq1$, let $G_n\coloneqq\prod_{\pi\in\mathfrak{S}_3}\phi_\pi(n,n)$, the left-hand side of the claim. Since $\phi_\pi(1,1)=1$ (\cref{lem:profiles}(\ref{pr:basic}) for~$\phi_\pi$), we have $G_1=1$.
Now fix an integer $n\geq1$, and let $\pi\in\mathfrak{S}_3$. Using that $\phi_\pi(n+1,n)=\phi_{\pi\circ(12)}(n,n+1)$ and $\phi_\pi(n,n)=\phi_{\pi\circ(12)}(n,n)$, since $\phi_\pi(a,b)=\phi_{\pi\circ(12)}(b,a)$, we obtain
\[
\frac{\phi_\pi(n+1,n+1)}{\phi_\pi(n,n)}=\frac{\phi_\pi(n+1,n+1)}{\phi_\pi(n+1,n)}\cdot\frac{\phi_{\pi\circ(12)}(n,n+1)}{\phi_{\pi\circ(12)}(n,n)}.
\]
We multiply over $\pi\in\mathfrak{S}_3$. Since $\pi\mapsto\pi\circ(12)$ is a bijection of~$\mathfrak{S}_3$, this gives
\[
\frac{G_{n+1}}{G_n}=\prod_{a\in\{n,n+1\}}\ \prod_{\sigma\in\mathfrak{S}_3}\frac{\phi_\sigma(a,n+1)}{\phi_\sigma(a,n)}.
\]
The six permutations form three pairs $\{\sigma,\sigma\circ(23)\}$, one for each $i\in\{1,2,3\}$, with $\sigma(1)=i$. Let $a\in\{n,n+1\}$. If $p_i>0$, then by \cref{lem:profiles}, applied to~$\phi_\sigma$ and~$\phi_{\sigma\circ(23)}$, the functions $m\mapsto\phi_\sigma(a,m)^{1/p_i}$ and $m\mapsto\phi_{\sigma\circ(23)}(a,m)^{1/p_i}$ satisfy the hypotheses of \cref{lem:paired-step}. Here we use that both spectral points take the value~$2^{p_i}$ at~$\langle1,1,2\rangle$, since~$\sigma$ and~$\sigma\circ(23)$ move its one-dimensional first leg to leg~$i$, and that $\phi_\sigma((23)\cdot T)=\phi_{\sigma\circ(23)}(T)$ and $\phi_{\sigma\circ(23)}((23)\cdot T)=\phi_\sigma(T)$ for every tensor~$T$, since $\pi\cdot(\rho\cdot T)=(\pi\circ\rho)\cdot T$ for all $\pi,\rho\in\mathfrak{S}_3$. That lemma with $b=n$, raised to the power~$p_i$, gives
\[
\frac{\phi_\sigma(a,n+1)}{\phi_\sigma(a,n)}\cdot\frac{\phi_{\sigma\circ(23)}(a,n+1)}{\phi_{\sigma\circ(23)}(a,n)}\geq\Bigl(\frac{2n+a+1}{2n+a-1}\Bigr)^{2p_i}.
\]
If $p_i=0$, this holds as well, since the left-hand side is at least~$1$: $\phi_\sigma$ and~$\phi_{\sigma\circ(23)}$ are nondecreasing in the second argument (\cref{lem:profiles}(\ref{pr:basic})). Multiplying over the three pairs and the two values of~$a$ gives
\[
\frac{G_{n+1}}{G_n}\geq\Bigl(\frac{(3n+1)(3n+2)}{(3n-1)\,3n}\Bigr)^{2\tau},\quad\text{so}\quad G_n\geq\Bigl(\prod_{k=1}^{n-1}\frac{(3k+1)(3k+2)}{(3k-1)\,3k}\Bigr)^{2\tau},
\]
where we multiply these inequalities and use that $G_1=1$. The factors $(3k+2)/(3k-1)$ telescope to $(3n-1)/2$, which gives the first inequality of the claim. For the second, we use that $\tau\geq0$, $(3n-1)/2\geq n$, and $(3k+1)/(3k)\geq((k+1)/k)^{1/3}$ by the binomial theorem, so that $\prod_{k=1}^{n-1}(3k+1)/(3k)\geq n^{1/3}$.
\end{proof}

\begin{proof}[Proof of \cref{thm:main}]
Let $\phi\in\Delta_\FF$, and let $\tau\coloneqq\log\phi(\langle2,2,2\rangle)$. Since $\phi(\pi\cdot\polmul_{n,n})\leq2n-1$ for every $\pi\in\mathfrak{S}_3$ (\cref{lem:profiles}(\ref{pr:basic}) for~$\phi_\pi$), \cref{prop:angular-square} gives $n^{8\tau/3}\leq(2n-1)^6<64n^6$ for every integer $n\geq1$, so $8\tau/3\leq6$, that is, $\phi(\langle2,2,2\rangle)\leq2^{9/4}$. Since this holds for every $\phi\in\Delta_\FF$, \cref{thm:spectral} gives $\omega_\FF\leq9/4$.
\end{proof}

For a fixed integer $n\geq2$, the first inequality of \cref{prop:angular-square} and the upper bound $(2n-1)^6$ give, in the same way,
\[
\omega_\FF\leq\frac{3\log(2n-1)}{\log\Bigl(\frac{3n-1}2\prod_{k=1}^{n-1}\frac{3k+1}{3k}\Bigr)}.
\]
For example $\omega_\FF<2.74$ for $n=2$, the bound of the example in \cref{toy-example}, and $\omega_\FF<2.65$ for $n=3$.

\subsection{Quadratic constraints on asymptotic spectrum of matrix multiplication}\label{sec:rect-cap}
We now extend the analysis in \cref{lem:paired-step} and \cref{prop:angular-square} to the rectangular case, which we will use to prove \cref{thm:rect} in \cref{sec:rect-proof}.
The proof ideas are essentially the same, but we need to take into account the ratios of the ``weights'' given to the three legs resulting from the definition of $\omega_\FF(a,b,c)$.

The following lemma generalises \cref{lem:paired-step}, which is the case $u=v=b$ and $r=s=1$.
\begin{lemma}\label{lem:paired-growth}
Let $a\geq1$ be an integer, and let $g,h\colon\ZZ_{\geq1}\to\RR_{>0}$ be concave functions such that $2g(m)+h(m)\leq g(3m+a-1)$ and $2h(m)+g(m)\leq h(3m+a-1)$ for every integer $m\geq1$. Then for all integers $u,v,r,s\geq1$,
\[
\Bigl(\frac{g(u+r)}{g(u)}\Bigr)^s\Bigl(\frac{h(v+s)}{h(v)}\Bigr)^r\geq\Bigl(\frac{D}{D-\frac{4rs}{r+s}}\Bigr)^{r+s},
\]
where $D\coloneqq u+v+r+s+a-1$.
\end{lemma}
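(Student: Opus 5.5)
The plan is to generalise the proof of \cref{lem:paired-step}: use concavity to turn the tripling hypotheses into lower bounds on the increments $g(u+r)-g(u)$ and $h(v+s)-h(v)$, and then convert the two additive bounds into the multiplicative statement by an AM--GM type estimate.

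First, some preliminary facts. A concave function $\ZZ_{\geq1}\to\RR_{>0}$ is nondecreasing, since a negative difference would persist and eventually force negative values. Write $\alpha\coloneqq(g(u+r)-g(u))/r$ and $\beta\coloneqq(h(v+s)-h(v))/s$ for the average slopes on the two windows. By concavity, every chord of $g$ lying weakly to the right of $[u,u+r]$ has slope at most $\alpha$. In particular, $g(y)\leq g(x)+\alpha(y-x)$ whenever $u\leq x\leq y$. The same holds for $h$ with $\beta$ and $v$.

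Next, I would pick one common parameter $m$, with $m\geq\max(u,v)$ so that the monotonicity and chord bounds apply. The tripling hypotheses at $m$ give
\[
g(m)+\bigl(g(m)+h(m)\bigr)\leq g(3m+a-1)\leq g(m)+\alpha(2m+a-1),
\]
so $\alpha(2m+a-1)\geq g(m)+h(m)\geq g(u)+h(v)$. Symmetrically, $\beta(2m+a-1)\geq g(u)+h(v)$. With $m=\max(u,v)$ and $u=v=b$ this gives exactly the bound $S/D$ of \cref{lem:paired-step}. For $u\neq v$ the denominator $2\max(u,v)+a-1$ is too large. I would instead aim for the denominator $u+v+a-1=D-r-s$. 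To get it, I would not use $g(m)\geq g(u)$ crudely, but keep the slope information: for $m\geq u$ one has $g(m)\geq g(u)+(m-u)\cdot(\text{slope right of }m)$, and the slope right of $m$ is at least the chord slope over $[m,3m+a-1]$. Solving the resulting linear inequality for that chord slope, and choosing $m$ near $(u+v)/2$, should give
\[
\alpha\geq\frac{g(u)+h(v)}{D'}\quad\text{and}\quad\beta\geq\frac{g(u)+h(v)}{D'}
\]
with $D'$ of the form $D-\theta$, where $\theta$ involves $r$ and $s$. Since $m$ must be an integer, I may need to interpolate between two values of $m$ or use the generalised tripling in both windows. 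This bookkeeping, and getting exactly $D-\frac{4rs}{r+s}$ rather than something weaker, is the step I expect to be the main obstacle. I would first try an ansatz with $\alpha,\beta\geq\lambda(g(u)+h(v))$, derive the constraint on $\lambda$, and check it against the case $u=v$, $r=s=1$.

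Finally, put $x\coloneqq g(u)$, $y\coloneqq h(v)$ and $S\coloneqq x+y$. The left-hand side of the claim is then at least
\[
\Bigl(1+\frac{rS}{D'x}\Bigr)^s\Bigl(1+\frac{sS}{D'y}\Bigr)^r.
\]
I would minimise this over $x,y>0$ with $x+y=S$, for instance by taking logarithms and using convexity, or by a weighted AM--GM with weights $s/(r+s)$ and $r/(r+s)$. The expected minimiser is $x:y=r:s$, where both factors equal $1+(r+s)/D'$. That gives
\[
\Bigl(\frac{D'+r+s}{D'}\Bigr)^{r+s}.
\]
Matching this with $\bigl(D/(D-\frac{4rs}{r+s})\bigr)^{r+s}$ tells me which $D'$ the previous step must deliver. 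In the symmetric case $r=s$ we need $D'=D-2r$, which fixes the target for the slope estimate. If the exact constant fails at the minimiser, I would redo the optimisation keeping the weighting of the two tripling inequalities as a free parameter, and only then choose $m$.
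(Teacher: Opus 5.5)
You correctly identify the target denominator $u+v+a-1=D-r-s$. However, the decisive estimate $\alpha,\beta\geq(g(u)+h(v))/(u+v+a-1)$ is left open, and the route you sketch for it points the wrong way.

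Taking $m$ near $(u+v)/2$, or $m=\max(u,v)$, does not work. Concavity transfers the tripling bound on the chord over $[m,3m+a-1]$ only to increments $g(k+1)-g(k)$ with $k\leq m$. So when $u+r-1>m$, the chord at $m$ gives no lower bound on the last increments in the window $[u,u+r]$, and hence none on $\alpha$. For the same reason, your preliminary claim that $g(y)\leq g(x)+\alpha(y-x)$ whenever $u\leq x\leq y$ is false: for $x=u$, $y=u+1$ concavity gives the reverse inequality. The claim needs $y\geq u+r$, which fails for $y=3m+a-1$ whenever $u+r>3m+a-1$.

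The missing idea is to take $m$ \emph{large}, namely $m=\max(u+r,v+s)$, so that both windows lie below $m$. Then every increment of $g$ and of $h$ at $k<m$ is at least $\lambda\coloneqq(g(m)+h(m))/(2m+a-1)$, so $\alpha,\beta\geq\lambda$. The same bound, applied between the windows and $m$, absorbs the surplus $2m-u-v$: $\lambda(2m+a-1)=g(m)+h(m)\geq g(u)+h(v)+\lambda(2m-u-v)$. That is, $\lambda(u+v+a-1)\geq g(u)+h(v)$, however large $m$ is. The paper uses this in the form $A+B\leq\lambda D$ with $A\coloneqq g(u+r)$ and $B\coloneqq h(v+s)$.

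Your final optimisation is also wrong when $r\neq s$. Under $x+y=S$, the product $\bigl(1+\frac{rS}{D'x}\bigr)^s\bigl(1+\frac{sS}{D'y}\bigr)^r$ is minimised at $x:y=(D'+s):(D'+r)$, not at $r:s$. Its minimum is $\bigl(\frac{D'+r+s}{D'}\bigr)^{r+s}\bigl(\frac{D'+r}{D'+s}\bigr)^{s-r}$, strictly below the value you state. So a $D'$ read off by matching your value with the target would not suffice.

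With $D'=D-r-s$, which the argument above delivers, your route can still be completed, but only via the extra inequality
\[
\Bigl(\frac{D}{D-r-s}\Bigr)^{r+s}\Bigl(\frac{D-s}{D-r}\Bigr)^{s-r}\geq\Bigl(\frac{D}{D-c}\Bigr)^{r+s},\qquad c\coloneqq\frac{4rs}{r+s},
\]
for $D>r+s$. It holds because the log-difference of the two sides tends to $0$ as $D\to\infty$. Its derivative in $D$ is $(s-r)^2\bigl(\frac{1}{(D-s)(D-r)}-\frac{1}{(D-r-s)(D-c)}\bigr)\leq0$, since $(D-s)(D-r)-(D-r-s)(D-c)=cD-3rs>0$.

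The paper avoids this by normalising at the right ends of the windows instead of the left ends. Weighted AM--GM with weights $\frac{s}{r+s},\frac{r}{r+s}$, together with $A-g(u)\geq r\lambda$ and $B-h(v)\geq s\lambda$, gives $\bigl(\frac{g(u)}{A}\bigr)^{s/(r+s)}\bigl(\frac{h(v)}{B}\bigr)^{r/(r+s)}\leq1-\frac{rs\lambda}{r+s}\bigl(\frac1A+\frac1B\bigr)$. Then $\frac1A+\frac1B\geq\frac4{A+B}\geq\frac4{\lambda D}$ produces the constant $\frac{4rs}{r+s}$ directly.
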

\begin{proof}
Let $m\coloneqq\max(u+r,v+s)$ and $\lambda\coloneqq\frac{g(m)+h(m)}{2m+a-1}>0$. By concavity and the first hypothesis, for every integer $1\le k<m$,
\[
g(k+1)-g(k)
\ge \frac{g(3m+a-1)-g(m)}{2m+a-1}
\ge\lambda.
\]
Likewise, \(h(k+1)-h(k)\ge\lambda\). Writing $A\coloneqq g(u+r)$ and $B\coloneqq h(v+s)$, summing these difference bounds gives
$
A-g(u)\ge r\lambda
$ and
$
B-h(v)\ge s\lambda
$.
Similarly, $g(m) - A \geq (m - u - r)\lambda$. Doing the same for $h$, and then using that $g(m)+h(m) = \lambda(2m+a-1)$, we also find 
\[
A+B\le g(m)+h(m)-\lambda(2m-u-r-v-s)=\lambda D.
\]
By the weighted arithmetic-mean geometric-mean inequality,
\begin{align}\label{eq:lower-bound-amgm}
\left(\frac{g(u)}{A}\right)^{s/(r+s)}
\left(\frac{h(v)}{B}\right)^{r/(r+s)}
\le \frac{1}{r+s} \left( s \frac{g(u)}{A} + r \frac{h(v)}{B} \right).
\end{align} 
We have seen that $1 - g(u)/A \geq r\lambda/A$ and $1 - h(v)/B \geq s\lambda/B$, and that $\lambda D \geq A + B$.
Using this and also $A^{-1}+B^{-1}\ge4/(A+B)$, \cref{eq:lower-bound-amgm} can be upper bounded by
\begin{align*}
\frac{s(1-r\lambda/A)+r(1-s\lambda/B)}{r+s}
&=
1-\frac{rs\lambda}{r+s}
\left(\frac1A+\frac1B\right)
\leq
1 - \frac{4rs}{(r+s)D}.
\end{align*}
Taking reciprocals and raising to the power \(r+s\) proves the result.
\end{proof}

\begin{proposition}\label{prop:angular}
For every $\phi\in\Delta_\FF$ and all real numbers $x_1,x_2,x_3\geq0$,
\[
\phi(\langle1,1,2\rangle)^{x_2x_3}\,\phi(\langle2,1,1\rangle)^{x_1x_3}\,\phi(\langle1,2,1\rangle)^{x_1x_2}\leq2^{(x_1+x_2+x_3)^2/4}.
\]
\end{proposition}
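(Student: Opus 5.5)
The plan is to imitate the proof of \cref{thm:main}: bound from below the values of the permuted spectral points $\phi_\pi$ at suitable polynomial multiplication tensors, using \cref{lem:profiles} and \cref{lem:paired-growth}, and compare with the upper bound $\phi_\pi(a,b)\leq a+b-1$ from \cref{lem:profiles}(\ref{pr:basic}). The only change is that the sizes are now scaled along the direction $(x_1,x_2,x_3)$. Write $p_1,p_2,p_3$ for the logarithms of $\phi(\langle1,1,2\rangle)$, $\phi(\langle2,1,1\rangle)$, $\phi(\langle1,2,1\rangle)$. The claim reads
\[
p_1x_2x_3+p_2x_1x_3+p_3x_1x_2\leq\tfrac14(x_1+x_2+x_3)^2 .
\]
Both sides are homogeneous of degree two and continuous in $x$, so it suffices to treat positive integers $x_1,x_2,x_3$. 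The boundary cases then follow by continuity, and a zero $p_i$ only helps.

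\emph{Step 1 (one shared leg).} Fix $i$, and let $\sigma$ range over the pair $\{\sigma,\sigma\circ(23)\}$ with $\sigma(1)=i$. Assume $p_i>0$ and fix the first argument $a$. By \cref{lem:profiles} the functions $g(m)=\phi_\sigma(a,m)^{1/p_i}$ and $h(m)=\phi_{\sigma\circ(23)}(a,m)^{1/p_i}$ satisfy the hypotheses of \cref{lem:paired-growth}.
\begin{itemize}
\item Apply that lemma repeatedly with $u=kr$, $v=ks$ for $k=1,\ldots,N-1$, where $r,s$ are the integers playing the roles of the two remaining weights.
\item The factors telescope. Using $\log\frac{D}{D-c}\geq c/D$ and summing a harmonic-type series gives
\[
\phi_\sigma(a,Nr)^{s}\,\phi_{\sigma\circ(23)}(a,Ns)^{r}\geq \Bigl(\frac{N(r+s)+a}{r+s+a}\Bigr)^{4p_irs/(r+s)}\cdot(\text{lower order}).
\]
\item Combine this with $\phi\leq a+b-1$. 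Taking $a$ fixed and $N\to\infty$ already yields the degenerate case $x_i=0$, namely $p_i\,rs\leq (r+s)^2/4$.
\end{itemize}

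\emph{Step 2 (all three legs together).} To obtain the cross terms, let the first argument grow too. I would consider $G_N\coloneqq\prod_{\pi}\phi_\pi(\polmul_{A_\pi(N),B_\pi(N)})^{w_\pi}$, with sizes and weights linear in $N$ and chosen from $x_1,x_2,x_3$ so that each leg $\ell$ of $\pi\cdot\polmul_{A,B}$ is weighted proportionally to $x_\ell$.
\begin{itemize}
\item Increase the arguments step by step along a lattice path from $(1,1)$.
\item Turn growth in the first argument into growth in the second via $\phi_\pi(a,b)=\phi_{\pi\circ(12)}(b,a)$, exactly as in the proof of \cref{prop:angular-square}.
\item At each step apply \cref{lem:paired-growth} to the pair sharing leg $i$. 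The resulting factor $D/(D-4rs/(r+s))$, with $D\approx u+v+a$, contributes $\approx 4p_i\,rs/((r+s)D)$ to $\log G_N$.
\item Summing along the path gives $\log G_N\geq c\,(p_1x_2x_3+p_2x_1x_3+p_3x_1x_2)\log N-O(1)$ for an explicit constant $c$.
\item The upper bound $\log G_N\leq c'\,(x_1+x_2+x_3)^2\log N+O(1)$ comes from \cref{lem:profiles}(\ref{pr:basic}).
\end{itemize}
Dividing by $\log N$ and letting $N\to\infty$ gives the claim, provided $c'/c=1/4$.

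\emph{Main obstacle.} The hard part is choosing the path, the ratios $r:s$ at each step, and the exponents $w_\pi$ so that $c'/c$ comes out exactly $1/4$, rather than a weaker quadratic form. The weighted AM--GM form of \cref{lem:paired-growth} is tailored for this. My first attempt would take straight-line growth with $(A,B)$ proportional to pairs from $(x_1,x_2,x_3)$ and $r:s$ matching the corresponding pair of weights. I would then verify the constants by comparing with the square case $x_1=x_2=x_3$, which must reproduce $\tau\leq 9/4$ as in \cref{thm:main}. If straight lines lose a constant, I would instead optimise over the step ratios, viewing the sum as a Riemann sum for an integral in $N$.
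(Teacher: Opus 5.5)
\textbf{Verdict: genuine gap.} Your Step~2 is the paper's proof in outline. It has the same product $G_N$ over $\pi\in\mathfrak{S}_3$, the same conversion of first-argument growth via $\phi_\pi(a,b)=\phi_{\pi\circ(12)}(b,a)$, the same pairing $\{\sigma,\sigma\circ(23)\}$ fed into \cref{lem:paired-growth}, and the same comparison with \cref{lem:profiles}(\ref{pr:basic}). The straight-line choice you list as a first attempt is exactly what the paper uses, and it loses nothing.

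The gap is that you stop before the computation that is the actual content of the proof, leaving $c'/c=1/4$ open. The one estimate you do give for it is also wrong. The right-hand side of \cref{lem:paired-growth} carries the exponent $r+s$, so after raising to the power $p_i$ the gain in $\log G$ is about $p_i(r+s)\cdot\frac{4rs}{(r+s)D}=\frac{4p_irs}{D}$, not $\frac{4p_irs}{(r+s)D}$. With your expression, the natural choice $r=x_j$, $s=x_k$ would sum to a multiple of $\sum_i p_ix_jx_k/(x_j+x_k)$ rather than of $B\coloneqq p_1x_2x_3+p_2x_1x_3+p_3x_1x_2$. That would falsely suggest that straight lines lose a constant.

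Here is how the argument closes, with no path optimisation. Take positive integers $x_i$, put $S\coloneqq x_1+x_2+x_3$, and set $G_n\coloneqq\prod_{\pi}\phi_\pi(nx_{\pi(1)},nx_{\pi(2)})^{x_{\pi(3)}}$.
\begin{itemize}
\item The weights $x_{\pi(3)}$ are constant in $n$, not linear. They agree for $\pi$ and $\pi\circ(12)$, as the regrouping requires.
\item On a pair with $\sigma(1)=i$, $\sigma(2)=j$, $\sigma(3)=k$, the weights are $x_k$ on $\phi_\sigma$ and $x_j$ on $\phi_{\sigma\circ(23)}$, matching the exponents $(s,r)$ of the lemma.
\item For each $a\in\{n,n+1\}$, apply \cref{lem:paired-growth} to $m\mapsto\phi_\sigma(ax_i,m)^{1/p_i}$ and $m\mapsto\phi_{\sigma\circ(23)}(ax_i,m)^{1/p_i}$, with $u=nx_j$, $v=nx_k$, $r=x_j$, $s=x_k$.
\end{itemize}
The decisive observation is that $D=(n+1)(x_j+x_k)+ax_i-1\leq(n+1)S$. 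Because $D$ contains the shared-leg size $ax_i$, one uniform bound serves all three pairs. This is what merges your three separate Step~1 inequalities, whose sum is weaker, into the single symmetric form.

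The remaining computation is as follows.
\begin{itemize}
\item Let $c=4x_jx_k/(x_j+x_k)$ and $t=c/S\leq1$. Bernoulli's inequality gives $D/(D-c)\geq1+t/(n+1)\geq((n+2)/(n+1))^t$.
\item So each pair contributes a factor at least $((n+2)/(n+1))^{4p_ix_jx_k/S}$ to $G_{n+1}/G_n$. For $p_i=0$ the lemma does not apply; use instead that the factor is at least $1$ by monotonicity, \cref{lem:profiles}(\ref{pr:basic}).
\item Over the three pairs and two values of $a$, this gives $G_{n+1}/G_n\geq((n+2)/(n+1))^{8B/S}$, hence $G_n\geq((n+1)/2)^{8B/S}$ since $G_1\geq1$.
\item On the other side, $G_n\leq(nS)^{2S}$, because $\sum_\pi x_{\pi(3)}=2S$.
\end{itemize}
Letting $n\to\infty$ gives $8B/S\leq2S$, i.e.\ $4B\leq S^2$. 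In your notation, $c=8/S$ and $c'=2/S$.
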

\begin{proof}
Let $p_1$, $p_2$ and~$p_3$ be the logarithms of~$\phi(\langle1,1,2\rangle)$, $\phi(\langle2,1,1\rangle)$ and~$\phi(\langle1,2,1\rangle)$, and let $S=x_1+x_2+x_3$ and $B=p_1x_2x_3+p_2x_1x_3+p_3x_1x_2$. Taking logarithms, the claim is $4B\leq S^2$. Both sides of $4B\leq S^2$ are homogeneous of degree~$2$ in $(x_1,x_2,x_3)$ and continuous, so we may assume that $x_1$, $x_2$ and~$x_3$ are positive integers. For every integer $n\geq1$, let
\[
G_n\coloneqq\prod_{\pi\in\mathfrak{S}_3}\phi_\pi\bigl(nx_{\pi(1)},nx_{\pi(2)}\bigr)^{x_{\pi(3)}}.
\]
Since $1\leq\phi_\pi(a,b)<a+b$ (\cref{lem:profiles}(\ref{pr:basic}) for~$\phi_\pi$) and $\sum_{\pi\in\mathfrak{S}_3}x_{\pi(3)}=2S$, we have $G_1\geq1$ and $G_n\leq(nS)^{2S}$.
For $\sigma\in\mathfrak{S}_3$ and $a\in\{n,n+1\}$, let
\[
R_\sigma(a)\coloneqq\frac{\phi_\sigma\bigl(ax_{\sigma(1)},(n+1)x_{\sigma(2)}\bigr)}{\phi_\sigma\bigl(ax_{\sigma(1)},nx_{\sigma(2)}\bigr)}
\]
be the growth factor of~$\phi_\sigma$ in the second argument, at the first argument $ax_{\sigma(1)}$. Let $\pi\in\mathfrak{S}_3$. 
Using $\phi_\pi(a,b)=\phi_{\pi\circ(12)}(b,a)$ we find
\[
\frac{\phi_\pi\bigl((n+1)x_{\pi(1)},(n+1)x_{\pi(2)}\bigr)}{\phi_\pi\bigl(nx_{\pi(1)},nx_{\pi(2)}\bigr)}
=
R_\pi(n+1)\,
R_{\pi\circ(12)}(n).
\]
We raise this to the power $x_{\pi(3)}$, which equals $x_{(\pi\circ(12))(3)}$, and multiply over $\pi\in\mathfrak{S}_3$. Since $\pi\mapsto\pi\circ(12)$ is a bijection of~$\mathfrak{S}_3$, this gives
\[
\frac{G_{n+1}}{G_n}=\prod_{a\in\{n,n+1\}}\ \prod_{\sigma\in\mathfrak{S}_3}R_\sigma(a)^{x_{\sigma(3)}}.
\]

The six permutations form three pairs $\{\sigma,\sigma\circ(23)\}$, one for each $i\in\{1,2,3\}$, with $\sigma(1)=i$. Let $j=\sigma(2)$ and $k=\sigma(3)$. %
Let $a\in\{n,n+1\}$. If $p_i>0$, then by \cref{lem:profiles}, applied to~$\phi_\sigma$ and~$\phi_{\sigma\circ(23)}$, the functions $m\mapsto\phi_\sigma(ax_i,m)^{1/p_i}$ and $m\mapsto\phi_{\sigma\circ(23)}(ax_i,m)^{1/p_i}$ satisfy the hypotheses of \cref{lem:paired-growth} with~$ax_i$ in place of~$a$. Here we use that both spectral points take the value~$2^{p_i}$ at~$\langle1,1,2\rangle$, since~$\sigma$ and~$\sigma\circ(23)$ move its one-dimensional first leg to leg~$i$, and that $\phi_\sigma((23)\cdot T)=\phi_{\sigma\circ(23)}(T)$ and $\phi_{\sigma\circ(23)}((23)\cdot T)=\phi_\sigma(T)$ for every tensor~$T$, since $\pi\cdot(\rho\cdot T)=(\pi\circ\rho)\cdot T$ for all $\pi,\rho\in\mathfrak{S}_3$. That lemma with $u=nx_j$, $v=nx_k$, $r=x_j$ and $s=x_k$, raised to the power~$p_i$, gives
\[
R_\sigma(a)^{x_k}R_{\sigma\circ(23)}(a)^{x_j}\geq\Bigl(\frac{D}{D-c}\Bigr)^{p_i(x_j+x_k)},
\]
where $D=(n+1)(x_j+x_k)+ax_i-1$ and $c=4x_jx_k/(x_j+x_k) > 0$. Since $D\leq(n+1)S$ and $t\coloneqq c/S\leq(x_j+x_k)/S\leq1$, we find, using Bernoulli's inequality (which says that $( 1 + y )^t \leq 1 + t y$ for every real number $0 \leq t \leq 1$ and $y \geq -1$) 
in the last step, that
\[
\frac{D}{D-c}\geq1+\frac cD\geq1+\frac t{n+1}\geq\Bigl(\frac{n+2}{n+1}\Bigr)^t.
\]
Since $tp_i(x_j+x_k)=4p_ix_jx_k/S$, we obtain
\[
R_\sigma(a)^{x_k}R_{\sigma\circ(23)}(a)^{x_j}\geq\Bigl(\frac{n+2}{n+1}\Bigr)^{4p_ix_jx_k/S}.
\]
If $p_i=0$, this lower bound holds as well, since the left-hand side is at least~$1$: $\phi_\sigma$ and~$\phi_{\sigma\circ(23)}$ are nondecreasing in the second argument (\cref{lem:profiles}(\ref{pr:basic})). Multiplying over the three pairs and the two values of~$a$ gives
\[
\frac{G_{n+1}}{G_n}\geq\Bigl(\frac{n+2}{n+1}\Bigr)^{8B/S},
\quad\text{so}\quad
G_n\geq\Bigl(\frac{n+1}2\Bigr)^{8B/S},
\]
where we multiply these inequalities and use $G_1\geq1$. With the upper bound on~$G_n$, this gives $((n+1)/2)^{8B/S}\leq(nS)^{2S}$ for every integer $n\geq1$. Hence $8B/S\leq2S$, that is, $4B\leq S^2$.
\end{proof}

 \cref{prop:angular} gives an ``outer approximation'' of the asymptotic spectrum of rectangular matrix multiplication, which we plot in \Cref{fig:spectrum}. To explain this, let
\[
\Deltam\coloneqq\bigl\{\bigl(\phi(\langle1,1,2\rangle),\phi(\langle2,1,1\rangle),\phi(\langle1,2,1\rangle)\bigr):\phi\in\Delta_\FF\bigr\}\subseteq[1,2]^3
\]
(\cref{lem:spectral-facts}(\ref{sf:power})), and let $\log\Deltam\subseteq[0,1]^3$ be its image under the logarithm in each coordinate. The flattening ranks $\flatrank{1}$, $\flatrank{2}$ and~$\flatrank{3}$ give the points 
\[
\{(0,1,1), (1,0,1), (1,1,0)\}\subseteq\log\Deltam,
\]
and Strassen's support functionals give the triangle with these vertices~\cite[Section~6]{strassen1991degeneration},
\[
P \coloneqq \conv \{(0,1,1), (1,0,1), (1,1,0)\}\subseteq\log\Deltam.
\]
Strassen showed that $\log\Deltam$ lies in the truncated pyramid $\{w\in[0,1]^3: 2\leq w_1+w_2+w_3\leq\omega_\FF\}$ over~$P$~\cite[Proposition~4.2]{strassen1988asymptotic}.
By \cref{prop:angular}, we have
\begin{equation}\label{eq:Q}
\log\Deltam \subseteq Q 
\coloneqq \bigl\{w\in[0,1]^3 : \forall x\in\RR_{\geq0}^3,  4(w_1x_2x_3+w_2x_1x_3+w_3x_1x_2)\leq(x_1+x_2+x_3)^2\bigr\}.
\end{equation}
So $P\subseteq\log\Deltam\subseteq Q\cap\{w: w_1+w_2+w_3\geq2\}$, and these are the best inner and outer approximations of $\log\Deltam$ that we currently know.

\begin{figure}[h]
\centering
% [inline block 0: 2 envs, 104216 chars -> data_tex | \begin{tikzpicture}[line join=round, line cap=round] \draw[black!18] (0.000,0.000) -- (0.000,3.015);...]

\caption{$P\subseteq\log\Deltam\subseteq Q\cap\{w: w_1+w_2+w_3\geq2\}$, for the triangle~$P$ on the left, given by Strassen's support functionals, and the part of~$Q$ with $w_1+w_2+w_3\geq2$ on the right, shown with level curves of $w_1+w_2+w_3$. Of course we do not know the position of $\frac{\omega_\FF}3(1,1,1)$, but it is on the line between $\tfrac23(1,1,1)$ and $\tfrac34(1,1,1)$.}
\label{fig:spectrum}
\end{figure}

On $Q$, we have $w_1+w_2+w_3\leq\frac94$, by the choice $x_1=x_2=x_3=1$, with equality at $\frac34(1,1,1)$. Since the maximum of $w_1+w_2+w_3$ on $\log\Deltam$ is~$\omega_\FF$ (\cref{thm:spectral} and \cref{lem:spectral-facts}(\ref{sf:power})), this gives \cref{thm:main} again. 

Likewise, for real numbers $a\geq b\geq c\geq0$, the maximum of $aw_2+bw_3+cw_1$ on~$Q$ is as follows: %

\begin{lemma}\label{lem:rect-bound}
Let $a\geq b\geq c\geq0$ be real numbers, and let $w\in[0,1]^3$ with $4(w_1x_2x_3+w_2x_1x_3+w_3x_1x_2)\leq(x_1+x_2+x_3)^2$ for all $x\in\RR_{\geq0}^3$. Then
\[
aw_2+bw_3+cw_1\leq\begin{cases}a+b&\text{if }c(a+b)\leq ab,\\[2pt]\dfrac{(ab+bc+ca)^2}{4abc}&\text{if }c(a+b)>ab.\end{cases}
\]
\end{lemma}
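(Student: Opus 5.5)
The proof is by weak duality for a linear program in $w$. The hypothesis is an infinite family of linear inequalities in $w$, one for each $x\in\RR_{\geq0}^3$. Together with the box constraints $w_i\leq1$, it suffices to exhibit one $x$ and a nonnegative multiplier that combine into the target bound. Concretely, I will write $aw_2+bw_3+cw_1\leq\mu\,w_2+(a-\mu)w_2+bw_3+cw_1$ for some $0\leq\mu\leq a$. I bound $\mu w_2\leq\mu$ using $w_2\leq1$, and bound the remainder $a'w_2+bw_3+cw_1$, with $a'\coloneqq a-\mu$, by one instance of the quadratic constraint.

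\textbf{Key step.} Fix $a',b,c>0$. I choose $x$ so that the three products $4x_2x_3$, $4x_1x_3$, $4x_1x_2$ are proportional to $c$, $a'$, $b$, that is,
\[
x_1=\sqrt{a'b/c},\qquad x_2=\sqrt{bc/a'},\qquad x_3=\sqrt{a'c/b}.
\]
Then $x_2x_3=c$, $x_1x_3=a'$ and $x_1x_2=b$. The hypothesis becomes
\[
4(cw_1+a'w_2+bw_3)\leq(x_1+x_2+x_3)^2=\frac{(a'b+bc+ca')^2}{a'bc}.
\]
This gives
\[
aw_2+bw_3+cw_1\leq\mu+\frac{(a'b+bc+ca')^2}{4a'bc}.
\]

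\textbf{Case $c(a+b)>ab$.} Taking $\mu=0$, so $a'=a$, gives exactly $(ab+bc+ca)^2/(4abc)$.

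\textbf{Case $c(a+b)\leq ab$.} If $c=0$, the bound $a+b$ is immediate from $w_2,w_3\leq1$. If $c>0$, then $b>c$, since $b=c$ would give $c(a+c)\leq ac$, forcing $c=0$. I then take $a'=bc/(b-c)$. The case hypothesis is equivalent to $bc\leq a(b-c)$, so $a'\leq a$ and $\mu=a-a'\geq0$ is admissible.

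With this choice $a'b=bc+ca'$, so $a'b+bc+ca'=2c(b+a')$ and $b+a'=b^2/(b-c)$. Substituting, the quadratic term simplifies to $c(b+a')^2/(a'b)=b^2/(b-c)$. Adding $\mu=a-bc/(b-c)$ gives $a+b$, as claimed.

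I expect the main difficulty to be finding the right dual certificate in the first case, specifically the choice $a'=bc/(b-c)$. It comes from minimizing $\mu+(a'b+bc+ca')^2/(4a'bc)$ over $a'$; once it is written down, the verification is the short computation above. The other point to handle carefully is the degenerate situation $c=0$, or more generally some parameter being $0$, where the formula for $x$ divides by zero. There I simply use the box constraints directly instead of the quadratic constraint.
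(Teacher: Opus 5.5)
Your proof is correct. In the case $c(a+b)>ab$ it is the paper's argument: your $x$ is the paper's choice $x=(ab,bc,ca)$ rescaled by $1/\sqrt{abc}$, which is harmless because the constraint is homogeneous of degree~$2$.

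In the case $c(a+b)\le ab$ you use a genuinely different certificate. The paper keeps the coefficient $a$ of $w_2$. It raises the coefficient of $w_1$ from $c$ to $ab/(a+b)$ using $w_1\ge0$, and then applies the single instance $x=(a+b,b,a)$. You instead keep $c$, lower the coefficient of $w_2$ to $a'=bc/(b-c)$, and pay for the excess $\mu=a-a'$ with $w_2\le1$. Up to scaling, your instance is $x=(b,b-c,c)$.

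Both vectors satisfy $x_1=x_2+x_3$, so both instances are tight at the maximiser $w=(0,1,1)$. They are two different optimal dual certificates at the same vertex: one uses the active lower bound on $w_1$, the other the active upper bound on $w_2$. At the threshold $c=ab/(a+b)$ they coincide, since then $a'=a$ and $(b,b-c,c)$ is proportional to $(a+b,b,a)$.

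The paper's version is shorter. It uses one $c$-independent instance and needs no separate treatment of $c=0$ or argument that $b>c$. Yours has a small advantage: $w_2\le1$ is itself the instance $x=(1,0,1)$ of the hypothesis, so your argument never uses the sign condition $w\ge0$ and relies only on the quadratic family.
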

\begin{proof}
If $a=0$, then $a=b=c=0$, and the claim holds. So let $a>0$. With $x=(ab,bc,ca)$, the left-hand side of the hypothesis is $4abc(aw_2+bw_3+cw_1)$, and at $c=ab/(a+b)$, where the two cases meet, this vector~$x$ is a multiple of $(a+b,b,a)$. If $c(a+b)\leq ab$, then, since $w_1\geq0$,
\[
(a+b)(aw_2+bw_3+cw_1)\leq(a+b)(aw_2+bw_3)+abw_1\leq(a+b)^2,
\]
where the second inequality is the hypothesis with $(x_1,x_2,x_3)=(a+b,b,a)$. Dividing by $a+b$ gives the claim. If $c(a+b)>ab$, then $abc>0$, and the hypothesis with $(x_1,x_2,x_3)=(ab,bc,ca)$ gives $4abc(aw_2+bw_3+cw_1)\leq(ab+bc+ca)^2$.
\end{proof}

The bound of \cref{lem:rect-bound} has two cases which naturally correspond to whether the maximum is at the boundary of $[0,1]^3$ or in the interior. If $c(a+b)\leq ab$, then the maximum is attained at the vertex~$(0,1,1)$ of $Q$. If $c(a+b) > ab$, the the maximum is attained on the curved part of the boundary.

\begin{remark}
The set~$\Deltam$ is invariant under permuting the coordinates (\cref{lem:spectral-facts}(\ref{sf:permutation})), so the convex hull of~$\log\Deltam$ (and therefore $Q$) contains $\frac{\omega_\FF}3(1,1,1)$. Whether $\frac{\omega_\FF}3(1,1,1)$ lies in~$\log\Deltam$ is not known.
See also \cref{fig:spectrum}.
\end{remark}

\subsection{Rectangular matrix multiplication}\label{sec:rect-proof}
In this subsection we will deduce \cref{thm:rect} from \cref{prop:angular} and \cref{lem:spectral-omega}. We state it again for convenience of the reader.

\thmrect*

The value~$\omega_\FF(a,b,c)$ does not depend on the order of~$a$, $b$ and~$c$, since permuting the legs does not change the tensor rank (\cref{lem:restriction-facts}(\ref{rf:cyclic})). If $a\geq b\geq c$, then $\omega_\FF(a,b,c)\geq a+b$, since the tensor rank of $\langle n_1,n_2,n_3\rangle$ is at least $\flatrank{1}(\langle n_1,n_2,n_3\rangle)=n_1n_2$.

\begin{proof}[Proof of \cref{thm:rect}]
Let $\phi\in\Delta_\FF$. 
Let $p_1$, $p_2$ and~$p_3$ be the logarithms of~$\phi(\langle1,1,2\rangle)$, $\phi(\langle2,1,1\rangle)$ and~$\phi(\langle1,2,1\rangle)$. 
By taking the logarithm of the equality in \cref{lem:spectral-omega}, we see it is sufficient to upper bound $ap_2+bp_3+cp_1$.
Because $p_1,p_2,p_3\in[0,1]$ (\cref{lem:spectral-facts}(\ref{sf:power})), the logarithm of the inequality of \cref{prop:angular} is the hypothesis of \cref{lem:rect-bound} for $w=(p_1,p_2,p_3)$. This shows the desired upper bound.
Equality in the first case follows from the lower bound $\omega_\FF(a,b,c)\geq a+b$, obtained from evaluating the flattening rank $\flatrank{1}$. 
\end{proof}

In the second case the bound exceeds the lower bound $a+b$ by $(c(a+b)-ab)^2/(4abc)$, since $(ab+bc+ca)^2-4abc(a+b)=(c(a+b)-ab)^2$. For example, $\omega_\FF(6,3,2)=9$. For $a=b=c=1$, \cref{thm:rect} gives $\omega_\FF\leq9/4$ again, and for the shapes $(1,k,1)$ it gives the following bounds.

\begin{corollary}\label{cor:rect-k}
For every real number $k\geq0$,
\[
\omega_\FF(1,k,1)\leq\begin{cases}2&\text{if }0\leq k\leq1/2,\\[2pt]1+k+\dfrac{1}{4k}&\text{if }k\geq1/2.\end{cases}
\]
In particular, $\alpha_\FF\geq1/2$ and $\omega_\FF(1,2,1)\leq25/8$.
\end{corollary}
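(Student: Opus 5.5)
This follows from \cref{thm:rect} once we reorder the shape parameters, since $\omega_\FF(a,b,c)$ does not depend on the order of $a$, $b$ and $c$ (\cref{lem:restriction-facts}(\ref{rf:cyclic})). The plan is to sort $(1,k,1)$ into nonincreasing order, substitute into the two cases of \cref{thm:rect}, and then evaluate the claimed special values.

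First suppose $0\leq k\leq1$. The sorted shape is $(a,b,c)=(1,1,k)$. The first case of \cref{thm:rect} applies when $c(a+b)\leq ab$, which here reads $2k\leq1$. So for $k\leq1/2$ we get $\omega_\FF(1,k,1)=a+b=2$. For $1/2<k\leq1$ the second case applies:
\[
\frac{(ab+bc+ca)^2}{4abc}=\frac{(1+2k)^2}{4k}=1+k+\frac1{4k}.
\]

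Now suppose $k\geq1$. The sorted shape is $(a,b,c)=(k,1,1)$. The first-case condition $c(a+b)\leq ab$ becomes $k+1\leq k$, which never holds, so the second case always applies:
\[
\frac{(k+1+k)^2}{4k}=\frac{(2k+1)^2}{4k}=1+k+\frac1{4k}.
\]
This is the same formula as before, so the two ranges combine into the stated bound for all $k\geq1/2$.

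For the particular consequences: $\omega_\FF(1,k,1)=2$ on $[0,1/2]$ gives $\alpha_\FF\geq1/2$ directly from the definition of $\alpha_\FF$. Setting $k=2$ in the formula gives $1+2+\tfrac18=\tfrac{25}8$. There is no substantive obstacle here; the only points that need care are checking the case boundaries in both orderings and confirming that the two expressions agree when $k=1$.
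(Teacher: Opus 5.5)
Your proposal is correct and matches the paper's proof: you sort the shape into $(1,1,k)$ for $k\leq1$ or $(k,1,1)$ for $k\geq1$, apply the two cases of \cref{thm:rect} (the first case exactly when $2k\leq1$), and simplify $(1+2k)^2/(4k)=1+k+\frac{1}{4k}$. The consequences $\alpha_\FF\geq1/2$ and the value $25/8$ at $k=2$ are read off in the same way as in the paper.
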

\begin{proof}
Since $\omega_\FF(a,b,c)$ does not depend on the order of~$a$, $b$ and~$c$, \cref{thm:rect} applies to $(1,1,k)$ if $k\leq1$ and to $(k,1,1)$ if $k\geq1$. If $k\leq1/2$, its first case applies, since $2k\leq1$, and gives $\omega_\FF(1,k,1)=2$, so $\alpha_\FF\geq1/2$. Otherwise its second case applies, and gives
\[
\omega_\FF(1,k,1)\leq\frac{(1+2k)^2}{4k}=1+k+\frac1{4k},
\]
which is $25/8$ for $k=2$.
\end{proof}

By \cref{cor:rect-k}, we have $\alpha_\FF\geq1/2$, $\omega_\FF(1,0.709,1)<2.0617$ and $\omega_\FF(1,2,1)\leq25/8$. These bounds improve the bounds $\alpha_\CC>0.465$ and $\omega_\CC(1,0.709,1)<2.092$ of OpenAI, which hold in characteristic zero, the second also outside a finite set of positive characteristics~\cite{OAI:Complex-Matrix-Multiplication-Below-2.258-and-Rectangular-Bounds-September-24-2026}. They also improve the bounds from the laser method: $\alpha_\FF>0.321334$ of Vassilevska Williams, Xu, Xu and Zhou~\cite{williams2023new}, and $\omega_\FF(1,2,1)\leq3.250035$ of Alman, Duan, Vassilevska Williams, Xu, Xu and Zhou~\cite{almanMoreAsymmetryYields2025}. \Cref{fig:rect-general} (right) compares them with those of the laser method.

\Cref{fig:rect-general} uses that~$\omega_\FF$ is subadditive, homogeneous and convex, which follows from taking tensor products and rescaling (Lotti and Romani~\cite[Section~1]{MR699269}). So the lower convex hull of upper bounds on~$\omega_\FF(1,k,1)$ is also an upper bound, and for all real numbers $0<b\leq1$ and $c\geq0$,
\[
\omega_\FF(1,b,c)\leq\omega_\FF(1-b,0,0)+\omega_\FF(b,b,c)=1-b+b\,\omega_\FF(1,c/b,1).
\]
The bound of \cref{thm:rect} is homogeneous of degree one in $(a,b,c)$, so its values with $a=1\geq b\geq c$ determine it. \Cref{fig:rect-general} (left) shows these values as a surface.

\section{Operational barrier}\label{sec:optimality}
In this section we will prove that no combination of the ingredients~\ruleref{D0}--\ruleref{D4} by the rules~\ruleref{C1}--\ruleref{C4} improves the bounds of \cref{thm:rect}.

To do this, we will define a semiring~$\mathcal{F}$ of formal combinations of the polynomial multiplication tensors and their permutations, under $\oplus$, $\otimes$ and the partial direct sums $\oplus_i$ (which are not semiring operations, and are only allowed under the natural leg dimension constraint). By ``formal combinations'' here we mean that we do not a priori identify restriction equivalent or isomorphic tensors. Instead, we define a special preorder $\preceq$ on $\mathcal{F}$ which only encodes the degeneration ingredients \ruleref{D0}--\ruleref{D4} and combination rules~\ruleref{C1}--\ruleref{C4} that we use in the proofs of \cref{thm:rect}. Thus any degeneration construction from those ingredients and rules can be written in the preorder $\preceq$. 

To prove no degenerations $\langle n, n, n\rangle \degen \langle r\rangle$ give $\log r / \log n < 9/4$, the natural approach would be to find a degeneration monotone function $\phi$ that obstructs $\langle n, n, n\rangle \degen \langle r\rangle$ in the sense that $\phi (\langle n,n,n \rangle) > \phi(\langle r\rangle)$ whenever $\log r / \log n < 9/4$.

In fact, more abstractly, one may take such $\phi$ in the asymptotic spectrum $\Delta_\FF$ and use $\max_{\phi \in \Delta_\FF} \phi(\langle 2,2,2\rangle) = 2^{\omega_\FF}$; that is, if $\omega_\FF = 9/4$, then there is a $\phi \in \Delta_\FF$ such that $\phi(\langle 2,2,2\rangle) = 2^{9/4}$, and then $\phi(\langle n,n,n\rangle) = n^{9/4}$ by basic powering arguments.

We do not know how to obtain such a degeneration monotone $\phi$; instead, we will restrict ourselves to only work with the semiring of formal expression $\mathcal{F}$ and the restricted operational preorder $\preceq$ on $\mathcal{F}$. We will then construct functions $F^\star_\theta$ on $\mathcal{F}$ that are $\preceq$-monotone (they are in fact in the asymptotic spectrum of $(\mathcal{F}, \preceq)$). These functions $F^\star_\theta$ are constructed in a way that they obstruct all $\langle n,n,n\rangle \preceq \langle r\rangle$ that can lead to $\omega_\FF < 9/4$ (and similarly for the rectangular bounds).
This ``operational barrier'' result is \cref{thm:barrier}.

The challenge lies in constructing the functions $F^\star_\theta$ as above, which we do in \cref{sec:barrier-functions},  
and prove their monotonicity in \cref{sec:properties-of-monotone-function} and \cref{sec:barrier-proof}.

This construction also gives a test for usefulness of different sets of degeneration ingredients to obtain better bounds on $\omega_\FF$: adding them to $\preceq$ keeps the barrier whenever the functions~$F_\theta^\star$ remain monotone.
This we state in \cref{cor:new-rules}.

\subsection{Symmetric notation $\polmul_{x,y,z}$ for permuted polynomial multiplication}\label{sec:combinations}
For convenience in this section, we will use a more symmetric notation for polynomial multiplication tensors and their permutations, $\pi\cdot \polmul_{a,b}$. To see how this is natural, recall
\[
\polmul_{a,b} = \sum_{i=1}^a\sum_{j=1}^be_i\otimes e_j\otimes e_{i+j-1}. %
\]
The terms are of the form $e_i \otimes e_j \otimes e_k$ with $i \in [a]$, $j \in [b]$, $k \in [a+b-1]$ such that $i+j-k=1$.
Replacing~$k$ by $a+b-k$ turns the condition $i+j-k=1$ into $i+j+k=a+b+1=M+2$, where $M=a+b-1$ is the largest of the three leg dimensions of $\polmul_{a,b}$. This definition is more symmetric and is what we use in the following: %

\begin{definition}\label{def:admissible}
A triple $(x,y,z)$ of real numbers $x,y,z\geq1$ is \emph{admissible} if 
\[
x+y+z=2\max(x,y,z)+1,
\]
that is, if its largest entry is the sum of the other two minus~$1$. For an admissible triple of integers, with $M=\max(x,y,z)$, let
\[
\polmul_{x,y,z}\coloneqq\sum_{\alpha,\beta,\gamma}e_\alpha\otimes e_\beta\otimes e_\gamma\in\FF^x\otimes\FF^y\otimes\FF^z,
\]
where the sum is over all $\alpha\in[x]$, $\beta\in[y]$ and~$\gamma\in[z]$ with $\alpha+\beta+\gamma=M+2$.
\end{definition}

By the discussion before, we have $\polmul_{a,b,a+b-1}\cong\polmul_{a,b}$. We see directly that permuting the legs of~$\polmul_{x,y,z}$ permutes $x$, $y$ and~$z$. For example, $(23)\cdot\polmul_{a,b}\cong\polmul_{a,a+b-1,b}$. 
Up to isomorphism, the tensors $\polmul_{x,y,z}$ for admissible triples $(x,y,z)$ of integers are exactly the tensors $\pi\cdot\polmul_{a,b}$ for $\pi\in\mathfrak{S}_3$ and integers $a,b\geq1$. We will thus from now on work with $\polmul_{x,y,z}$ instead of $\pi\cdot\polmul_{a,b}$.

\subsection{Set $\mathcal{F}$ of formal combinations of tensors}

We will work with formal combinations of tensors. This will enable us to be very precise about the degeneration ingredients, isomorphisms and combination rules that we allow to apply, by means of a formal preorder that we will define in the next subsection. 

We will use formal operations $\oplus$, $\otimes$, and also $\oplus_1, \oplus_2, \oplus_3$, which will represent partial direct sums in the three directions.

\begin{definition}\label{def:expressions}
We define a set~$\mathcal{F}$ of formal expressions, and three shorthand symbols for expressions in~$\mathcal{F}$.
\begin{enumerate}[\upshape(i)]
\item Let~$\mathcal{F}$ be the set of formal expressions generated by the symbols~$\polmul_{x,y,z}$, for all admissible triples $(x,y,z)$ of positive integers, under~$\oplus$, $\otimes$, $\oplus_1$, $\oplus_2$ and~$\oplus_3$, where $S\oplus_iT$ is allowed only if the $i$th legs of~$S$ and~$T$ have the same dimension.
\item Let $\langle1\rangle\coloneqq\polmul_{1,1,1}$.
\item For every positive integer~$r$, let~$\langle r\rangle$ be the direct sum of $r$~copies of~$\langle1\rangle$.
\item For all positive integers $n_1$, $n_2$ and~$n_3$, let $\langle n_1,n_2,n_3\rangle\coloneqq\polmul_{n_1,1,n_1}\otimes\polmul_{n_2,n_2,1}\otimes\polmul_{1,n_3,n_3}$.
\end{enumerate}
\end{definition}
Every expression in $\mathcal{F}$ represents a tensor, obtained by reading~$\polmul_{x,y,z}$ as the tensor of \cref{def:admissible} and recursively evaluating~$\oplus$, $\otimes$ and~$\oplus_i$ as the direct sum, the tensor product and the partial direct sums, which are defined analogously to $\oplus_1$. Distinct formal expressions may represent the same tensor.

Note that indeed the expressions~$\langle r\rangle$ and~$\langle n_1,n_2,n_3\rangle$ represent the unit tensors and, up to isomorphism, the matrix multiplication tensors, since $\polmul_{n,1,n}\cong\langle n,1,1\rangle$, $\polmul_{n,n,1}\cong\langle1,n,1\rangle$ and $\polmul_{1,n,n}\cong\langle1,1,n\rangle$ and $\langle n_1, n_2, n_3\rangle = \langle n_1, 1, 1\rangle \otimes \langle 1, n_2, 1\rangle \otimes \langle 1,1, n_3\rangle$. %

\subsection{Formal preorder relation $\preceq$ on $\mathcal{F}$ and operational barrier statement}

We now define the formal preorder $\preceq$ on $\mathcal{F}$, which encodes the degeneration ingredients, combination rules and basic restriction and isomorphisms used in the proofs of \cref{thm:main} and \cref{thm:rect}. We will then state the barrier in terms of it. 

It will be clear to the reader later that $\preceq$ can be extended with many more relations so that the barrier still holds, we give a somewhat minimal definition here.

\begin{definition}\label{def:formal-relation}
Let~$\preceq$ be the smallest preorder on~$\mathcal{F}$ that is compatible with~$\oplus$ and~$\otimes$, that is, $T_1\preceq S_1$ and $T_2\preceq S_2$ imply $T_1\oplus T_2\preceq S_1\oplus S_2$ and $T_1\otimes T_2\preceq S_1\otimes S_2$, and that has the following properties, 
for all expressions $C,S,T,U,S_1,\ldots,S_n,T_1,\ldots,T_m\in\mathcal{F}$ and all positive integers $a$, $b$, $m$, $n$, $t$, $x$, $y$ and~$z$, whenever the triples of the tensors $\polmul_{x,y,z}$ in these properties are admissible and the expressions lie in~$\mathcal{F}$. Each property is also required with the legs renumbered, that is, with the indices of every~$\polmul$ and of every~$\oplus_i$ in it permuted in the same way.
\begin{enumerate}[({D}1)]
\item\label{frule:D1} $\polmul_{x,y,z}\preceq\langle\max(x,y,z)\rangle$.
\item\label{frule:D2} $(T_1\oplus\cdots\oplus T_m)\otimes\polmul_{1,t,t}\preceq(T_1\oplusX\cdots\oplusX T_m)\otimes\polmul_{m,t,m+t-1}$.
\item\label{frule:D3} $\polmul_{x,y+1,z+1}\oplusX\polmul_{x,y-1,z-1}\preceq\polmul_{x,y,z}\oplusX\polmul_{x,y,z}$, also with a further summand~$\oplusX C$ on both sides.
\item\label{frule:D4} $\polmul_{x,y,z}\oplusX\polmul_{x,z,y}\oplusX\polmul_{x,y,z}\preceq\polmul_{x,2y+z,y+2z}$, also with a further summand~$\oplusX C$ on both sides.
\end{enumerate}
\begin{enumerate}
\item[(R)]\phantomsection\label{rule:R} \emph{Restrictions.}
\begin{enumerate}[({R}1)]
\item\label{rule:R1} $T_1\preceq T_1\oplus T_2$, $T_1\preceq T_1\oplusX T_2$ and $T_1\oplusX T_2\preceq T_1\oplus T_2$.
\item\label{rule:R2} $\polmul_{x,y-1,z-1}\preceq\polmul_{x,y,z}$.
\item\label{rule:R3} $\polmul_{a,b,a+b-1}\preceq\polmul_{a,1,a}\otimes\polmul_{1,b,b}$.
\end{enumerate}
\item[(I)]\phantomsection\label{rule:I} \emph{Isomorphisms.} $T\preceq S$ and $S\preceq T$, if $T$ and~$S$ are the two sides of one of the following isomorphisms:
\begin{enumerate}[({I}1)]
\item\label{rule:I1} The associativity and commutativity of~$\oplus$ and~$\otimes$, the distributivity of~$\otimes$ over~$\oplus$, and $T\otimes\langle1\rangle\cong T$.
\item\label{rule:I2} Any two orders and bracketings of a partial direct sum $T_1\oplusX\cdots\oplusX T_m$, and the distributivity of~$\otimes$ over~$\oplusX$: the expression $(T_1\oplusX\cdots\oplusX T_m)\otimes(S_1\oplusX\cdots\oplusX S_n)$ and the partial direct sum of the $mn$ expressions $T_i\otimes S_j$.
\item\label{rule:I3} For all positive integers $n_1,n_2,n_3,m_1,m_2,m_3$, the isomorphism $\langle n_1,n_2,n_3\rangle\otimes\langle m_1,m_2,m_3\rangle\cong\langle n_1m_1,n_2m_2,n_3m_3\rangle$ (\cref{lem:restriction-facts}(\ref{rf:product})).
\item\label{rule:I4} The partial direct sum of $r$~copies of~$U$ and $U\otimes\polmul_{1,r,r}$, for every integer $r\geq1$ (\cref{lem:two-copies}).
\end{enumerate}
\end{enumerate}
\end{definition}
\Cref{def:formal-relation} collects the combination rules and the ingredients of \cref{sec:ingredients}: the preorder and its compatibility with~$\oplus$ and~$\otimes$ are the rules \ruleref{C1}--\ruleref{C3}, the renumbering of the legs takes the place of~\ruleref{C4}, and \fruleref{D1}--\fruleref{D4} are the ingredients. Whenever the triples in \fruleref{D3} and~\fruleref{D4} are admissible, either $z=x+y-1$ or $y=x+z-1$, so these are the ingredients of \cref{sec:ingredients} with the shared leg first. The properties \ruleref{R} and~\ruleref{I} add (standard) restrictions and isomorphisms, and ingredient~\ruleref{D0}, $\polmul_{1,b,b}\cong\langle1,1,b\rangle$, follows from~\ruleref{I}.

If $T\preceq S$, then $T\degen S$, reading both sides as the tensors that they represent. This will make every relation $\langle n_1,n_2,n_3\rangle\preceq\langle r\rangle$ give an upper bound on~$\omega_\FF$. 
Indeed, $\preceq$ is the smallest preorder with the properties of \cref{def:formal-relation}, and each property still holds with~$\degen$ in place of~$\preceq$, by \ruleref{D0}-\ruleref{D4}.

We now state the barrier.

\begin{theorem}[Operational barrier]\label{thm:barrier}
Let $n_1,n_2,n_3,r\geq1$ be integers with $\langle n_1,n_2,n_3\rangle\preceq\langle r\rangle$.
\begin{enumerate}[\upshape(i)]
\item\label{bar:general} Let $a\geq b\geq c$ be the numbers $\log n_1$, $\log n_2$ and~$\log n_3$ in nonincreasing order. Then
\begin{align}
\label{eq:barrier}
\log r\geq\begin{cases}a+b&\text{if }c(a+b)\leq ab,\\[2pt]\dfrac{(ab+bc+ca)^2}{4abc}&\text{if }c(a+b)>ab.\end{cases}
\end{align}
\item\label{bar:square} In particular, if $n_1=n_2=n_3=n$, then $n^{9/4}\leq r$.
\end{enumerate}
\end{theorem}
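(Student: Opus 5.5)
We will prove the barrier with a dual certificate. For each weight vector $\theta=(\theta_1,\theta_2,\theta_3)\in[0,1]^3$ on the boundary of the outer approximation $Q$ of~\eqref{eq:Q}, we construct a function $F^\star_\theta\colon\mathcal{F}\to\RR_{\geq0}$ that has four properties. It is additive under~$\oplus$. It is multiplicative under~$\otimes$. It satisfies $F^\star_\theta(\langle1\rangle)=1$. It is monotone under~$\preceq$. On matrix multiplication expressions we require
\[
F^\star_\theta(\langle n_1,n_2,n_3\rangle)=n_1^{\theta_2}n_2^{\theta_3}n_3^{\theta_1}.
\]
Then $\langle n_1,n_2,n_3\rangle\preceq\langle r\rangle$ gives $n_1^{\theta_2}n_2^{\theta_3}n_3^{\theta_1}\leq r$. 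Taking the maximiser $w$ of \cref{lem:rect-bound}, suitably permuted to match the order of $\log n_i$, yields~\eqref{eq:barrier}. In the case $c(a+b)\leq ab$ the maximiser is a vertex $(0,1,1)$, and the flattening rank $\flatrank{1}$, suitably permuted, already serves: it is a genuine spectral point, hence monotone for every rule. Part~(\ref{bar:square}) is the case $\theta=\tfrac34(1,1,1)$.

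The construction of $F^\star_\theta$ is dictated by the equality cases in the proofs of \cref{prop:angular} and \cref{lem:paired-growth}. On a generator $\polmul_{x,y,z}$ we take a continuous, homogeneous-type formula. For the square case this is roughly
\[
F^\star(\polmul_{x,y,z})=\Bigl(\tfrac{x+y+z}{2}\Bigr)^{\ldots},
\]
an expression reflecting the ``angular'' profile $m\mapsto$ (affine in $m$)$^{p}$. The defining requirement is that, for the shared first leg with exponent $p=\theta_1$, the map $y\mapsto F^\star(\polmul_{x,y,z})^{1/p}$ is affine in the free coordinate. This makes the concavity inequality \fruleref{D3} hold with equality, and the tripling inequality \fruleref{D4} hold with equality or a controlled slack. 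On partial direct sums $S\oplus_i T$ we define $F^\star$ by the $\ell^{1/\theta_i}$-sum:
\[
F^\star(S\oplus_iT)=\bigl(F^\star(S)^{1/\theta_i}+F^\star(T)^{1/\theta_i}\bigr)^{\theta_i}.
\]
With this definition, superadditivity in the sense of \cref{lem:superadd-general} becomes an equality. We then extend $F^\star$ to all of $\mathcal{F}$ recursively. We must check that this is well defined on the isomorphisms~(I). Associativity and commutativity of $\oplus_i$ are clear. Distributivity of $\otimes$ over $\oplus_i$ and (I4) reduce to multiplicativity together with $F^\star(\polmul_{1,r,r})=r^{\theta_1}$. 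Rule (I3) holds by the power form on matrix multiplication tensors.

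Next we verify monotonicity rule by rule. Because $F^\star$ is built compositionally from additive, multiplicative and $\ell^{1/\theta}$-norm operations, all of which are monotone, compatibility with $\oplus$, $\otimes$ and the ``with $\oplus_1 C$'' clauses is automatic. It therefore suffices to check each generating inequality once.
\begin{enumerate}[(i)]
\item Rule \fruleref{D1}, $F^\star(\polmul_{x,y,z})\leq\max(x,y,z)$, is a pointwise inequality of explicit functions; it encodes $\theta\in Q$.
\item Rule \fruleref{D2} reduces to $m^{\theta_1}\leq(\ldots)$, by comparing $\sum F(T_\ell)$ with $\bigl(\sum F(T_\ell)^{1/\theta_1}\bigr)^{\theta_1}$ times the ratio $F(\polmul_{m,t,m+t-1})/F(\polmul_{1,t,t})$. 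This is Hölder-type bookkeeping, and it is the reverse direction of \cref{lem:iso-summands}.
\item Rules \fruleref{D3} and \fruleref{D4} become the concavity and tripling inequalities for the chosen profile.
\item Rules (R1)--(R3) are monotonicity statements for the explicit formula.
\end{enumerate}

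The main obstacle is choosing the generator formula so that \fruleref{D1}, \fruleref{D2} and \fruleref{D4} hold simultaneously for all integer parameters, and not merely asymptotically. The tripling step $y\mapsto 3y+x-1$ and the bound $\max(x,y,z)$ pull in opposite directions. The tightness in \cref{lem:paired-growth} suggests an exact profile that is affine in $y+z$ on each leg, with exponents fixed by $\theta$ through the critical point of $4(\theta_1x_2x_3+\cdots)\leq(x_1+x_2+x_3)^2$. I would first try
\[
F^\star_\theta(\polmul_{x,y,z})=\prod_i\bigl(\ell_i(x,y,z)\bigr)^{\theta_i},
\]
with $\ell_i$ linear forms such as $(x+y+z-1)/2$ minus the $i$th coordinate, plus corrections. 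I would then verify the finitely many inequality shapes, possibly adjusting by additive constants to handle small parameters.
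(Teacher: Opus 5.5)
Your overall framework matches the paper's. You use an additive, multiplicative function on $\mathcal{F}$ with $\ell^{1/p_i}$-sums on $\oplus_i$, normalised to $n_1^{p_2}n_2^{p_3}n_3^{p_1}$ on $\langle n_1,n_2,n_3\rangle$, checked rule by rule and evaluated at the optimal exponents. Using $\flatrank{1}$ for the case $c(a+b)\leq ab$ is valid, and simpler than the paper's limit $\theta_1\to0$. Below I write $p$ for your exponent vector $\theta$.

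\textbf{The gap.} You never define the function on the generators $\polmul_{x,y,z}$, and that is where the whole proof lives. Your items (i)--(iv) verify properties of a function that has not been written down. Moreover, the one structural requirement you commit to cannot be met.

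Take the square case $p=(\tfrac34,\tfrac34,\tfrac34)$, and apply your principle ($F^{1/p}$ affine in the free coordinate) to each shared leg. You must do this, since every rule is imposed with the legs renumbered and your square-case formula is symmetric.
\begin{itemize}
\item Then $G(x,y)\coloneqq F(\polmul_{x,y,x+y-1})^{4/3}$ is affine in $x$ and in $y$ separately.
\item The normalisations $G(x,1)=x$ and $G(1,y)=y$ (from the values at $\langle x,1,1\rangle$ and $\langle1,1,y\rangle$) force $G(x,y)=xy-\lambda(x-1)(y-1)$ for some constant $\lambda$.
\item Rule (D1) at $x=y=n$ requires $n^2-\lambda(n-1)^2\leq(2n-1)^{4/3}$ for all $n$. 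This forces $\lambda\geq1$, hence $G(n,n)\leq 2n-1$.
\item Rule (D2) with all $T_\ell=\langle1\rangle$ requires $m\,t^{3/4}\leq m^{3/4}F(\polmul_{m,t,m+t-1})$. At $m=t=n$ this means $G(n,n)\geq n^{4/3}$, which fails for large $n$.
\end{itemize}

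Your trial formula does not rescue this either. On admissible triples $(x+y+z-1)/2=\max(x,y,z)$, so the linear form attached to the largest coordinate vanishes and $\prod_i\ell_i^{p_i}\equiv0$. The natural shift $\ell_i=\max(x,y,z)+1-d_i$ gives $(xy)^{3/4}$ on $(x,y,x+y-1)$. This violates (D1) at $x=y=n$, since $n^{3/2}>2n-1$ for large $n$.

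\textbf{What is missing} is a profile that is affine only on part of its domain, together with the verifications for it.
\begin{itemize}
\item The paper parametrises the boundary point by $\theta\in\Theta$, where $x_i=1-\theta_i$ is the vector at which the inequality defining $Q$ is tight. This gives $p_i=\theta_i/((1-\theta_j)(1-\theta_k))$.
\item It rescales the leg dimensions to $A_s=(d_s-\theta_s)/(1-\theta_s)$, absorbing the offset in $x+y+z=2\max(x,y,z)+1$.
\item It sets $F^\star_\theta(\polmul_{d_1,d_2,d_3})=A_i^{1-p_i}(\theta_jA_i+(1-\theta_j)A_j)^{p_i}$, with $d_k$ largest and $A_i\leq A_j$.
\end{itemize}
This equals $n^{p_2}$ at $(n,1,n)$, yet is roughly homogeneous of degree one in the bulk, which is what (D1) needs.

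On $(a,b,a+b-1)$, write $A=A_1$ and $B=A_2$. Then $F^{1/p_1}$ is affine in $b$ only where $B\geq A$, and strictly concave where $B<A$. The two branches glue to a $C^1$ function via the identity $(1-p_i)+p_i\theta_j=p_j(1-\theta_i)$. The rules then follow:
\begin{itemize}
\item Both the plain and the $(23)$-permuted profiles lie below asymptotes of common slope $A^{(1-p_1)/p_1}$, with slopes at least that everywhere. Together with $(\theta_2+\theta_3)(A-1)=a-1$, this yields (D4).
\item Concavity plus the slope at $b=1$ gives $a^{1-p_1}b^{p_1}\leq F\leq a^{p_2}b^{p_1}$. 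This is exactly what the H\"older step for (D2) and rule (R3) need.
\end{itemize}
Without such a formula and these checks, the proposal reduces the theorem to the existence of a certificate but does not establish it.
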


\Cref{thm:barrier} relates to barriers for $\omega_\FF(a,b,c)$ as follows.
Every relation $\langle n_1,n_2,n_3\rangle\preceq\langle r\rangle$ is a degeneration $\langle n_1,n_2,n_3\rangle\degen\langle r\rangle$, so $\phi(\langle n_1,n_2,n_3\rangle)\leq r$ for every $\phi\in\Delta_\FF$, and \cref{lem:spectral-facts}(\ref{sf:power}) and \cref{lem:spectral-omega} give $\omega_\FF(\log n_1,\log n_2,\log n_3)\leq\log r$. For $n_1=n_2=n_3=n\geq2$, this is $\omega_\FF\leq\log r/\log n$. By \cref{thm:barrier}, these bounds are at least the bound of \cref{thm:rect}, and so are their limits, since that bound is continuous and homogeneous in~$(a,b,c)$. In particular, no such relation, and no limit of such bounds, gives an upper bound on~$\omega_\FF$ below~$9/4$.

\subsection{Definition of functions $F^\star_\theta$ on $\mathcal{F}$}\label{sec:barrier-functions}
Let $\Theta$ be the set of triples $\theta=(\theta_1,\theta_2,\theta_3)$ of positive real numbers with $\theta_1+\theta_2+\theta_3=1$. To prove \cref{thm:barrier}, we will construct, for every $\theta\in\Theta$, positive functions~$F^\star_\theta$ on~$\mathcal{F}$ satisfying $F^\star_\theta(\langle r\rangle)=r$ for every integer $r\geq1$ and which are $\preceq$-monotone.

The idea behind the construction is to make the inequalities that the degenerations \ruleref{D0}-\ruleref{D4} give (\cref{tab:profiles}) essentially tight.
At the matrix multiplication tensors, the functions~$F^\star_\theta$ will have the values $F^\star_\theta(\polmul_{1,n,n})=n^{p_1}$, $F^\star_\theta(\polmul_{n,1,n})=n^{p_2}$ and $F^\star_\theta(\polmul_{n,n,1})=n^{p_3}$, for some exponents $p_1$, $p_2$ and~$p_3$. These cannot be chosen freely: by the proof of \cref{thm:rect}, the exponents of such functions satisfy the inequality of \cref{prop:angular}. 
We want this inequality to be an equality. 
If we write $x_i=1-\theta_i$ with $\theta\in\Theta$, we obtain equality if we take 
\begin{equation}\label{eq:exponents}
p_i\coloneqq\frac{\theta_i}{(1-\theta_j)(1-\theta_k)},
\end{equation}
where $\{i,j,k\}=\{1,2,3\}$, since $x_1+x_2+x_3=2$ and $p_ix_jx_k=\theta_i$. 
Throughout the remainder of this section we will use $p_i$ as defined in \eqref{eq:exponents}, and leave its dependence on $\theta=(\theta_1,\theta_2,\theta_3)$ implicit. Recall that a triple $(d_1,d_2,d_3)$ of real numbers $d_1,d_2,d_3\geq1$ is admissible (\cref{def:admissible}) if its largest entry is the sum of the other two minus~$1$, that is, if $d_1+d_2+d_3=2\max(d_1,d_2,d_3)+1$.

\begin{definition}\label{def:phi-star}
Let $\theta\in\Theta$, and let $p_1$, $p_2$ and~$p_3$ be as in~\eqref{eq:exponents}. 
\begin{enumerate}[\upshape(i)]
\item For an admissible triple $(d_1,d_2,d_3)$ of real numbers, let $A_s\coloneqq(d_s-\theta_s)/(1-\theta_s)$ for $s\in[3]$. 
Let $i$, $j$ and~$k$ be indices with $\{i,j,k\}=\{1,2,3\}$ such that $d_i,d_j\leq d_k$ and $A_i\leq A_j$, and define
\[
F^\star_\theta(d_1,d_2,d_3)\coloneqq A_i^{1-p_i}\bigl(\theta_jA_i+(1-\theta_j)A_j\bigr)^{p_i}.
\]
The value does not depend on the choice of the indices: if two entries are largest, then the third is~$1$, so~$i$ is its index, and both choices of~$k$ give $\max(d_1,d_2,d_3)^{p_i}$; if $A_i=A_j$, both choices of~$i$ give~$A_i$.
On the generators of $\mathcal{F}$, let $F^\star_\theta(\polmul_{x,y,z})\coloneqq F^\star_\theta(x,y,z)$. 
\item We extend $F_\theta^\star$ to $\mathcal{F}$ as follows.
For all $S,T\in\mathcal{F}$ and every $\ell\in[3]$ such that the expressions on the left lie in~$\mathcal{F}$, let
\begin{itemize}
\item $F^\star_\theta(S\oplus T)\coloneqq F^\star_\theta(S)+F^\star_\theta(T)$,
\item $F^\star_\theta(S\otimes T)\coloneqq F^\star_\theta(S)\,F^\star_\theta(T)$, and
\item $F^\star_\theta(S\oplus_\ell T)\coloneqq\bigl(F^\star_\theta(S)^{1/p_\ell}+F^\star_\theta(T)^{1/p_\ell}\bigr)^{p_\ell}$.
\end{itemize}
\end{enumerate}
\end{definition}

By construction, the function~$F^\star_\theta$ is additive and multiplicative on $\mathcal{F}$. But we stress that it is defined on formal combinations in $\mathcal{F}$, not on tensors, and we will prove monotonicity of~$F^\star_\theta$ only for the formal preorder $\preceq$ not for degeneration $\degen$.

\subsection{Choice of $\theta$ for the $\omega(a,b,c)$ barrier}

To prove \cref{thm:barrier}, we need to choose the parameters $\theta \in \Theta$ optimally based on $a,b,c$.
For the square case, we can take $\theta_1=\theta_2=\theta_3=1/3$. Then we have $p_i=3/4$ and $A_s=(3d_s-1)/2$, so~$F^\star_\theta$ is the function
\begin{equation}\label{eq:phi-94}
F^\star(x,y,z)=\Bigl(\frac{3m-1}2\Bigr)^{1/4}\Bigl(\frac{x+y+z-m}2\Bigr)^{3/4},\qquad\text{where }m=\min(x,y,z),
\end{equation}
which satisfies $F^\star(\langle n,n,n\rangle)=n^{9/4}$.
The following lemma shows how to choose $\theta$ to cover the second case of \eqref{eq:barrier}.

\begin{lemma}\label{lem:sharp-parameters}
Let $a\geq b\geq c\geq0$ be real numbers with $c(a+b)>ab$, and let
\[
\theta_1\coloneqq1-\frac{2ab}{ab+bc+ca},\qquad \theta_2\coloneqq1-\frac{2bc}{ab+bc+ca},\qquad \theta_3\coloneqq1-\frac{2ac}{ab+bc+ca}.
\]
Then $\theta=(\theta_1,\theta_2,\theta_3)\in\Theta$ and
\[
\frac{a\theta_2}{(1-\theta_1)(1-\theta_3)}+\frac{b\theta_3}{(1-\theta_1)(1-\theta_2)}+\frac{c\theta_1}{(1-\theta_2)(1-\theta_3)}=\frac{(ab+bc+ca)^2}{4abc}.
\]
\end{lemma}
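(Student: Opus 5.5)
Set $\Sigma\coloneqq ab+bc+ca$ and compute directly. Since $c(a+b)>ab\geq0$, we have $c>0$, so $a,b,c>0$ and $\Sigma>0$. The three quantities $1-\theta_1=2ab/\Sigma$, $1-\theta_2=2bc/\Sigma$ and $1-\theta_3=2ac/\Sigma$ sum to $2$, hence $\theta_1+\theta_2+\theta_3=3-2=1$.

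For membership in $\Theta$ it remains to show each $\theta_i>0$, that is, each of $2ab$, $2bc$, $2ac$ is less than $\Sigma$.
\begin{itemize}
\item $2ab<\Sigma$ is equivalent to $ab<c(a+b)$, which is exactly the hypothesis.
\item $2ac<\Sigma$ is equivalent to $ac<ab+bc$. This follows because $ab\geq ac$ (as $b\geq c$) and $bc>0$.
\item $2bc<\Sigma$ is equivalent to $bc<ab+ac$. This follows because $ab\geq bc$ (as $a\geq c$) and $ac>0$.
\end{itemize}
This ordering step is the only place where care is needed; everything else is algebra.

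For the identity, substitute the expressions for $1-\theta_s$ into the three exponents of \eqref{eq:exponents}:
\[
\frac{\theta_2}{(1-\theta_1)(1-\theta_3)}=\frac{\theta_2\Sigma^2}{4a^2bc},\qquad
\frac{\theta_3}{(1-\theta_1)(1-\theta_2)}=\frac{\theta_3\Sigma^2}{4ab^2c},\qquad
\frac{\theta_1}{(1-\theta_2)(1-\theta_3)}=\frac{\theta_1\Sigma^2}{4abc^2}.
\]
Multiplying these by $a$, $b$ and $c$ respectively gives the common factor $\Sigma^2/(4abc)$ times $\theta_2/a+\theta_3/b+\theta_1/c$.

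So it suffices to show $\theta_2/a+\theta_3/b+\theta_1/c=1$. Write $\theta_2=1-2bc/\Sigma$, and similarly for $\theta_1$ and $\theta_3$. The left side then becomes
\[
\Bigl(\frac1a+\frac1b+\frac1c\Bigr)-\frac{2}{\Sigma}\Bigl(\frac{bc}{a}+\frac{ac}{b}+\frac{ab}{c}\Bigr).
\]
Multiply by $abc$. The first term gives $\Sigma$. The second gives $(2/\Sigma)\bigl(b^2c^2+a^2c^2+a^2b^2\bigr)$.

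Use $\Sigma^2=a^2b^2+b^2c^2+c^2a^2+2abc(a+b+c)$, so $b^2c^2+a^2c^2+a^2b^2=\Sigma^2-2abc(a+b+c)$. Then
\[
abc\Bigl(\frac{\theta_2}{a}+\frac{\theta_3}{b}+\frac{\theta_1}{c}\Bigr)=\Sigma-\frac{2\Sigma^2-4abc(a+b+c)}{\Sigma}=\frac{4abc(a+b+c)-\Sigma^2}{\Sigma}.
\]
This expression must equal $abc$. At first sight that looks false, which suggests a pairing error.

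Recheck the pairing. The term $a\theta_2$ has denominator $(1-\theta_1)(1-\theta_3)=4a^2bc/\Sigma^2$, so $a\theta_2$ contributes $\theta_2\Sigma^2/(4abc)$. Thus the factor of $1/a$ cancels against the $a$, and the correct reduction is to show $\theta_1+\theta_2+\theta_3=1$, not a weighted sum.

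That identity was established in the first step. The left side therefore equals $\Sigma^2/(4abc)$, which is the claim.
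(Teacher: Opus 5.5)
Your proof is correct and follows the paper's argument. Positivity of each $\theta_i$ comes from $ab+bc+ca>2ab\geq 2ac\geq 2bc$; you check the three inequalities separately, which is equivalent. The sum $\theta_1+\theta_2+\theta_3=1$ comes from $\sum_s(1-\theta_s)=2$. Each of the three terms equals $\theta_s\Sigma^2/(4abc)$, so their sum is $\Sigma^2/(4abc)$.

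The one flaw is the abandoned detour, which should be deleted rather than left in and then retracted. Your claim that multiplying by $a$, $b$, $c$ gives $\Sigma^2/(4abc)\,(\theta_2/a+\theta_3/b+\theta_1/c)$ is false, since $a\cdot\theta_2\Sigma^2/(4a^2bc)=\theta_2\Sigma^2/(4abc)$. The computation built on that claim is therefore irrelevant.
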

\begin{proof}
Let $D\coloneqq ab+bc+ca$. Since $c(a+b)>ab\geq0$ and $a\geq b\geq c\geq0$, we have that $D>2ab\geq2ac\geq2bc$. Hence $\theta_1,\theta_2,\theta_3>0$. Also, $\theta_1+\theta_2+\theta_3=3-2D/D=1$. Since $1-\theta_1=2ab/D$, $1-\theta_2=2bc/D$ and $1-\theta_3=2ac/D$, the three terms on the left-hand side of the claim are $D^2\theta_2/(4abc)$, $D^2\theta_3/(4abc)$ and $D^2\theta_1/(4abc)$, so their sum is $D^2/(4abc)$.
\end{proof}

We still need to prove that $F_\theta^\star$ is $\preceq$-monotone, but we can already prove the consequences.
If we have any preorder $\preceq'$ on $\mathcal{F}$ for which $F_\theta^\star$ is monotone, we obtain the barrier by evaluating $F_\theta^\star$ on $\langle n_1,n_2,n_3\rangle$ for the values of $\theta$ from \cref{lem:sharp-parameters}.

\begin{proposition}\label{cor:new-rules}
Let~$\preceq'$ be any preorder on~$\mathcal{F}$, and let $n_1,n_2,n_3,r\geq1$ be integers with $\langle n_1,n_2,n_3\rangle\preceq'\langle r\rangle$.
\begin{enumerate}[\upshape(i)]
\item If $F^\star_\theta(T)\leq F^\star_\theta(S)$ for all $T,S \in \mathcal{F}$ with $T\preceq'S$ and for all $\theta\in\Theta$, then 
\[
\log r\geq\begin{cases}a+b&\text{if }c(a+b)\leq ab,\\[2pt]\dfrac{(ab+bc+ca)^2}{4abc}&\text{if }c(a+b)>ab,\end{cases}
\]
where $a\geq b\geq c$ are the numbers $\log n_1$, $\log n_2$ and~$\log n_3$ in nonincreasing order. 
\item Suppose $n_1=n_2=n_3=n$. If $F_{(\frac13,\frac13,\frac13)}^\star(T)\leq F^\star_{(\frac13,\frac13,\frac13)}(S)$ for all $T,S \in \mathcal{F}$ with $T\preceq'S$, then $r \geq n^{9/4}$.
\end{enumerate}
\end{proposition}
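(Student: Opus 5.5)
The plan is to evaluate $F^\star_\theta$ on both sides of $\langle n_1,n_2,n_3\rangle\preceq'\langle r\rangle$ and then choose $\theta$ well. The hypothesis gives $F^\star_\theta(\langle n_1,n_2,n_3\rangle)\leq F^\star_\theta(\langle r\rangle)$ for every $\theta\in\Theta$ (in part (ii), for $\theta=(\frac13,\frac13,\frac13)$). So it suffices to compute both values explicitly and then optimise over~$\theta$.

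\textbf{Computing the two values.} First, $\langle1\rangle=\polmul_{1,1,1}$. The triple $(1,1,1)$ has $A_s=1$ for every~$s$, so $F^\star_\theta(\langle1\rangle)=1$. Additivity of $F^\star_\theta$ under~$\oplus$ then gives $F^\star_\theta(\langle r\rangle)=r$. Next, consider $\polmul_{1,n,n}$. It has two largest entries, so by the remark in \cref{def:phi-star} the index~$i$ is that of the entry~$1$, and the value is $\max(1,n,n)^{p_1}=n^{p_1}$. In the same way $F^\star_\theta(\polmul_{n,1,n})=n^{p_2}$ and $F^\star_\theta(\polmul_{n,n,1})=n^{p_3}$. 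By multiplicativity and \cref{def:expressions},
\[
F^\star_\theta(\langle n_1,n_2,n_3\rangle)=n_1^{p_2}n_2^{p_3}n_3^{p_1}.
\]
Taking logarithms, we obtain for every $\theta\in\Theta$
\[
\log r\;\geq\;p_2\log n_1+p_3\log n_2+p_1\log n_3 .
\]
For part (ii) we take $\theta=(\frac13,\frac13,\frac13)$. Then $p_i=\frac{1/3}{(2/3)^2}=\frac34$, so $\log r\geq\frac94\log n$, which is the claim.

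\textbf{Optimising over $\theta$ in part (i).} The formula~\eqref{eq:exponents} is equivariant under permutations: permuting the coordinates of~$\theta$ permutes $p_1,p_2,p_3$ in the same way. Since the hypothesis holds for all $\theta\in\Theta$, we may therefore relabel and assume $a=\log n_1$, $b=\log n_2$ and $c=\log n_3$. The bound then reads
\[
\log r\geq\frac{a\theta_2}{(1-\theta_1)(1-\theta_3)}+\frac{b\theta_3}{(1-\theta_1)(1-\theta_2)}+\frac{c\theta_1}{(1-\theta_2)(1-\theta_3)}.
\]
In the case $c(a+b)>ab$, we choose $\theta$ as in \cref{lem:sharp-parameters}. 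That lemma gives $\theta\in\Theta$, and the right-hand side equals $(ab+bc+ca)^2/(4abc)$.

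In the case $c(a+b)\leq ab$, no interior optimum is available, so we instead pass to the boundary of~$\Theta$ by a limit. Take $\theta=(\eps,\frac{1-\eps}2,\frac{1-\eps}2)\in\Theta$ with $\eps\in(0,1)$. The first two terms become $a/(1+\eps)$ and $b/(1+\eps)$. The third term is nonnegative, since $c\geq0$ (because $n_3\geq1$). Hence $\log r\geq(a+b)/(1+\eps)$ for all $\eps>0$, and letting $\eps\to0$ gives $\log r\geq a+b$.

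\textbf{Main obstacle.} The only delicate point is the relabelling step. One has to check that permuting~$\theta$ really permutes the $p_i$ correspondingly, so that the coefficients attached to $\log n_1,\log n_2,\log n_3$ can be matched to $a,b,c$ in any order. This follows directly from the symmetric form of~\eqref{eq:exponents}. Degenerate cases such as $a=0$ or $c=0$ fall into the first case and are handled by the same limit argument.
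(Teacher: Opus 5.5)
Your proposal is correct and follows the paper's proof essentially step for step. You compute $F^\star_\theta(\langle r\rangle)=r$ and $F^\star_\theta(\langle n_1,n_2,n_3\rangle)=n_1^{p_2}n_2^{p_3}n_3^{p_1}$, and you relabel via the permutation-equivariance of the $p_i$. You then use \cref{lem:sharp-parameters} when $c(a+b)>ab$, the limit $\theta=(\eps,\frac{1-\eps}2,\frac{1-\eps}2)$ with $\eps\to0$ when $c(a+b)\leq ab$, and $\theta=(\frac13,\frac13,\frac13)$ for part~(ii).
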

\begin{proof}
We compute the values of~$F^\star_\theta$ at~$\langle r\rangle$ and~$\langle n_1,n_2,n_3\rangle$. 
At the triple $(1,n,n)$, %
\cref{def:phi-star} gives $F^\star_\theta(1,n,n)=\bigl(\theta_3+(1-\theta_3)A_3\bigr)^{p_1}=n^{p_1}$. In the same way, $F^\star_\theta(n,1,n)=n^{p_2}$ and $F^\star_\theta(n,n,1)=n^{p_3}$. In particular, $F^\star_\theta(\langle1\rangle)=F^\star_\theta(1,1,1)=1$, so $F^\star_\theta(\langle r\rangle)=r$, and since $F^\star_\theta$ is multiplicative, $F^\star_\theta(\langle n_1,n_2,n_3\rangle)=n_1^{p_2}n_2^{p_3}n_3^{p_1}$. So the hypothesis gives, for every~$\theta$ to which it applies (every $\theta\in\Theta$ in~(i), and $\theta=(\frac13,\frac13,\frac13)$ in~(ii)),
\begin{equation}\label{eq:monotone-values}
p_2\log n_1+p_3\log n_2+p_1\log n_3\leq\log r.
\end{equation}

(i) Renumbering the entries of~$\theta$ renumbers $p_1$, $p_2$ and~$p_3$ in the same way. Since~\eqref{eq:monotone-values} holds for all~$\theta$, it follows that $ap_2+bp_3+cp_1\leq\log r$ for all~$\theta$. 
If $c(a+b)>ab$, then for the~$\theta$ of \cref{lem:sharp-parameters}, $ap_2+bp_3+cp_1=(ab+bc+ca)^2/(4abc)$, so $(ab+bc+ca)^2/(4abc)\leq\log r$. 

If $c(a+b)\leq ab$, let $0<\theta_1<1$ and $\theta_2=\theta_3=(1-\theta_1)/2$. Then $p_2=p_3=1/(1+\theta_1)$, so $(a+b)/(1+\theta_1) \leq \log r$, since $cp_1\geq0$. Letting $\theta_1\to0$ gives $a+b \leq \log r$.

(ii) For $\theta_1=\theta_2=\theta_3=1/3$, we have $p_1=p_2=p_3=3/4$, so~\eqref{eq:monotone-values} gives $\tfrac94\log n\leq\log r$, that is, $n^{9/4}\leq r$.
\end{proof}

It remains to show monotonicity for the preorder $\preceq$ as defined in \cref{def:formal-relation}. We will first prove the basic properties of $F^\star_\theta$ that we will need for that.

\subsection{Basic properties of $F^\star_\theta$}\label{sec:properties-of-monotone-function}

The following proposition will give the properties of~$F^\star_\theta$ that we will use to prove monotonicity under $\preceq$. We state it for the tensors $\polmul_{x,y,x+y-1}$; since the values of~$F^\star_\theta$ do not change when the legs and the entries of~$\theta$ are renumbered together, it also holds with the legs renumbered. Parts~(\ref{op:basic})--(\ref{op:tripling}) correspond to the properties of spectral points in \cref{lem:profiles}, and part~(\ref{op:bounds}) will be used for~\fruleref{D2} and~\ruleref{R}.

\begin{proposition}\label{prop:rect-sharp}
Let $\theta\in\Theta$, and let $p\coloneqq p_1 = \theta_1(1-\theta_2)^{-1}(1-\theta_3)^{-1}$.
\begin{enumerate}[\upshape(i)]
\item\label{op:basic} For all positive integers $x$, $y$ and~$z$ with $z=x+y-1$, we have $\min(x,y)\leq F^\star_\theta(x,y,z)\leq z$, and $F^\star_\theta(x,y,x+y-1)$ is nondecreasing in~$x$ and in~$y$.
\item\label{op:concave} For every positive integer~$x$, the function $y\mapsto F^\star_\theta(x,y,x+y-1)^{1/p}$ on~$\ZZ_{\geq1}$ is concave.
\item\label{op:tripling} For all positive integers $x$, $y$, $z$ with $z=x+y-1$, we have
\[
2F^\star_\theta(x,y,z)^{1/p}+F^\star_\theta(x,z,y)^{1/p}\leq F^\star_\theta(x,2y+z,y+2z)^{1/p}.
\]
\item\label{op:bounds} For all positive integers $x$, $y$, $z$ with $z=x+y-1$, we have $x^{1-p}y^p\leq F^\star_\theta(x,y,z)\leq x^{p_2}y^p$.
\end{enumerate}
\end{proposition}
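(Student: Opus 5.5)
We will work throughout with the explicit form of \cref{def:phi-star} on triples $(x,y,z)$ with $z=x+y-1$. Here $z$ is the largest entry, so $k=3$ and $\{i,j\}=\{1,2\}$, and we write $A_1=(x-\theta_1)/(1-\theta_1)$, $A_2=(y-\theta_2)/(1-\theta_2)$. This splits the $(x,y)$-plane into two regimes. In the \emph{large-$y$ regime} $A_1\leq A_2$ we have
\[
F^\star_\theta(x,y,z)^{1/p}=A_1^{(1-p)/p}\bigl(\theta_2A_1-\theta_2+y\bigr),
\]
which is affine in $y$ with slope $s(x)\coloneqq A_1^{(1-p)/p}$. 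In the \emph{small-$y$ regime} $A_2\leq A_1$ we have
\[
F^\star_\theta(x,y,z)=A_2^{1-p_2}\bigl(\theta_1A_2+x-\theta_1\bigr)^{p_2}.
\]
All four parts will be proved by computing in these two regimes and checking continuity at the boundary $A_1=A_2$. The continuity check is immediate, since both formulas then give $A_1$. The identities $p_ix_jx_k=\theta_i$ with $x_s=1-\theta_s$ and $x_1+x_2+x_3=2$ will be used repeatedly to simplify exponents.

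For (\ref{op:basic}) and (\ref{op:bounds}) we use that $A_s\geq d_s$ for $d_s\geq1$, with $A_s$ affine and increasing in $d_s$. Each regime formula is then a weighted geometric mean of quantities that are increasing in $x$ and $y$, which gives monotonicity. For the two-sided bounds of (\ref{op:bounds}) we compare with $x^{1-p}y^{p}$ and $x^{p_2}y^{p}$ through weighted AM--GM and Bernoulli-type inequalities in each regime separately. The bounds in (\ref{op:basic}) then follow from (\ref{op:bounds}) together with $\min(x,y)\leq x^{1-p}y^p$ and $x^{p_2}y^p\leq x+y-1$, or alternatively directly by evaluating at the regime boundary, where the value is $A_1$.

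For (\ref{op:concave}) we fix $x$ and set $G(y)=F^\star_\theta(x,y,x+y-1)^{1/p}$. In the large-$y$ regime $G$ is affine. In the small-$y$ regime $G$ is homogeneous of degree $1/p_1$ in the pair $\bigl(A_2,\ \theta_1A_2+x-\theta_1\bigr)$, with exponents $(1-p_2)/p_1$ and $p_2/p_1$. We will show that its second derivative in $y$ is nonpositive by writing it in terms of $u=A_2/(\theta_1A_2+x-\theta_1)\in(0,1]$ and simplifying the exponents with the $\theta$-identities. We then check that the left derivative at the kink is at least the slope $s(x)$, so that the concatenation is concave. Discrete concavity then follows from concavity of the real extension.

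This step is where I expect the main difficulty. The exponents $p_1$ and $p_2$ can exceed $1$, so the small-$y$ piece is not simply a geometric mean with exponents in $[0,1]$, and the sign of the second derivative has to come from a genuine cancellation. If the direct computation becomes messy, I would first try to reparametrise by $v=\log A_2$ and prove concavity of $G$ as a composition.

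For the tripling inequality (\ref{op:tripling}), note that at $(x,\,2y+z,\,y+2z)$ the first entry is the smallest and the last the largest, so the right-hand side lies in the large-$y$ regime. Hence it equals $s(x)\bigl(\theta_2A_1-\theta_2+2y+z\bigr)$. The term $F^\star_\theta(x,z,y)$ has its largest entry on leg~$2$, so it is evaluated with $\{i,j\}=\{1,3\}$ and admits an analogous two-regime formula. The target inequality thus reduces to
\[
2G(y)+F^\star_\theta(x,z,y)^{1/p}\leq G(y)+s(x)(y+z)
\]
in the case that $y$ is in the large-$y$ regime. We will prove it by bounding $G(y)$ and $F^\star_\theta(x,z,y)^{1/p}$ above by their tangent lines at the kink, using (\ref{op:concave}) and its analogue for the $(1,3)$ legs, and then verifying the resulting linear inequality in $y$ with the $\theta$-identities. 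The case of small $y$ is handled in the same way, now using the small-$y$ formula for $G$.
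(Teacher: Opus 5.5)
Your two-regime set-up is the right one, but the tripling step (iii) rests on a false claim.

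\textbf{The gap in (iii).} For a triple whose largest entry is on leg~3, the regime is decided by comparing $A_1$ with $A_2$, not $d_1$ with $d_2$. Because the denominators $1-\theta_1$ and $1-\theta_2$ differ, $x\leq 2y+z$ does not put $(x,2y+z,y+2z)$ in the large-$y$ regime. Take $\theta=(0.9,0.05,0.05)$, $x=10$, $y=1$. The right-hand side is $F^\star_\theta(10,12,21)$ with $A_1=91>A_2\approx12.6$. So $G(3y+x-1)$ is given by the small-$y$ formula, and it is strictly smaller than the affine value $s(x)(\theta_2A_1-\theta_2+3y+x-1)$ that you substitute, since a concave function lies below the extension of its affine piece.

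Bounding the left-hand side by tangent lines at the kinks therefore only compares it with this larger number, and does not prove (iii). Your reduction is valid when $y$ itself is in the large-$y$ regime. The case you dismiss as ``handled in the same way'' is exactly where it fails.

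The missing idea is to bound the right-hand side from below through slopes rather than through a formula. $G$ is concave with eventual slope $s(x)$, so all its difference quotients are at least $s(x)$. Hence $G(3y+x-1)\geq G(y)+s(x)(2y+x-1)$ in every regime. Then bound just one copy of $G(y)$ by its affine majorant $s(x)(y+\theta_2(A_1-1))$, and bound $F^\star_\theta(x,z,y)^{1/p}$ by $s(x)(y+\theta_3(A_1-1))$. Finally use $(\theta_2+\theta_3)(A_1-1)=x-1$. No case distinction is needed.

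\textbf{The upper bound in (i).} Your first route uses $x^{p_2}y^{p}\leq x+y-1$, which is false. For $\theta=(\frac13,\frac13,\frac13)$ one has $p_1=p_2=\frac34$, and $x=y=4$ gives $8>7$. Evaluating at the regime boundary says nothing about the other points. Argue directly with weighted AM--GM. For $A_1\leq A_2$,
$F^\star_\theta\leq(1-p_1)A_1+p_1(\theta_2A_1+(1-\theta_2)A_2)=1+p_2(x-1)+p_1(y-1)\leq x+y-1$,
using $(1-p_1)+p_1\theta_2=p_2(1-\theta_1)$ and $p_1(1-\theta_2)+p_2(1-\theta_1)=1$. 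The case $A_2<A_1$ is symmetric.

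\textbf{The size of $p_i$ and where (ii) needs work.} The AM--GM step needs $p_i<1$. Contrary to your remark, $p_i=\theta_i/(\theta_i+\theta_j\theta_k)\in(0,1)$ always; your description of the regime formulas as weighted geometric means already assumes this. What exceeds $1$ is $1/p$, the degree of homogeneity of $G$ in $(A_2,L)$ with $L=\theta_1A_2+x-\theta_1$. That is why (ii) needs a genuine cancellation, which your plan leaves undone. In your variable $u=A_2/L$, the bracket in the second derivative is a quadratic in $u$. It vanishes at the kink $u=1$, has positive derivative there, and is negative at $u=0$. This forces it to be negative on $[0,1)$, which is the sign you need.

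\textbf{Part (iv).} ``AM--GM and Bernoulli in each regime'' is not yet an argument, and it is not routine. The small-$y$ formula, continued past $A_2=A_1$, grows linearly in $y$, so it violates the bound $x^{p_2}y^{p}$; the regime constraint must therefore enter somewhere. The natural route is through (ii):
\begin{itemize}
\item Lower bound: $G(y)\geq G(1)+s(x)(y-1)$, where $G(1)=x^{p_2/p}$ and both $G(1)$ and $s(x)$ are at least $x^{(1-p)/p}$ because $p+p_2>1$.
\item Upper bound: $G(y)\leq G(1)+(y-1)G'(1^+)\leq yG(1)$. Here $G'(1^+)\leq G(1)$ follows from a direct computation together with $(1-p_2)+p_2\theta_1=p_1(1-\theta_2)$.
\end{itemize}
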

\begin{proof}
We begin by showing some useful identities on the $p_i$'s and $\theta_i$'s that we will use throughout the proof.
For $\{i,j,k\}=\{1,2,3\}$, we have $p_i=\theta_i/(\theta_i+\theta_j\theta_k)$, %
so $0<p_i<1$. Moreover, %
\begin{equation}\label{eq:sharp-first}
p_i(1-\theta_j)+p_j(1-\theta_i) = (\theta_i  + \theta_j) / (1-\theta_k) =  1.
\end{equation}
In particular, $p_i+p_j=1+p_i\theta_j+p_j\theta_i>1$, and 
\begin{equation}\label{eq:p_i-equality}
    (1-p_i)+p_i\theta_j=p_j(1-\theta_i).
\end{equation}

Now set $f(a,b)\coloneqq F^\star_\theta(a,b,a+b-1)$ for real numbers $a,b\geq1$, and let $A\coloneqq(a-\theta_1)/(1-\theta_1)$ and $B\coloneqq(b-\theta_2)/(1-\theta_2)$. Then~$A$ and~$B$ are the numbers~$A_1$ and~$A_2$ of \cref{def:phi-star}.
Since $a,b \leq a+b-1$ we take $k=3$ in \cref{def:phi-star}, which gives
\begin{equation}\label{eq:definition small f}
f(a,b)=\begin{cases}A^{1-p_1}\bigl(\theta_2A+(1-\theta_2)B\bigr)^{p_1}&\text{if }A\leq B,\\[2pt]B^{1-p_2}\bigl(\theta_1B+(1-\theta_1)A\bigr)^{p_2}&\text{if }A>B.\end{cases}
\end{equation}
We show the function $f$ is continuously differentiable. To start, both cases are equal whenever $A=B$, so $f$ is continuous. 
We compute the derivatives of the functions in the two cases of \eqref{eq:definition small f} at $A = B$, and show they are equal.
We claim the derivatives in $a$ of both cases are equal to $p_2$. 
For the second case this is directly clear. 
By the chain rule, the derivative in $a$ of the first case is equal to its derivative in $A$ divided by $1-\theta_1$.
This derivative in $A$ is equal to $(1-p_1)A^{-p_1}U^{p_1} + p_1\theta_2A^{1-p_1}U^{p_1-1}$, where $U = \theta_2A+(1-\theta_2)B$.
When $A = B$ this equals $(1-p_1) + p_1\theta_2$, which is $p_2(1-\theta_1)$ by \eqref{eq:p_i-equality}. This proves the claim. 
By a symmetric argument, the derivatives in $b$ of both cases equal $p_1$. We conclude $f$ is continuously differentiable.

We are now ready to prove the four items.
\begin{enumerate}
\item[(\ref{op:basic})] 
    Each case of \eqref{eq:definition small f} is a weighted geometric mean, with nonnegative weights of sum~$1$, and each factor is a convex combination of $A$ and $B$. 
    Thus, $f\geq\min(A,B)\geq\min(a,b)$, where the second inequality follows from $A \ge a$ and $B \ge b$. For the upper bound, the weighted AM-GM inequality and \eqref{eq:p_i-equality} give for the first case an upper bound of 
    \begin{align*}
        (1-p_1)A+p_1\bigl(\theta_2A+(1-\theta_2)B\bigr)
        &=p_2(1-\theta_1)A+p_1(1-\theta_2)B
        \\&=p_2 (a - \theta_1) + p_1 (b - \theta_2).
    \end{align*}
    By \eqref{eq:sharp-first}, this equals $1 + p_2(a-1) + p_1(b-1)$, which is upper bounded by $a+b-1$ because $p_1,p_2 < 1$. 
    The second case is upper bounded in an analogous manner. 
    
    Finally, 
    because $A$ and $B$ are affine functions that are nondecreasing in $a$ and $b$, each case of~\eqref{eq:definition small f} is nondecreasing in~$a$ and in~$b$. Then because $f$ is continuous, $f$ is nondecreasing in each argument. 

\item[(\ref{op:concave})]
Fix $a\geq1$, and let $g\coloneqq f^{1/p_1}$. 
Because~$B$ is an increasing affine function of~$b$, it suffices to show that~$g$ is concave as a function of the real number $B\geq1$.
The same is then true for its restriction to the integers~$b$.
It furthermore suffices to show that $g$ is piece-wise concave,
because $f$ and hence also $g$ is continuously differentiable.

For $A \le B$, we have that $g=A^{(1-p_1)/p_1}\bigl(\theta_2A+(1-\theta_2)B\bigr)$ is affine, and hence concave. 

Now let $A > B$. We have $g=B^uL^v$, where $L\coloneqq\theta_1B+(1-\theta_1)A$, $u\coloneqq(1-p_2)/p_1$ and $v\coloneqq p_2/p_1$. 
We will show that the second derivative of $g$ in $B$ is negative.
A direct computation of the first and second derivatives w.r.t.\ $B$ 
gives
\begin{align*}
\tfrac{\partial}{\partial B}  g%
&= u B^{u-1} L^v + \theta_1 v B^u L^{v-1}, \\
\tfrac{\partial^2}{\partial B^2} g%
&=
B^{u-2}L^{v-2}\bigl(u(u-1)L^2+2uv\theta_1BL+v(v-1)\theta_1^2B^2\bigr).
\end{align*}
Let $Q$ be the expression in the outer brackets. We claim that $Q < 0$, which is sufficient because $B^{u-2}L^{v-2} > 0$.
Rewrite
\[
Q=(uL+v\theta_1B)^2-\bigl(uL^2+v(\theta_1B)^2\bigr).
\]
Since $\partial L/\partial B=\theta_1$ and $u+v=1/p_1>1$, then 
    $\partial Q / \partial B=2\theta_1(u+v-1)(uL+v\theta_1B)>0$,
so~$Q$ is increasing in~$B$. So if we show that $Q$ is equal to $0$ at $A = B$, then we are done.
At $A=B$ we have $L=A$, so $Q=A^2\bigl((u+v\theta_1)^2-(u+v\theta_1^2)\bigr)$. 
By \eqref{eq:p_i-equality} we get $u+v\theta_1=1-\theta_2$, and from the definitions we get $v\theta_1(1-\theta_1)=\theta_2(1-\theta_2)$. 
We find
\begin{align*}
 u+v\theta_1^2=(u+v\theta_1)-v\theta_1(1-\theta_1)
 =
 (1-\theta_2)^2,
\end{align*}
which shows $(u+v\theta_1)^2-(u+v\theta_1^2) = 0$, and the proof is complete.

\item[(\ref{op:tripling})] 
    Write $(x,y,z)=(a,b,a+b-1)$.
    Fix $a\geq1$. Let $\mu \coloneqq A^{(1-p_1)/p_1}$, and, for real $b\geq1$, let $h_2(b) \coloneqq F^\star_\theta(a,b,a+b-1)^{1/p_1}$ and $h_3(b)\coloneqq F^\star_\theta(a,a+b-1,b)^{1/p_1}$.

    In (\ref{op:concave}) we proved that the function $h_2$ is concave, and the same proof applies for $h_3$ if we exchange the second and third legs. 
    For large $b$ (so that we are in the first case of \eqref{eq:definition small f} and similarly for $h_3$), 
    we have $h_\ell(b) = \mu(b+\theta_\ell(A-1))$. 
    By concavity, it follows that $h_\ell(b) \le \mu(b+\theta_\ell(A-1))$ for any $b \ge 1$. 
    Moreover,~$\mu$ lower bounds the slopes at any point, which gives $h_\ell(b)+\mu t\leq h_\ell(b+t)$ for $t \geq 0$. 
    By \eqref{eq:sharp-first} we also obtain $(\theta_2+\theta_3)(A-1)=a-\theta_1 - (\theta_2 + \theta_3) = a-1$. 
    By the above three facts, we find that
    \begin{align*}
        2h_2(b)+h_3(b)
        &\leq 
        h_2(b)+\mu\bigl(b+ \theta_2(A-1)\bigr) + \mu\bigl(b + \theta_3(A-1)\bigr)
        \\&= h_2(b)+\mu(2b+a-1)
        \leq h_2(3b+a-1),
    \end{align*}
    which is the desired result.

\item[(\ref{op:bounds})] 
    Fix $a\geq1$. Let $h\coloneqq h_2$ and~$\mu$ be as in the proof of~(\ref{op:tripling}). 
    
    We prove the first inequality.
    As before, concavity gives $h(b)\geq h(1)+\mu(b-1)$ for every real $b\geq1$. We find that $h(1)=f(a,1)^{1/p_1}=a^{p_2/p_1}$, because when $b = 1$ we have $B = 1$ and $\theta_1B + (1-\theta_1)A = a$.
    Since $A\geq a$ and $p_1+p_2>1$, both~$\mu$ and~$h(1)$ are at least $a^{(1-p_1)/p_1}$. 
    It follows that $h(b)\geq a^{(1-p_1)/p_1}b$, which is the first inequality.
    
    We prove the second inequality. By concavity, the slopes of $h$ do not exceed $h'(1)$, where we take the derivative from the right. Hence, $h(b)\leq h(1)+(b-1)h'(1)$.
    If $a=1$, then $A=1\leq B$ for every $b\geq1$, so the first case of~\eqref{eq:definition small f} gives $h(b)=b$, so $h'(1)=1=h(1)$. Let $a>1$. Then $B=1<A$ at $b=1$, so for~$b$ near~$1$ the second case applies. We have $h=B^uL^v$, with~$L$, $u$ and~$v$ as in the proof of~(\ref{op:concave}). At $b=1$ we have $B=1$ and $L=a$, and 
    $B'=1/(1-\theta_2)$ and $L'=\theta_1/(1-\theta_2)$, so
    \[
    \frac{h'(1)}{h(1)}= \Bigl(\frac{uB'}{B}+\frac{vL'}{L}\Bigr)(1)=\frac{1}{1-\theta_2}\Bigl(u+\frac{v\theta_1}{a}\Bigr)=\frac{1-p_2+p_2\theta_1/a}{p_1(1-\theta_2)}\leq\frac{1-p_2+p_2\theta_1}{p_1(1-\theta_2)}=1,
    \]
    where we use $a>1$, and the last equality is~\eqref{eq:p_i-equality}.
    It follows that $h(b)\leq b\,h(1)$, that is, $f(a,b)\leq a^{p_2}b^{p_1}$. \qedhere
\end{enumerate}
\end{proof}

\subsection{Proof of monotonicity of $F_\theta^\star$ and the operational barrier}\label{sec:barrier-proof}

We prove $F_\theta^\star$ is monotone under $\preceq$.

\begin{proposition}[Combinations of the ingredients]\label{prop:combinations}
Let $\theta\in\Theta$. For all $S,T\in\mathcal{F}$, if $T\preceq S$, then $F^\star_\theta(T)\leq F^\star_\theta(S)$. \end{proposition}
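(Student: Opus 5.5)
Since $\preceq$ is the smallest preorder compatible with $\oplus$ and $\otimes$ that has the listed properties, it suffices to show that the relation $T\le_F S$ defined by $F^\star_\theta(T)\leq F^\star_\theta(S)$ is itself such a preorder. It is clearly reflexive and transitive. It is compatible with $\oplus$ and $\otimes$ because $F^\star_\theta$ is additive and multiplicative with positive values. It remains to check each generating property \fruleref{D1}--\fruleref{D4}, \ruleref{R} and \ruleref{I}, including all leg renumberings. Renumbering the legs together with the entries of $\theta$ (and hence of $p_1,p_2,p_3$) leaves $F^\star_\theta$ unchanged, and the statement is to be proved for all $\theta\in\Theta$. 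So it suffices to treat the versions with the shared leg first (that is, $\oplusX$ and $p=p_1$) and, for generators, to use the normalisation $\polmul_{x,y,x+y-1}$ or its $(23)$-swap. There is one subtlety. In \fruleref{D3} and \fruleref{D4} the triple may have either $z=x+y-1$ or $y=x+z-1$, and these two cases are exchanged by swapping legs 2 and 3, which also swaps $\theta_2,\theta_3$. Since \cref{prop:rect-sharp} holds for all $\theta$, both cases are covered by that proposition applied with $\theta_2,\theta_3$ exchanged; note that $p_1$ is symmetric in $\theta_2,\theta_3$.

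For the partial sum rules I would use the fact that $x\mapsto x^{1/p}$ with $\oplusX$ is an $\ell^{1/p}$-type sum. \fruleref{D3} with $\oplusX C$ then reduces, after raising to $1/p_1$ and cancelling $F^\star_\theta(C)^{1/p_1}$, to the concavity in \cref{prop:rect-sharp}(\ref{op:concave}). \fruleref{D4} reduces in the same way to (\ref{op:tripling}). \fruleref{D1} is (\ref{op:basic}): for the normalised form, $F^\star_\theta\leq z=\max$, and $F^\star_\theta(\langle r\rangle)=r$. For \ruleref{R1} we have $F(T_1)\leq F(T_1)+F(T_2)$ and $F(T_1)\leq(F(T_1)^{1/p}+F(T_2)^{1/p})^p$. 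The inequality $(u^{1/p}+v^{1/p})^p\leq u+v$ holds because $1/p\geq1$, that is, by subadditivity of $t\mapsto t^p$. \ruleref{R2} is the monotonicity in (\ref{op:basic}), applied in the renumbered form that lowers the two legs other than the fixed one; this amounts to decreasing $y$ in $(x,y,x+y-1)$. \ruleref{R3} is the upper bound in (\ref{op:bounds}): $F^\star_\theta(a,1,a)=a^{p_2}$ and $F^\star_\theta(1,b,b)=b^{p_1}$ (computed as in \cref{cor:new-rules}). For the isomorphisms, \ruleref{I1} and \ruleref{I3} hold by additivity and multiplicativity and the values $n^{p_i}$ on matrix multiplication generators. \ruleref{I2} holds because the $\ell^{1/p}$ sum is associative and commutative, and $(\sum_i u_i^{1/p})(\sum_j v_j^{1/p})=\sum_{i,j}(u_iv_j)^{1/p}$ gives distributivity. \ruleref{I4} holds since $r$ copies give $r^{p_1}F(U)=F(U)F^\star_\theta(1,r,r)$.

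The main obstacle is \fruleref{D2}. Put $u_\ell=F(T_\ell)^{1/p_1}$. Then the left side is $t^{p_1}\sum_\ell u_\ell^{p_1}$, and the right side is $(\sum_\ell u_\ell)^{p_1}F^\star_\theta(m,t,m+t-1)$. The lower bound in (\ref{op:bounds}) gives $F^\star_\theta(m,t,m+t-1)\geq m^{1-p_1}t^{p_1}$. So it suffices to have $\sum_\ell u_\ell^{p_1}\leq m^{1-p_1}(\sum_\ell u_\ell)^{p_1}$, which is the power mean (Jensen) inequality for the concave map $t\mapsto t^{p_1}$. The bookkeeping I expect to need care is verifying that the renumbered versions of \fruleref{D2} and \ruleref{R3} (the other legs shared) match (\ref{op:bounds}) after permuting $\theta$, and that expressions built via $\oplus_2,\oplus_3$ use the correct $p_\ell$ throughout.
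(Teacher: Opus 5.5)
Your proposal is correct and follows essentially the same route as the paper. You reduce, via the minimality of $\preceq$ and the invariance of $F^\star_\theta$ under renumbering the legs together with $\theta$, to the unpermuted generating properties, and you check each one with \cref{prop:rect-sharp}. Your treatment of (D2), using the lower bound of \cref{prop:rect-sharp}(\ref{op:bounds}) together with the power-mean inequality $\sum_\ell u_\ell^{p}\leq m^{1-p}(\sum_\ell u_\ell)^{p}$, is the paper's H\"older step in different words.
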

\begin{proof}
Let $\preceq_{F^\star_\theta}$ be the relation on~$\mathcal{F}$ defined as follows: ~$T \preceq_{F^\star_\theta} S$ whenever $F^\star_\theta(T)\leq F^\star_\theta(S)$. Since~$\preceq$ is the smallest preorder on~$\mathcal{F}$ that is compatible with~$\oplus$ and~$\otimes$ and has properties \fruleref{D1}--\fruleref{D4}, \ruleref{R} and~\ruleref{I}, also with the legs renumbered, it suffices to show that $\preceq_{F^\star_\theta}$ is also such a preorder. Indeed, then it contains~$\preceq$, which is the claim. 

By definition of $F^\star_{\theta}$, the relation $\preceq_{F^\star_\theta}$ is compatible with direct sums and tensor products.
Renumbering the legs of an expression changes~$F^\star_\theta$ into~$F^\star_{\theta'}$, where~$\theta'$ is~$\theta$ with its entries renumbered in the same way. 
So, it suffices to check the properties as stated in \cref{def:formal-relation}, without the permutations.
We do this now. We will write $p \coloneqq p_1$.
\begin{enumerate}
\item[\fruleref{D1}] This is the upper bound in \cref{prop:rect-sharp}(\ref{op:basic}).
\item[\fruleref{D2}] Let $w_i=F^\star_\theta(T_i)$ for every $i\in[m]$. Then by H\"older's inequality and \cref{prop:rect-sharp}(\ref{op:bounds}),
\[
t^p\sum_{i=1}^mw_i\leq m^{1-p}t^p\Bigl(\sum_{i=1}^mw_i^{1/p}\Bigr)^p\leq F^\star_\theta(m,t,m+t-1)\Bigl(\sum_{i=1}^mw_i^{1/p}\Bigr)^p.
\]
The first and last quantities are the values of~$F^\star_\theta$ at the two sides of~\fruleref{D2}.
\item[\fruleref{D3}] 
Raising the values at the two sides to the power~$1/p$ gives the inequality of \cref{prop:rect-sharp}(\ref{op:concave}) if $z=x+y-1$, and this inequality with the second and third legs renumbered if $y=x+z-1$; a further summand~$C$ adds $F^\star_\theta(C)^{1/p}$ to both sides.
\item[\fruleref{D4}] 
The same as the proof for \fruleref{D3}, but applying \cref{prop:rect-sharp}(\ref{op:tripling}) instead.
\item[\ruleref{R}] The value at~$T_1$ is at most the values at $T_1\oplus T_2$ and at $T_1\oplusX T_2$, and the value at $T_1\oplusX T_2$ is at most that at $T_1\oplus T_2$, since $(x^{1/p}+y^{1/p})^p\leq x+y$ for all $x,y>0$ (using that $p \in [0,1]$). Moreover, $F^\star_\theta(x,y,x+y-1)$ is nondecreasing in~$y$, also with the second and third legs renumbered, and $F^\star_\theta(a,b,a+b-1)\leq a^{p_2}b^p$, which is the value at $\polmul_{a,1,a}\otimes\polmul_{1,b,b}$ (\cref{prop:rect-sharp}(\ref{op:basic}) and~(\ref{op:bounds})).
\item[\ruleref{I}] The isomorphisms do not change~$F^\star_\theta$. Indeed, $F^\star_\theta$ is additive and multiplicative, and $F^\star_\theta(\langle1\rangle)=F^\star_\theta(1,1,1)=1$. The formula for partial direct sums is symmetric and associative, and homogeneous of degree one, so it respects the distributivity of~$\otimes$ over~$\oplusX$. The value at $\langle n_1,n_2,n_3\rangle$ is multiplicative in $(n_1,n_2,n_3)$. Finally, the value at the partial direct sum of $r$~copies of~$U$ is $r^pF^\star_\theta(U)$, which is the value at $U\otimes\polmul_{1,r,r}$.\qedhere
\end{enumerate}
\end{proof}

\begin{proof}[Proof of \cref{thm:barrier}]
\Cref{prop:combinations} tells us that $F_\theta^\star$ is monotone under $\preceq$ for every $\theta\in\Theta$. Then \cref{cor:new-rules}, with~$\preceq$ as~$\preceq'$, gives us the result.
\end{proof}

\begin{example}
The barrier of \cref{cor:new-rules} applies also to preorders $\preceq'$ extending $\preceq$, as long as $F_\theta^\star$ remains monotone. 
As a basic example, if we add Strassen's degenerations $\langle\lceil3n^2/4\rceil\rangle\degen\langle n,n,n\rangle$~\cite[Theorem~6.6]{MR882307} as one of the generating relations of $\preceq$, the barrier still applies.
Indeed, 
$\lceil3n^2/4\rceil\leq n^2\leq F^\star_\theta(\langle n,n,n\rangle)$,
because $F^\star_\theta(\langle n,n,n\rangle) = n^{p_1+p_2+p_3}$ and
\[
p_1+p_2+p_3-2 = \frac{2\theta_1\theta_2\theta_3}{(1-\theta_1)(1-\theta_2)(1-\theta_3)} > 0.
\]
Similarly, each $F^\star_\theta$ is also monotone on the polynomial multiplication degenerations of \cref{rem:det-general} and \cref{rem:tripling-general}, so the barrier applies.
As another example, which is also clear form the asymptotic spectrum point of view, the barrier remains true if $\preceq$ is replaced by its asymptotic closure. There an expression $T\in\mathcal{F}$ is at most $S\in\mathcal{F}$ if  $T^{\otimes n}\preceq S^{\otimes n}\otimes\langle2^{o(n)}\rangle$ for every integer $n\geq1$. Indeed, then $F^\star_{\theta}(T)^n\leq F^\star_{\theta}(S)^n\,2^{o(n)}$ for every integer $n\geq1$, since~$F^\star_{\theta}$ is multiplicative and $F^\star_{\theta}(\langle2^{o(n)}\rangle)=2^{o(n)}$, so $F^\star_{\theta}(T)\leq F^\star_{\theta}(S)$.
\end{example}

\subsection{Operational asymptotic spectrum $\Delta_{\preceq}$}
Let~$\Delta_{\preceq}$ be the set of functions $F\colon\mathcal{F}\to\RR_{>0}$ with $F(\langle1\rangle)=1$, $F(S\oplus T)=F(S)+F(T)$ and $F(S\otimes T)=F(S)\,F(T)$ for all $S,T\in\mathcal{F}$, and with $F(T)\leq F(S)$ whenever $T\preceq S$. This is the asymptotic spectrum of the semiring of $\approx$-classes of~$\mathcal{F}$, ordered by~$\preceq$, where $T\approx S$ if $T\preceq S$ and $S\preceq T$.

The following theorem phrases the barrier results in the language of this asymptotic spectrum $\Delta_{\preceq}$. %
\begin{theorem}[Maximum over~$\Delta_{\preceq}$]\label{thm:formal-max}
For all real numbers $a\geq b\geq c\geq0$,
\[
\max_{F\in\Delta_{\preceq}}F(\langle2,1,1\rangle)^a\,F(\langle1,2,1\rangle)^b\,F(\langle1,1,2\rangle)^c=\begin{cases}2^{a+b}&\text{if }c(a+b)\leq ab,\\[2pt]2^{(ab+bc+ca)^2/(4abc)}&\text{if }c(a+b)>ab.\end{cases}
\]
In particular, $\max_{F\in\Delta_{\preceq}}F(\langle2,2,2\rangle)=2^{9/4}$.
\end{theorem}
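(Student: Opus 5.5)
We prove the two inequalities separately. For the upper bound (``$\leq$''), the plan is to rerun the spectral-point argument of \cref{sec:spectral} verbatim for an arbitrary $F\in\Delta_{\preceq}$ instead of $\phi\in\Delta_\FF$.

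\emph{Setup.} Set $p_1,p_2,p_3$ to be the logarithms of $F(\polmul_{1,2,2})$, $F(\polmul_{2,1,2})$ and $F(\polmul_{2,2,1})$. Multiplicativity and \ruleref{I3} give $F(\langle n_1,n_2,n_3\rangle)=n_1^{p_2}n_2^{p_3}n_3^{p_1}$; this needs a short powering argument with \ruleref{R1}, \ruleref{I4} and monotonicity in $n$ to pass from powers of $2$ to all $n$.

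\emph{Superadditivity.} Next we reprove \cref{lem:superadd-general} for $F$. Here \fruleref{D2} and \fruleref{D1} replace \cref{prop:sep} and \cref{prop:polmul-rank}. The type-class projection is replaced by \ruleref{R1} together with the distributivity and reordering isomorphisms \ruleref{I2}, \ruleref{I4}.

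\emph{Profiles and the quadratic bound.} Then \fruleref{D3}, \fruleref{D4}, \ruleref{R2} and \fruleref{D1} give the analogue of \cref{lem:profiles} for $F$ and all leg renumberings. The renumbering built into \cref{def:formal-relation} plays the role of $\phi_\pi$. \cref{lem:paired-growth} is purely about real functions, so the proof of \cref{prop:angular} goes through unchanged and yields $4(p_1x_2x_3+p_2x_1x_3+p_3x_1x_2)\leq(x_1+x_2+x_3)^2$. \cref{lem:rect-bound} then bounds $ap_2+bp_3+cp_1$ by the right-hand side. We also need $p_i\in[0,1]$. This follows from $\langle1\rangle\preceq\polmul_{1,2,2}\preceq\langle2\rangle$, using \ruleref{R2} and \fruleref{D1}, together with the fact that positivity of $F$ forces nonnegative exponents.

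\emph{Lower bound and attainment.} For ``$\geq$'' we use the functions $F^\star_\theta$. By construction they are positive, additive, multiplicative and satisfy $F^\star_\theta(\langle1\rangle)=1$, and \cref{prop:combinations} shows they are $\preceq$-monotone, so $F^\star_\theta\in\Delta_{\preceq}$. The proof of \cref{cor:new-rules} computes their value on the objective as $2^{ap_2+bp_3+cp_1}$.
\begin{itemize}
\item If $c(a+b)>ab$, the $\theta$ of \cref{lem:sharp-parameters} attains $(ab+bc+ca)^2/(4abc)$ exactly.
\item If $c(a+b)\leq ab$, we do not take a limit in $\theta$, since the statement asserts a maximum. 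Instead we exhibit an explicit element of $\Delta_{\preceq}$ with $p=(0,1,1)$: the formal analogue of the flattening rank $\flatrank{1}$, namely the evaluation of the first flattening rank of the represented tensor. It lies in $\Delta_{\preceq}$ because $\preceq$ implies $\degen$ and $\flatrank{1}$ is a spectral point. Its value on the objective is $2^{a+b}$.
\end{itemize}
The case $a=b=c=1$ gives $2^{9/4}$.

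\emph{Main obstacle.} The main difficulty is the superadditivity step: the tensor-power and type-class argument must be redone inside the formal preorder. Every step used there, namely restriction to a sub-partial-direct-sum, distributivity, and grouping isomorphic summands into $U\otimes\polmul_{1,r,r}$, has to be checked to be generated by \ruleref{R1}, \ruleref{I2} and \ruleref{I4}. A further subtlety is that isomorphic summands need to be literally equal expressions up to \ruleref{I1}/\ruleref{I2} reordering. We would handle this by noting that a reordered tensor product is $\approx$-equivalent by commutativity and associativity of $\otimes$.
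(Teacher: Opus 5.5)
Your proposal is correct and is essentially the paper's own (sketched) proof: you rerun \cref{lem:iso-summands}, \cref{lem:superadd-general}, \cref{lem:profiles}, \cref{prop:angular} and \cref{lem:rect-bound} using only the defining properties of $\Delta_{\preceq}$, and you attain the bound with $F^\star_\theta$ for the $\theta$ of \cref{lem:sharp-parameters} when $c(a+b)>ab$ and with $\flatrank{1}$ otherwise. Your ``main obstacle'' is simpler than you fear: the \fruleref{D2}/\fruleref{D1} argument only needs the summands of one type to have equal $F$-values, which multiplicativity gives directly, and you should not group them via \ruleref{I4} instead, because $\preceq$ is only assumed compatible with $\oplus$ and $\otimes$, not with $\oplusX$, so $\approx$-equivalent summands may not be substituted inside a partial direct sum.
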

\begin{proof}
We sketch the proof.
The central idea is that in the proof of \cref{thm:rect}, the properties of spectral points that are used are exactly the ones in the definition of $\Delta_\preceq$. This establishes the upper bound (we provide a bit more detail below).

These upper bounds are subsequently attained by~$F^\star_\theta$ with the parameters of \cref{lem:sharp-parameters} (in the second case) and by the flattening rank $\flatrank{1}$ (in the first case).
We obtain equality because $F^\star_\theta \in \Delta_{\preceq}$ (by its definition and \cref{prop:combinations}) and 
$\flatrank{1} \in \Delta_{\preceq}$ (because every relation in~$\preceq$ is a degeneration).

We now provide some more details on the upper bound.
Let $F\in\Delta_{\preceq}$. 
The proof of \cref{prop:angular} uses several properties of spectral points concerning their evaluations at the tensors~$\polmul_{x,y,z}$, and~$F$ also satisfy these properties. 
First, $F(\polmul_{1,m,m})=m^p$ with $p=\log F(\polmul_{1,2,2})\in[0,1]$ for all $m \geq 1$, which is the case because $m\mapsto F(\polmul_{1,m,m})$ is multiplicative by~\ruleref{I3}, nondecreasing by~\ruleref{R2} and at most~$m$ by~\fruleref{D1}. 
Second, $F(\polmul_{x,y,z})\geq F(\langle1\rangle)=1$, which follows from~\ruleref{R2}. 

Finally, the properties in \cref{lem:profiles}, which follow from \fruleref{D1}--\fruleref{D4}, \ruleref{R} and~\ruleref{I} by the proofs of \cref{lem:iso-summands}, \cref{lem:superadd-general} and \cref{lem:profiles}. 
Zooming in on the proof of \cref{lem:iso-summands}, we see that it uses that the summands are isomorphic only to compute the value of their direct sum; for~$F$, additivity gives $F(U_1\oplus\cdots\oplus U_r)=r\,F(U_1)$ whenever $F(U_1)=\cdots=F(U_r)$, so that proof applies to summands with equal values of~$F$. In the proof of \cref{lem:superadd-general}, the summands of one type have equal values of~$F$, since~$F$ is multiplicative. 

These are all the properties that are used in proof of \cref{prop:angular} and \cref{lem:rect-bound}, and hence also in the proof of \cref{thm:rect}. We conclude the upper bound applies.
\end{proof}

Let $\Delta_{\preceq}^{\!\mathsf{m}}\coloneqq\bigl\{\bigl(F(\langle1,1,2\rangle),F(\langle2,1,1\rangle),F(\langle1,2,1\rangle)\bigr):F\in\Delta_{\preceq}\bigr\}.$
Then $\log\Deltam\subseteq\log\Delta_{\preceq}^{\!\mathsf{m}}\subseteq Q$, for the set~$Q$ of~\eqref{eq:Q}, since every relation in~$\preceq$ is a degeneration and since the proof of \cref{thm:formal-max} shows the inequality of \cref{prop:angular} for every $F\in\Delta_{\preceq}$. The functions~$F^\star_\theta$ give the points $(p_1,p_2,p_3)$, since $F^\star_\theta(\langle1,1,2\rangle)=2^{p_1}$, $F^\star_\theta(\langle2,1,1\rangle)=2^{p_2}$ and $F^\star_\theta(\langle1,2,1\rangle)=2^{p_3}$. At these points, the inequality defining~$Q$ is an equality for $x=(1-\theta_1,1-\theta_2,1-\theta_3)$, so they lie on the curved part of the boundary of~$Q$ (\cref{fig:spectrum}, right). So \cref{thm:formal-max} says that $aw_2+bw_3+cw_1$ has the same maximum on $\log\Delta_{\preceq}^{\!\mathsf{m}}$ as on~$Q$.

\begin{remark}
In fact, $\log\Delta_{\preceq}^{\!\mathsf{m}}=Q\cap\{w\in\RR^3:w_1+w_2\geq1,\ w_1+w_3\geq1,\ w_2+w_3\geq1\}$. In particular, $\log\Delta_{\preceq}^{\!\mathsf{m}}$ is convex and contains the part of~$Q$ with $w_1+w_2+w_3\geq2$ shown in \cref{fig:spectrum}. We omit the proof.
\end{remark}

\paragraph{Concurrent work.} OpenAI released a proof that $\omega_\CC\leq9/4$, with a formal proof in Lean, on 6~October 2026~\cite{OAI:Matrix-Multiplication-Nine-Fourths-October-2-2026}. Since then, Sela Navot, Kunal Marwaha and we each released a Lean proof that $\omega_\FF\leq9/4$ for every field~$\FF$ (7~October; \href{https://github.com/selanavot/matrix-multiplication-all-fields}{\nolinkurl{selanavot/matrix-multiplication-all-fields}} on GitHub, \href{https://prove2.me/theorems/f2b5622a-3bca-4b8f-92a1-381d10af16e3}{\nolinkurl{mme_omega_le_225}} on Prove2Me, and \href{https://github.com/jzuiddam/omega-nine-quarters-all-fields}{\nolinkurl{jzuiddam/omega-nine-quarters-all-fields}} on GitHub). Lü released a proof that $\omega_\FF\leq9/4$ for every field~$\FF$~\cite{lü2026polynomialseparationsharedtensor} (12~October). Uzna\'nski released proofs of the bounds of \cref{cor:rect-k}, and then of those of \cref{thm:rect}, in characteristic zero~\cite{uznanski2026rectangularmatrixmultiplicationpolynomial} (7 and 11~October), and, assuming Strassen's spectral theorem, a Lean proof of them for every field (12~October). Zhou and Wang released a Lean proof that the bounds of \cref{cor:rect-k} hold over every associative ring (14~October; on GitHub, \href{https://github.com/Hugin-Hilbert/rectangular-matmul-rings}{\nolinkurl{Hugin-Hilbert/rectangular-matmul-rings}}). The authors of two GitHub repositories, \href{https://github.com/eliottcassidy2000/math}{\nolinkurl{eliottcassidy2000/math}} and \href{https://github.com/acprk/openai-math-followups}{\nolinkurl{acprk/openai-math-followups}}, released proofs that, for square matrices, the kind of analysis we do in Section 4 is optimal, using a function similar to~$F^\star$ of~\eqref{eq:phi-94} %
(7 and 10~October); 
our operational barrier \cref{thm:barrier} is stronger as it applies to any combination of the degeneration ingredients, and applies to all rectangular shapes. 

\paragraph{AI disclosure.} We have used LLMs to understand the OpenAI proofs and find proofs for their generalisations to all fields and all rectangular shapes. Based on these outlines, the writing and exposition choices are ours, in particular the results of the operational barrier.

\paragraph{Acknowledgements.} JZ was supported by NWO Vidi grant VI.Vidi.243.195. JZ, AA and MvdB were supported by ERC Starting Grant 101220349 SPECTRA. 
MvdB was also supported by  
ERC Starting Grant 101040907 SYMOPTIC, Dutch National Growth Fund (NGF), as part of the Quantum Delta NL visitor programme,  the Deutsche Forschungsgemeinschaft (DFG, German Research Foundation, 556164098), and the Quantum Software Consortium.
This work used the Dutch national e-infrastructure with the support of the SURF Cooperative using grant no.~EINF-16525. This work used computing resources of the Open Computing Facility at the University of California, Berkeley.

\appendix
\crefalias{section}{appendix}

\newcommand{\etalchar}[1]{$^{#1}$}

\end{document}